\providecommand{\tabularnewline}{\\}
\providecommand{\algorithmname}{Algorithm}
\theoremstyle{plain}
\newtheorem{thm}{\protect\theoremname}
\theoremstyle{definition}
\newtheorem{defn}[thm]{\protect\definitionname}
\theoremstyle{plain}
\newtheorem{lem}[thm]{\protect\lemmaname}
\providecommand{\U}[1]{\protect\rule{.1in}{.1in}}
\newtheorem{assumption}{Assumption}\theoremstyle{definition}
\providecommand{\definitionname}{Definition}
\providecommand{\lemmaname}{Lemma}
\providecommand{\theoremname}{Theorem}
\begin{document}
\title{Reinforcement Learning in High-frequency Market Making}

\author{Yuheng Zheng\thanks{The author is grateful to Yacine A\"{\i}t-Sahalia for encouraging work on this topic and for the helpful suggestions provided thereafter.} \and Zihan Ding}
\date{Princeton University \\
\texttt{\{yuheng, zihand\}@princeton.edu}
}
\maketitle
\begin{abstract}
This paper establishes a new and comprehensive theoretical analysis for the application of reinforcement learning (RL) in high-frequency market making. We bridge the modern RL theory and the continuous-time statistical models in high-frequency financial economics. Different with most existing literature on methodological research about developing various RL methods for market making problem, our work is a pilot to provide the theoretical analysis. We target the effects of sampling frequency, and find an interesting tradeoff between error and complexity of RL algorithm when tweaking the values of the time increment $\Delta$ --- as $\Delta$ becomes smaller, the error will be smaller but the complexity will be larger. We also study the two-player case under the general-sum game framework and establish the convergence of Nash equilibrium to the continuous-time game equilibrium as $\Delta\rightarrow0$. The Nash Q-learning algorithm, which is an online multi-agent RL method, is applied to solve the equilibrium. Our theories are not only useful for practitioners to choose the sampling frequency, but also very general and applicable to other high-frequency financial decision making problems, e.g., optimal executions, as long as the time-discretization of a continuous-time markov decision process is adopted. Monte Carlo simulation evidence support all of our theories.

\textbf{Keywords}: Reinforcement learning, high-frequency trading, market making,
time-discretization, sample complexity, general-sum
game, Nash equilibrium
\end{abstract}
\tableofcontents{}

\section{Introduction}

Market making refers to the process where a trader, called a market
maker, posts quotes on both sides of the limit order book (LOB) to
provide liquidity and generate profit. The main sources of market
maker's profit come from capturing the bid-ask spread in the LOB,
and meanwhile, they need to avoid holding undesirably large positions
to control the inventory risk. It is natural to model the LOB dynamics
using stochastic processes, and formulate the market making problem
as a stochastic control problem since the target is clearly to maximize
the market maker's expected risk-adjust return.

The market making model used in our paper is based on the classical
inventory control framework (see, e.g., \citet{AmiMen1980HFT_MM},
\citet{HoStoll1981HFT_MM}, \citet{AveSto2008HFT_HJB}, \citet{GCFT2013HFT_MM_HJB_approx},
and \citet{cartea2014HFT_MM_HJB}). Under this framework, the key
differences among existing work are on the statistical models of LOB
dynamics they adopt. For instance, the seminal work \citet{AveSto2008HFT_HJB}
modeled the price by the Brownian motions and used the controlled
Poisson processes to model the market order flows, where the action
variable is the quoted price that influences the Poisson rates of
market orders; \citet{GCFT2013HFT_MM_HJB_approx} generalized the
model of \citet{AveSto2008HFT_HJB} by adding the drift term and market
impact term in the price dynamics, and made new contributions on the
closed-form approximation of the solution; \citet{cartea2014HFT_MM_HJB}
introduced predictable $\alpha$ in the dynamics of price and modeled
the market orders by multifactor mutually exciting processes to capture
the feedback effects.

The aforementioned papers studied the optimal quoting strategy of
a single market maker. In practice, the price competitions among different
market markers are also of interest. \citet{kyle1984market_game,kyle1985ECTA_market_game,kyle1989RES_market_game}
introduced informed traders, noisy traders and market makers under
the game-theoretical setup, and studied the price competition by explicitly
calculating the equilibrium. \citet{luo2021HJB_market_making_game}
considered multiple market makers who are symmetric to each other
and have incomplete information, and incorporated the inventory risks
in the competitive market making model. Recently, \citet{yacMeh2023HFTspeed}
studied the equilibrium between the high-frequency market maker, who
have both speed and informational advantages, with the low frequency
trader, and established lots of testable economic implications.

Besides various choices of market making models, an even more important
problem is how to solve the optimal strategies. A large amount of
previous work, including all of the aforementioned work, adopt the
classical Hamilton-Jacobi-Bellman (HJB) equation approach, which could
provide closed-form approximate or even exact solutions under some
settings. Though the HJB approach is attractive since it can possibly
provide analytical solutions and yield lots of insights, it requires
that the model of market dynamics is known, which is difficult in
reality.

Modern financial markets are increasingly electronic, and this electronification
has led to the emergence of big data. Recently, the data-driven machine
learning (ML) approaches become more and more popular in finance (see,
e.g, \citet{XiuKelly2023ML_finance_survey} for surveys), and lots
of works have applied ML to market making problem. Reinforcement learning
(RL, see, e.g., \citet{suttonBarto2018RLbook}), as a ML technique
to solve stochastic control problem, is a natural way to study market
making since the market making problem is essentially a stochastic
control problem. In general, most RL algorithms are designed to find
the optimal policies of the Markov decision process (MDP), which is
a discrete-time stochastic control problem, but RL can also solve
continuous-time problem after suitable time-discretizations of the
model. Compared with the HJB approach, the advantages of RL are that
it does not require the knowledge of the underlying model and is able
to learn the optimal policy directly from the data. Lots of work has
studied the applications of RL on market making (see, e.g., \citet{hamblyXuYang2023RL_finance_review}
and \citet{gavsperov2021RLMM_review} for surveys). Most of existing
works are about methodological research and focus on applying fancy
RL algorithms to various market data, however, there is a noticeable
lack of study on the theoretical analysis for applying RL algorithm
to market making.

Our paper is a pilot to provide a theoretical analysis for the application
of RL to high-frequency market making. Our focus is the effects of
the sampling frequency on the RL algorithm. We assume that the market
maker interacts with the LOB on the discrete-time grid $\{i\Delta\}_{i=0,1,2,...}$
and learns the policy using the standard RL algorithm, e.g., Q-learning.
Then we are concerned that, how does the sampling frequency $1/\Delta$
affect the optimal market making strategies and the performance of
RL algorithm? Does a smaller $\Delta$, which results in a higher
frequency, always lead to better learning results? To answer these
questions, we construct a family of MDPs indexed by $\Delta$ to capture
the effects of different frequencies, and then analyze the properties
of RL algorithm under these MDPs. Our work is also related with high-frequency
financial econometrics (see, e.g., \citet{yacjacodHFBOOK}), where
the key target is the statistical estimators constructed by the discrete
data of continuous-time processes, but our focus is different and
is the learning algorithm in the decision making problem.

In our high-frequency market making setup, we define statistical metrics
to characterize the error and the complexity of RL algorithm, and
surprisingly, we find an interesting tradeoff between error and complexity
when tweaking the values of $\Delta$: As $\Delta$ becomes smaller,
the error will be smaller but the complexity will be larger. These
two metrics of RL algorithm are of both theoretical and realistic
importances. As we will rigorously define later, the error measures
the accuracy of our estimation of the expected profit as the RL algorithm
iterates, while the complexity measures the transaction costs in some
way because every iteration of RL algorithm is actually a quote of
market maker. In particular, our theoretical analysis is very general
and is applicable to any discretized continuous-time MDP, and thus
can be applied to study other high-frequency financial decision making
problems, e.g., optimal executions. For practitioners, our results
suggest that the choice of sampling frequency should be paid attention
and depend on which aspect of the algorithm is given priority to.

Besides the contribution to single-agent case, we establish, under
the game-theoretical framework where two market makers have price
competitions, the convergence for Nash equilibrium of the discretized
model as $\Delta\rightarrow0$, and show that the limiting equilibrium
point is identical to the Nash equilibrium of the continuous-time
game under uniqueness assumption. We apply the Nash Q-learning algorithm
(see, e.g., \citet{JMLR2003Nash_Q_learning}), which is a multi-agent
RL algorithm, to solve the equilibrium for the discretized model.
Our results not only provide insight from theoretical sides, but also
offer an efficient method to obtain an approximation of equilibrium
for continuous-time game. The applications of RL in the competitive
market making model has been studied in recent years (see, e.g., \citet{ganesh2019RL_multiMM_game,xiong2021MM_collusion,ardon2021RL_multiMM_simulator,cont2022RL_MM_multi_agent,cartea2022MM_game_tick_size,cartea2022QF_MM_game,han2022make_take_fee_RL,wang2023robust_RL_multi,vadori2024multiRL_OTC_simulation}).
To name a few examples, \citet{xiong2021MM_collusion,cont2022RL_MM_multi_agent}
applied the deep RL approach to study the stochastic game among different
market makers, where the competition is modeled as a Nash equilibrium
and the tacit collusion is described in terms of Pareto optima. \citet{ardon2021RL_multiMM_simulator}
applied multi-agent RL to establish market simulators and built a
new framework where the makers and takers learn simultaneously to
optimize their objectives. \citet{wang2023robust_RL_multi} modeled
the market maker and the adversary in a zero-sum game and applied
the adversarial RL to learn a strategy that is robust in different
adversarial environments. The market making model in \citet{cont2022RL_MM_multi_agent}
is similar to ours and their RL algorithm requires the knowledge of
the transition probability model, while the Nash Q-learning algorithm
we use is model-free and our focus is the convergence of discretized
model to the continuous-time model, which make our paper distinct
with theirs.

The rest of our paper is organized as follows. Section \ref{sec:continuous-time HFMM model}
sets up the high-frequency market making model. Section \ref{sec:discrete-time HFMM and converge}
contructs the time-discretization of the continuous-time model in
Section \ref{sec:continuous-time HFMM model}, and shows the convergence
of the discretized model as the time increment $\Delta$ goes to zero.
Section \ref{sec:Q-learning sample cplx} provides an upper bound
for the sample complexity of Q-learning under the discretized model,
and discusses the tradeoff between error and complexity in details.
Section \ref{sec:two-player case} is dedicated to the game theoretical
setup where two market makers compete with each other. Section \ref{sec:Numerical results}
conducts the numerical studies to validate our theories. Section \ref{sec:Conclusion}
concludes. All proofs are in the Appendix.

\section{High-frequency Market Making\label{sec:continuous-time HFMM model}}

In this section, we introduce our high-frequency market making model.
Before going into the details, we make a general description of our
model and explain the motivations. Our model is based on the continuous-time
MDP in \citet{AveSto2008HFT_HJB} (itself following partly \citet{HoStoll1981HFMM}),
where the market order flow is modeled by a controlled Poisson process.
The differences are that, first, we model the price dynamics using
a controlled continuous-time Markov chain on a finite space, and second,
following \citet{GCFT2013HFT_MM_HJB_approx}, we put a bound $N_{Y}$
to the inventory that the market maker is allowed to have. One common
reason for these two constraints is that, they make the state variable
of our MDP be on finite space so that it is convenient to analyze
the complexity of RL algorithm later. Besides this technical reason,
they are also economically sensible. In real LOB, the quoted prices
are on a discrete set where the distance between two adjacent price
levels is the tick size, and for a single security, the prices which
have active quotes and executions are usually within a finite range.
Thus, it is natural to assume that the prices are on a equidistant
finite space (see, e.g., \citet{ContStoikovTalreja10}). The bound
on inventory is a realistic restriction to control the inventory risk
caused by undesirable movements of price.

Denote by $\mathcal{M}_{0}:=(\mathcal{S},\mathcal{A},\mathcal{P},R,\gamma)$
the continuous-time MDP, where $\mathcal{S}$ is the state space,
$\mathcal{A}$ is the action space, $\mathcal{P}$ is the transition
probability kernel, $R$ is the reward function, and $\gamma$ is
the discounted factor. We describe these components as follows.

\paragraph{The state space $\mathcal{S}:=\mathcal{S}_{X}\times\mathcal{S}_{Y}$
and the state variable $S_{t}:=(X_{t},Y_{t})$.}

Denote by $\delta_{P}$ the tick size. Suppose that the limit orders
are quoted on the discrete grid $\mathcal{S}_{P}$ defined as
\[
\mathcal{S}_{P}:=\{0,\delta_{P},2\delta_{P},...,(N_{P}-1)\delta_{P},N_{P}\delta_{P}\},
\]
where $N_{P}\in\mathbb{N}$ is a fixed positive integer. Denote by
$X_{t}$ the mid-price. Since the best bid price and the best ask
price take values in $\mathcal{S}_{P}$, the mid-price which equals
their average should take values in the state space $\mathcal{S}_{X}$
defined as
\[
\mathcal{S}_{X}:=\left\{ \frac{k}{2}\delta_{P}\text{ }|\text{ }k=1,2,...,(2N_{P}-1)\right\} .
\]
Denote by $Y_{i}$ the signed number of inventory held by the agent.
Suppose that the state space $\mathcal{S}_{Y}$ of $Y_{t}$ is given
by
\[
\mathcal{S}_{Y}:=\{-N_{Y},-(N_{Y}-1),...,-2,-1,0,1,2,...,(N_{Y}-1),N_{Y}\},
\]
where $N_{Y}\in\mathbb{N}$ is a fixed positive integer. The state
variable of the MDP is $S_{t}:=(X_{t},Y_{t})$, i.e., the mid-price
of the asset and the inventory holding. The state space $\mathcal{S}:=\mathcal{S}_{X}\times\mathcal{S}_{Y}$
is finite and $|\mathcal{S}|=|\mathcal{S}_{X}||\mathcal{S}_{Y}|=(2N_{P}-1)(2N_{Y}+1)$.

\paragraph{The action space $\mathcal{A}:=\mathcal{S}_{P}\times\mathcal{S}_{P}$
and the action variable $a_{t}:=(p_{t}^{a},p_{t}^{b})$.}

At time $t$, the action variable are the quoted prices $p_{t}^{a}$
and $p_{t}^{b}$ for the limit sell order and limit buy order, respectively,
submitted by the agent. Suppose that the volume of every sell order
and every buy order is always one unit of the asset, and the agent
obey the following restrictions.
\begin{equation}
\begin{tabular}{|c|ccc|ccc|}
\hline  State variable \ensuremath{(X_{t},Y_{t})}  &   & \multicolumn{4}{c}{Available actions \ensuremath{a_{t}=(p_{t}^{a},p_{t}^{b})} } &  \\
\hline   &   &  \text{Can she put a sell order?}  &   &   &  \text{Can she put a buy order?}  &  \\
\hline  if \ensuremath{Y_{t}=-N_{Y}}  &   &  No  &   &   &  \text{Yes, at any price level }\ensuremath{p_{t}^{b}<X_{t}}  &  \\
\hline  if \ensuremath{|Y_{t}|\leq N_{Y}-1}  &   &  \text{Yes, at any price level }\ensuremath{p_{t}^{a}>X_{t}}  &   &   &  \text{Yes, at any price level }\ensuremath{p_{t}^{b}<X_{t}}  &  \\
\hline  if \ensuremath{Y_{t}=N_{Y}}  &   &  \text{Yes, at any price level }\ensuremath{p_{t}^{a}>X_{t}}  &   &   &  No  &  
\\\hline \end{tabular}\label{table:quote rule}
\end{equation}
Here, the quoted prices $p_{i}^{a}$ and $p_{i}^{b}$ must take values
in the set $\mathcal{S}_{P}=\{k\delta_{P}$ $|$ $k=0,1,...,N_{P}\}$.

\paragraph{The transition probability kernel $\mathcal{P}$, the reward function
$R$, and the value function $V_{0}^{\pi}(s)$.}

The mid-price $X_{t}$ is a continuous-time Markov chain on state
space $\mathcal{S}_{X}$ with the transition rate matrix (a.k.a. Q-matrix)
$Q_{X}(a)$ defined as
\begin{equation}
Q_{X}(a):=\left[\begin{array}{ccccc}
-\lambda_{1,2}(a) & \lambda_{1,2}(a) & 0 & 0 & 0\\
\lambda_{2,1}(a) & -(\lambda_{2,1}(a)+\lambda_{2,3}(a)) & \lambda_{2,3}(a) & 0 & 0\\
0 & ... & ... & ... & 0\\
0 & 0 & ... & ... & ...\\
0 & 0 & 0 & \lambda_{|\mathcal{S}_{X}|,|\mathcal{S}_{X}|-1}(a) & -\lambda_{|\mathcal{S}_{X}|,|\mathcal{S}_{X}|-1}(a)
\end{array}\right],\label{X Q-matrix continuous}
\end{equation}
where the dependence of $\lambda_{k,l}(a)$ on the action variable
$a$ models the market impacts of the limit orders from the agents.
We propose the following assumption on the rate function $\lambda_{k,l}$.

\begin{assumption}\label{Assump_Q_rate_lambda}

For the mid-price, the transition rate functions $\lambda_{k,l}(a)$
satisfy that, there exists a constant $C_{\lambda}>0$ such that $0<\lambda_{k,l}(a)<C_{\lambda}$
for any $k,l$ and $a\in\mathcal{A}$.

\end{assumption}

Denote by $N_{t}^{a}$ (resp. $N_{t}^{b}$) a controlled Poisson process
with intensity $\lambda(|p_{t}^{a}-X_{t}|)$ (resp. $\lambda(|p_{t}^{b}-X_{t}|)$),
which models the coming flow of the market buy (resp. sell) order
at time $t$. The function $\lambda$ should be a monotonically decreasing
function, and following \citet{AveSto2008HFT_HJB}, our choice is
$\lambda(d):=\alpha\exp(-\kappa d)$ with the parameters $\alpha,\kappa>0$.
As mentioned in \citet{AveSto2008HFT_HJB}, this function form is
motivated from stylized facts of LOB, and the parameters $\alpha$
and $\kappa$ characterize statistically the liquidity of the security.

The inventory value $Y_{t}$ satisfies 
\[
Y_{t}=-N_{t}^{a}+N_{t}^{b},
\]
and the initial values are given by $Y_{t}=N_{t}^{a}=N_{t}^{b}=0$.
The value function to be maximized is given by
\[
V_{0}^{\pi}(s):=E\left[\left.\int_{0}^{+\infty}e^{-\gamma t}dR_{t}(S_{t},a_{t})\right\vert S_{0}=s\right],
\]
where $\gamma$ is the discounted factor, and the dynamics of the
running reward $R_{t}(S_{t},a_{t})$ is given by 
\begin{align*}
dR_{t}(S_{t},a_{t}) & =\underset{\text{profit from the sell order}}{\underbrace{1_{\{Y_{t}>-N_{Y}\}}\cdot(p_{t}^{a}-X_{t}-c)dN_{t}^{a}}}+\underset{\text{profit from the buy order}}{\underbrace{1_{\{Y_{t}<N_{Y}\}}\cdot(X_{t}-p_{t}^{b}-c)dN_{t}^{b}}}\\
 & +\underset{\text{change of inventory value}}{\underbrace{Y_{t}dX_{t}}}-\underset{\text{penalty of inventory holding}}{\underbrace{\psi(Y_{t})dt}},
\end{align*}
Here, $c>0$ is a constant representing the transaction cost, the
function $\psi$ is given by $\psi(y):=\phi y^{2}$ and the parameter
$\phi$ measures the risk averse level to the inventory risk. The
formulation of the reward $R_{t}(S_{t},a_{t})$ implies that, whenever
the agent is allowed to put a sell or buy order, she will always put
this order, and her previous orders that have not been executed will
not influence the execution probability of her current orders.

\section{Time-discretization and Convergence\label{sec:discrete-time HFMM and converge}}

In this section, we first introduce the discrete-time market making
model, which is a discrete-time approximation of the continuous-time
model in Section \ref{sec:continuous-time HFMM model}, and then we
show this discretized model is reasonable by proving a set of convergence
results. The discrete-time market making model we introduce here will
play a key role in studying the effects of sampling frequency later.

\subsection{Discrete-time model\label{sec:discrete-time HFMM model}}

In reality, the market maker can only interact with the market in
discrete-time grid. Thus, it is natural to study the discrete-time
approximation of the continuous-time MDP introduced in Section \ref{sec:continuous-time HFMM model}.
For any time increment $\Delta>0$, we consider the discrete-time
MDP $\mathcal{M}_{\Delta}:=(\mathcal{S},\mathcal{A},\mathcal{P}_{\Delta},R_{\Delta},e^{-\gamma\Delta})$
defined as follows. Here, $\mathcal{S}$ and $\mathcal{A}$ are the
state space and the action space, respectively, and they are the same
with those for $\mathcal{M}_{0}=(\mathcal{S},\mathcal{A},\mathcal{P},R,\gamma)$;
$\mathcal{P}_{\Delta}$, $R_{\Delta}$, and $e^{-\gamma\Delta}$ are
the transition probability kernel, reward function, and discounted
factor, respectively.

The state variable is given by $S_{i}:=(X_{i},Y_{i})$, where $X_{i}$
is the mid-price and $Y_{i}$ is the signed number of inventory held
by the agent at time $t=i\Delta$. The action variable $a_{i}=(p_{i}^{a},p_{i}^{b})$
is the quoted prices at time $t=i\Delta$. We assume that the agent
follows the same rules as we described in Table (\ref{table:quote rule})
based on the mid-price $X_{i}$ and the inventory value $Y_{i}$.
Then, suppose that the mid-price $X_{i}$ is a Markov chain on state
space $\mathcal{S}_{X}$, and the law of $X_{i+1}$ at the $(i+1)$th
stage is characterized by the transition probability matrix $P_{X}(\Delta|a)$
defined as
\begingroup
\small
\begin{equation}
P_{X}(\Delta|a):=\left[\begin{array}{ccccc}
1-\lambda_{1,2}(a)\Delta & \lambda_{1,2}(a)\Delta & 0 & 0 & 0\\
\lambda_{2,1}(a)\Delta & 1-(\lambda_{2,1}(a)+\lambda_{2,3}(a))\Delta & \lambda_{2,3}(a)\Delta & 0 & 0\\
0 & ... & ... & ... & 0\\
0 & 0 & ... & ... & ...\\
0 & 0 & 0 & \lambda_{|\mathcal{S}_{X}|,|\mathcal{S}_{X}|-1}(a)\Delta & 1-\lambda_{|\mathcal{S}_{X}|,|\mathcal{S}_{X}|-1}(a)\Delta
\end{array}\right],\label{X transition prob matrix discrete}
\end{equation}
\endgroup
where $\lambda_{k,l}(a)$ are the same as those of Q-matrix $Q_{X}(a)$
in (\ref{X Q-matrix continuous}). Suppose that, conditioning on $(S_{i},a_{i})$,
$n_{i}^{a}$ and $n_{i}^{b}$ are Bernouli random variables
\begin{align}
P(n_{i}^{a} & =1|S_{i},a_{i})=1-P(n_{i}^{a}=0|S_{i},a_{i})=\lambda(|p_{i}^{a}-X_{i}|)\Delta,\label{ask fill prob discrete}\\
P(n_{i}^{b} & =1|S_{i},a_{i})=1-P(n_{i}^{b}=0|S_{i},a_{i})=\lambda(|p_{i}^{b}-X_{i}|)\Delta,\label{bid fill prob discrete}
\end{align}
where $\lambda(d)=\alpha\exp(-\kappa d)$ is the same as that in the
intensity function of Poisson processes $N_{t}^{a}$ and $N_{t}^{b}$
in the continuous-time MDP $\mathcal{M}_{0}$, and $P(n_{i}^{a}=1|S_{i},a_{i})$
(resp. $P(n_{i}^{b}=1|S_{i},a_{i})$) is the probability that the
sell (resp. buy) limit order is executed. Then the inventory holding
$Y_{i+1}$ at the time $(i+1)\Delta$ is given by
\[
Y_{i+1}=Y_{i}-n_{i}^{a}+n_{i}^{b}.
\]

The value function to be maximized is given by
\[
V_{\Delta}^{\pi}(s):=E\left[\left.\sum_{i=0}^{+\infty}e^{-i\gamma\Delta}R_{\Delta}(S_{i},a_{i})\right\vert S_{0}=s\right],
\]
where, $e^{-\gamma\Delta}$ is the discounted factor, and the reward
function $R_{\Delta}(S_{i},a_{i})$ is given by
\begin{align*}
R_{\Delta}(S_{i},a_{i}) & :=\underset{\text{profit from the sell order}}{\underbrace{(p_{i}^{a}-X_{i}-c)n_{i}^{a}\cdot1_{\{Y_{i}>-N_{Y}\}}}}+\underset{\text{profit from the buy order}}{\underbrace{(X_{i}-p_{i}^{b}-c)n_{i}^{b}\cdot1_{\{Y_{i}<N_{Y}\}}}}\\
 & +\underset{\text{change of inventory value}}{\underbrace{(X_{i+1}-X_{i})Y_{i}}}-\underset{\text{penalty of inventory holding}}{\underbrace{\psi(Y_{i})\Delta}}.
\end{align*}
Here, $c>0$ is the transaction cost and $\psi(y):=\phi y^{2}$ is
the same as that appearing in the continuous-time reward function.
The $i$th stage of the MDP $\mathcal{M}_{\Delta}$ models what happens
at the physical time $t=i\Delta$. The formulation of the reward $R_{\Delta}(S_{i},a_{i})$
implies that, whenever the agent is allowed to put a sell or buy order,
she will always put this order, and her previous orders that have
not been executed will not influence the execution of her current
orders. All interactions between the agent and the market happen on
the discrete-time grids $\{i\Delta\}_{i=0,1,2,\ldots}$.

Admittedly, the MDP $\mathcal{M}_{\Delta}$ is not the exact time-discretization
of MDP $\mathcal{M}_{0}$. However, the above formulation is not only
sensible owing to the convergence results as shown in the next section,
but it is also convenient for our theoretical analysis later.

\subsection{Convergence of discrete-time MDP $\mathcal{M}_{\Delta}$ as $\Delta\rightarrow0$}

In this section, we show the convergence of the discrete-time MDP
$\mathcal{M}_{\Delta}$ to the continuous-time MDP $\mathcal{M}_{0}$
as $\Delta\rightarrow0$ in a suitable sense. As a preparation, we
define some relevant notions. For the continuous-time MDP $\mathcal{M}_{0}$,
following Section III.9 in \citet{FlemingSoner2006book_MDP}, we consider
the set $\mathcal{U}_{0}$ of admissible policies satisfying that
the action $a_{t}$ taken at time $t$ is $\mathcal{F}_{t}$-measurable.
In our model, $\mathcal{F}_{t}$ is the $\sigma$-algebra generated
by the history $\{S_{t_{1}},a_{t_{1}}\}_{0\leq t_{1}\leq t}$. The
optimal value function for $\mathcal{M}_{0}$ is defined as $V_{0}^{*}(s)=\sup_{\pi\in\mathcal{U}_{0}}V_{0}^{\pi}(s)$.
For the discrete-time MDP $\mathcal{M}_{\Delta}$, following Section
1.7 in \citet{gihSkorohod2012controlled}, we consider the set $\mathcal{U}_{\Delta}$
of admissible policies satisfying that the action $a_{i}$ taken at
time $i$ is $\mathcal{F}_{i}$-measurable. In our model, $\mathcal{F}_{i}$
is the $\sigma$-algebra generated by the history $\{S_{k}\}_{k=0,1,\ldots,i-1,i}$.
The optimal value function for $\mathcal{M}_{\Delta}$ is defined
as $V_{\Delta}^{*}(s)=\sup_{\pi\in\mathcal{U}_{\Delta}}V_{\Delta}^{\pi}(s)$.
Following \citet{kaku1971CTMDP_Bellman} and Section 1.7 in \citet{gihSkorohod2012controlled},
for the MDP $\mathcal{M}_{0}$ (resp. $\mathcal{M}_{\Delta}$), if
a policy can be expressed as $a_{t}=\pi(S_{t})$ (resp. $a_{i}=\pi(S_{i})$)
where $\pi:\mathcal{S}\rightarrow\mathcal{A}$ is a deterministic
mapping, then we say this policy is a stationary Markov policy and
we denote this policy as $\pi(\cdot)$.
\begin{thm}
\label{Thm:value discretize error}Under Assumption \ref{Assump_Q_rate_lambda},
there exist stationary Markov policies $\pi_{0}^{*}(\cdot)$ and $\pi_{\Delta}^{*}(\cdot)$,
such that the optimal value functions in the continuous-time MDP $\mathcal{M}_{0}$
and the discrete-time MDP $\mathcal{M}_{\Delta}$ are attained under
$\pi_{0}^{*}(\cdot)$ and $\pi_{\Delta}^{*}(\cdot)$, respectively,
i.e.,
\[
V_{0}^{*}(s)=V_{0}^{\pi_{0}^{*}}(s)\qquad\text{and}\qquad V_{\Delta}^{*}(s)=V_{\Delta}^{\pi_{\Delta}^{*}}(s).
\]
Moreover, assuming the uniqueness of the optimal policies in $\mathcal{M}_{0}$
and $\mathcal{M}_{\Delta}$, then we have that, there exists $\Delta_{0}$
such that, for any $\Delta\in(0,\Delta_{0})$, it holds:

(i) The policies $\pi_{\Delta}^{*}(\cdot)$ and $\pi_{0}^{*}(\cdot)$
are the identical, i.e., $\pi_{\Delta}^{*}(s)=\pi_{0}^{*}(s)$ for
any $s\in\mathcal{S}$.

(ii) The bound for the optimal value functions is given by
\[
||V_{\Delta}^{*}(\cdot)-V_{0}^{*}(\cdot)||\leq C_{V}\Delta,
\]
where the norm is defined as $||V_{\Delta}^{*}(\cdot)-V_{0}^{*}(\cdot)||:=\max_{s\in\mathcal{S}}|V_{\Delta}^{*}(s)-V_{0}^{*}(s)|$,
and $C_{V}>0$ is a constant that does not depend on $\Delta$.
\end{thm}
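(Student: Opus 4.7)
The plan is to combine classical MDP existence results with a perturbation argument for the Bellman fixed point, followed by a finite-action gap argument for the policy identity. \textbf{Step 1 (existence).} Both $\mathcal{S}$ and $\mathcal{A}$ are finite; Assumption \ref{Assump_Q_rate_lambda} yields uniformly bounded transition rates; and both $R_\Delta(s,a)$ and its continuous-time counterpart $r_{\mathrm{cont}}(s,a)$, obtained by taking predictable compensators of the Poisson and Markov-chain jumps driving $dR_t$, are uniformly bounded. Standard finite-state discounted MDP theory (e.g.\ Puterman) yields a unique bounded $V_\Delta^*$ with stationary maximizer $\pi_\Delta^*(\cdot)$, and the analogous continuous-time theory (Fleming--Soner Ch.\ III; Guo--Hernandez-Lerma) gives a unique bounded $V_0^*$ solving the HJB equation $\gamma V_0^*(s) = \max_{a}\{r_{\mathrm{cont}}(s,a) + \mathcal{L}^a V_0^*(s)\}$, attained by a stationary Markov policy $\pi_0^*(\cdot)$, where $\mathcal{L}^a$ is the infinitesimal generator of the joint process $(X_t, Y_t)$ under action $a$.

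\textbf{Step 2 ($O(\Delta)$ bound on values).} The key observation is that $V_0^*$ is an $O(\Delta^2)$-approximate fixed point of the discrete Bellman operator $T_\Delta$. A direct calculation gives the exact identity $E[R_\Delta(s,a)] = \Delta \cdot r_{\mathrm{cont}}(s,a)$, since the Bernoulli fill probabilities and the one-step $X$-drift all scale linearly in $\Delta$. Conditional independence of $X_{i+1}$ and $(n^a_i, n^b_i)$ given $(S_i, a_i)$, combined with the $O(\Delta^2)$ probability of any two-jump event, gives $P_\Delta(\cdot\mid a) = I + \Delta\, Q_{\mathrm{joint}}(a) + O(\Delta^2)$ uniformly in $(s,a)$ by Assumption \ref{Assump_Q_rate_lambda}. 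Using $e^{-\gamma\Delta} = 1 - \gamma\Delta + O(\Delta^2)$ and the HJB equation,
\[
(T_\Delta V_0^*)(s) = V_0^*(s) + \Delta \max_{a}\bigl\{r_{\mathrm{cont}}(s,a) + \mathcal{L}^a V_0^*(s) - \gamma V_0^*(s)\bigr\} + O(\Delta^2) = V_0^*(s) + O(\Delta^2).
\]
Since $T_\Delta$ is a sup-norm contraction with modulus $e^{-\gamma\Delta}$ and $V_\Delta^* = T_\Delta V_\Delta^*$, the triangle inequality yields $(1 - e^{-\gamma\Delta})\|V_\Delta^* - V_0^*\|_\infty \leq \|T_\Delta V_0^* - V_0^*\|_\infty = O(\Delta^2)$; since $1 - e^{-\gamma\Delta} \sim \gamma\Delta$, part (ii) follows with a constant $C_V$ depending only on the universal bounds.

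\textbf{Step 3 (policy identity).} Under uniqueness, finiteness of $\mathcal{A}$ produces a strictly positive HJB gap $\epsilon_0 := \min_{s\in\mathcal{S}}\bigl[F(s,\pi_0^*(s); V_0^*) - \max_{a\neq\pi_0^*(s)} F(s,a; V_0^*)\bigr] > 0$, where $F(s,a;V) := r_{\mathrm{cont}}(s,a) + \mathcal{L}^a V(s)$. Reapplying the expansion from Step 2 with $V_\Delta^*$ in place of $V_0^*$ and invoking $\|V_\Delta^* - V_0^*\|_\infty = O(\Delta)$ shows that the $a$-dependent part of $(T_\Delta V_\Delta^*)(s)$ equals $\Delta\, F(s,a; V_0^*) + O(\Delta^2)$ uniformly in $a$; for all $\Delta$ below some $\Delta_0$ depending only on $\epsilon_0$ and the universal $O(\Delta^2)$ constant, the argmax is preserved, giving $\pi_\Delta^*(s) = \pi_0^*(s)$ and proving (i).

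\textbf{Main obstacle.} The principal technical delicacy is establishing the $O(\Delta^2)$ remainder in the expansion of $T_\Delta V - V$ \emph{uniformly} in $(s,a)$. This requires uniform bounds on the joint generator entries, the reward rate, and $\|V_0^*\|_\infty$ (and later $\|V_\Delta^*\|_\infty$), all of which follow from Assumption \ref{Assump_Q_rate_lambda} and the finiteness of $\mathcal{S},\mathcal{A}$. A secondary concern is the verification step ensuring that HJB/Bellman solutions actually attain the supremum over the full admissible classes $\mathcal{U}_0$ and $\mathcal{U}_\Delta$, not merely over stationary Markov policies; this is standard on finite state spaces via a measurable-selection argument.
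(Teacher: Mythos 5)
Your proof is correct, and two of its three steps track the paper's argument closely: the existence step invokes the same classical sources, and your policy-identity step is the quantitative version of what the paper does (the paper passes to the normalized quantity $\frac{1}{\Delta}(Q_{\Delta}^{*}(s,a)-e^{-\gamma\Delta}V_{\Delta}^{*}(s))$, shows it converges to the HJB operand $F(s,a;V_{0}^{*})$, and concludes from finiteness of $\mathcal{A}$ and uniqueness of the maximizer; your explicit gap $\epsilon_{0}$ makes the threshold $\Delta_{0}$ concrete). Where you genuinely diverge is the value bound. The paper proceeds in three stages: it first proves the qualitative convergence $V_{\Delta}^{*}\rightarrow V_{0}^{*}$ by a compactness/subsequence argument (uniform boundedness of $V_{\Delta}^{*}$, extraction of a convergent subsequence, passage to the limit in the rescaled Bellman equation, uniqueness of the solution to the HJB equation); it then establishes the policy identity; and only then does it obtain the $O(\Delta)$ rate by subtracting the two Bellman equations evaluated at the now-common optimal policy and absorbing the $O(\Delta^{2})$ discrepancies in the one-step kernel and reward. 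Your route --- exhibiting $V_{0}^{*}$ as an $O(\Delta^{2})$-approximate fixed point of $T_{\Delta}$ and invoking the $e^{-\gamma\Delta}$-contraction --- collapses the paper's Steps 1 and 3 into a single inequality, needs no subsequence extraction, and, importantly, does not require the policy identity as an input (the maximum is handled by $|\max_{a}f(a)-\max_{a}g(a)|\leq\max_{a}|f(a)-g(a)|$), which is what allows you to reverse the logical order and deduce (i) from (ii) rather than the other way around. What each approach buys: yours is shorter and delivers the rate directly; the paper's subsequence argument is the one that generalizes when no rate is available (it is reused essentially verbatim for the two-player Theorem \ref{Thm:2player Nash converge}, where only qualitative convergence is claimed). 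Both arguments rest on the same two ingredients --- the uniform $O(\Delta^{2})$ remainder in the expansion of the kernel and reward over the finite set $\mathcal{S}\times\mathcal{A}$ under Assumption \ref{Assump_Q_rate_lambda}, which you correctly flag as the main technical point, and the reduction writing $E[R_{\Delta}(s,a)]$ via compensators as $\Delta f(s,a)$ (exact here, since the fill probabilities and the price transition matrix are exactly linear in $\Delta$), which the paper isolates in the final subsection of its appendix.
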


Theorem \ref{Thm:value discretize error} shows the convergence of
approximated discrete-time MDP $\mathcal{M}_{\Delta}$ to the continuous-time
MDP $\mathcal{M}_{0}$, in terms of both the optimal policy and optimal
value functions. An interesting interpretation of (i) is that, since
the optimal policies are mappings from the finite state space $\mathcal{S}$
to the finite action space $\mathcal{A}$, we can exactly recover
the optimal strategy of continuous-time MDP using discrete-time approximated
MDP as long as $\Delta$ is smaller than a threshold.

Compared with the existing literature on the convergence of discrete-time
approximation of MDP (see, e.g., \citet{BenRobin1982discretizedMDPconverge}),
one of the advances of our results is that we explicitly give an $O(\Delta)$
upper bound for the distance between the optimal value functions of
$\mathcal{M}_{\Delta}$ and $\mathcal{M}_{0}$. To the best of our
knowledge, no existing result has established similar results.

\section{Sample Complexity\label{sec:Q-learning sample cplx}}
\subsection{Q-learning for single-player case}
In this section, we study the complexity of RL algorithm under the
discrete-time MDP $\mathcal{M}_{\Delta}$. Let us briefly review the
relevant concepts. For any policy $\pi$, the Q function under policy
$\pi$ is defined as
\[
Q_{\Delta}^{\pi}(s,a):=E[R_{\Delta}(S_{i},a_{i})+e^{-\gamma\Delta}V_{\Delta}^{\pi}(S_{i+1})|S_{i}=s,a=a_{i}],
\]
where $V_{\Delta}^{\pi}(s)$ is the value function under policy $\pi$,
and $R_{\Delta}(s,a)$ is the stochastic reward function defined in
Section \ref{sec:discrete-time HFMM model}. The optimal Q function
is given by $Q_{\Delta}^{*}(s,a)=Q_{\Delta}^{\pi_{\Delta}^{*}}(s,a)$,
which is the Q function under the optimal policy $\pi_{\Delta}^{*}(\cdot)$.

The RL algorithm we will study is the Q-learning algorithm (see, e.g.,
Section 6.5 in \citet{suttonBarto2018RLbook}), which is one of the
most popular algorithms. To make our paper self-contained, we explain
how Q-learning works in our case as follows. In our model, since the
state-action space $\mathcal{S}\times\mathcal{A}$ is finite, the
Q-learning algorithm is running in a tabular case. Also, Denote by
$Q_{\Delta}^{(n)}(s,a)$ the Q function learned at the $n$th iteration,
and denote by $V_{\Delta}^{(n)}(s)=\max_{a\in\mathcal{A}}Q_{\Delta}^{(n)}(s,a)$
the optimal value function learned at the $n$th iteration. The update
in each iteration of Q-learning is given by
\[
Q_{\Delta}^{(n+1)}(s,a)=Q_{\Delta}^{(n)}(s,a)+\beta^{(n)}(s,a)(R_{\Delta}(s,a)+e^{-\gamma\Delta}\max_{a\in\mathcal{A}}Q_{\Delta}^{(n)}(s^{\prime},a)-Q_{\Delta}^{(n)}(s,a)),
\]
where $(s,a)$ is the state-action pair at the $n$th iteration, $s^{\prime}$
is the state at the $(n+1)$th iteration and $s^{\prime}$ is the
state reached from the state $s$ after taking action $a$, $R_{\Delta}(s,a)$
is the stochastic reward function defined in Section \ref{sec:discrete-time HFMM model}.
Following \citet{JMLR2003Qcomplexity}, we use a polynomial learning
rate $\beta^{(n)}(s,a)$ which is given by $\beta^{(n)}(s,a)=(N(s,a,n))^{-\omega}$,
where $\omega\in(\frac{1}{2},1)$ and $N(s,a,n)$ is the one plus
the number of times, until the $n$th iteration, that we visited the
state-action pair $(s,a)$. The initial value of the Q table is set
as $Q_{\Delta}^{(0)}(s,a)=C_{0}$ for some constant $C_{0}>0$.

For the exploration method used in Q-learning, we adopt the $\varepsilon$-greedy
method (see, e.g., \citet{li2012RLcomplexity}), which is one of the
most popular methods. In our case, the $\varepsilon$-greedy method
works as follows: at the $n$th iteration, when the current state
is $s$, we take the greedy policy $a^{*}=\text{arg}\max_{a\in\mathcal{A}}Q_{\Delta}^{(n)}(s,a)$
with probability $1-\varepsilon^{(n)}(s)$, otherwise, randomly take
an action $a\in\mathcal{A}$ with equal probabilities for each action.
Here, $\{\varepsilon^{(n)}(s)\}$ is the sequence of $\varepsilon$
and adapts to the state variable $s$. For the convenience of theoretical
analysis, the sequence of $\varepsilon$ we use decays to a nonzero
small value $\varepsilon_{0}$ rather than $0$. The details of $\{\varepsilon^{(n)}(s)\}$
are discussed in Section \ref{sec:Numerical results}.

The complexity measure we use is the sample complexity (see, e.g.,
\citet{li2012RLcomplexity}), which captures the exploration efficiency
of RL algorithm. Following \citet{JMLR2003Qcomplexity}, we define
the sample complexity as the number of iteration steps $n$ such that,
with probability at least $1-\delta$, it holds $||V_{\Delta}^{(n)}(\cdot)-V_{\Delta}^{*}(\cdot)||\leq\varepsilon_{V}$,
where the norm is the same as that in Theorem \ref{Thm:sample cplx bound Delta}
and is defined as $||V_{\Delta}^{(n)}(\cdot)-V_{\Delta}^{*}(\cdot)||:=\max_{s\in\mathcal{S}}|V_{\Delta}^{*}(s)-V_{0}^{*}(s)|$.
In other words, with high probability, we need at most $n$ iterations
to get an estimation of optimal value function within the accuracy
level $\varepsilon_{V}$. Our definition of sample complexity is slightly
different with that of \citet{JMLR2003Qcomplexity} in that, we use
the value function error while \citet{JMLR2003Qcomplexity} used the
Q function error.
\begin{thm}
\label{Thm:sample cplx bound Delta}Under Assumption \ref{Assump_Q_rate_lambda},
for the sample complexity of the Q-learning algorithm with $\varepsilon$-greedy
exploration in the discrete-time MDP $\mathcal{M}_{\Delta}$, we have
that, for any sufficiently small time increment $\Delta$ and error
level $\varepsilon_{V}$, with probability at least $1-\delta$, it
holds $||V_{\Delta}^{(n)}(\cdot)-V_{\Delta}^{*}(\cdot)||\leq\varepsilon_{V}$
as long as
\begin{align}
n & =\Omega\left(((|\mathcal{S}_{X}|+|\mathcal{S}_{Y}|)|\mathcal{A}|\varepsilon_{0}^{-1})^{3+\frac{1}{\omega}}\varepsilon_{V}^{-\frac{2}{\omega}}\gamma^{-\frac{4}{\omega}}\Delta^{6-\frac{2}{\omega}-(|\mathcal{S}_{X}|+|\mathcal{S}_{Y}|)(3+\frac{1}{\omega})}\right)\nonumber \\
 & +\Omega\left(((|\mathcal{S}_{X}|+|\mathcal{S}_{Y}|)|\mathcal{A}|\varepsilon_{0}^{-1})^{\frac{1}{1-\omega}}\gamma^{-\frac{1}{1-\omega}}\Delta^{(1-|\mathcal{S}_{X}|-|\mathcal{S}_{Y}|)\frac{1}{1-\omega}}\right),\label{sample cplx bound}
\end{align}
where $\Omega$ suppresses logarithmic factors of $\frac{1}{\Delta}$,
$\frac{1}{\delta}$, $\frac{1}{\varepsilon_{Q}}$, $\frac{1}{\gamma}$,
$|\mathcal{S}|$, and $|\mathcal{A}|$.
\end{thm}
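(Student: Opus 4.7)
The plan is to apply the polynomial-learning-rate sample-complexity bound for Q-learning of \citet{JMLR2003Qcomplexity} to the discretized MDP $\mathcal{M}_\Delta$. Their result, specialized to our setting, states that with probability at least $1-\delta$ the inequality $\|Q_\Delta^{(n)}-Q_\Delta^*\|\leq\varepsilon$ holds once $n$ exceeds a sum of two terms of the form $L^{3+1/\omega}V_{\max}^{2/\omega}(1-\gamma)^{-4/\omega}\varepsilon^{-2/\omega}$ and $L^{1/(1-\omega)}(1-\gamma)^{-1/(1-\omega)}$, modulo logarithms, where $L$ is the covering time of the $\varepsilon$-greedy exploration policy. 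Since $V_\Delta^{(n)}(s)=\max_a Q_\Delta^{(n)}(s,a)$ and analogously for $V_\Delta^*$, the elementary inequality $\|V_\Delta^{(n)}-V_\Delta^*\|\leq\|Q_\Delta^{(n)}-Q_\Delta^*\|$ reduces the target $\varepsilon_V$-bound on the value function to instantiating EDM with $\varepsilon=\varepsilon_V$. The $\Delta$-dependent ingredients to feed in are then the effective contraction gap $1-e^{-\gamma\Delta}=\Theta(\gamma\Delta)$, the bound $V_{\max}\leq R_{\max}/(1-e^{-\gamma\Delta})=O((\gamma\Delta)^{-1})$ obtained from $|R_\Delta|=O(1)$ (boundedness of prices, inventory, and transaction cost), and the covering time $L$.

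The key step, and the main obstacle, is the covering-time estimate
\[
L=\tilde{O}\bigl((|\mathcal{S}_X|+|\mathcal{S}_Y|)|\mathcal{A}|\varepsilon_0^{-1}\Delta^{-(|\mathcal{S}_X|+|\mathcal{S}_Y|)}\bigr).
\]
Action coverage is cheap: because $\varepsilon^{(n)}(s)\geq\varepsilon_0$ by construction, every action is selected with probability at least $\varepsilon_0/|\mathcal{A}|$ at every step. State coverage is more delicate and rests on the sparse structure of $P_X(\Delta|a)$ in (\ref{X transition prob matrix discrete}), the inventory recursion $Y_{i+1}=Y_i-n_i^a+n_i^b$, and Assumption \ref{Assump_Q_rate_lambda}: both the mid-price and the inventory move by at most $\pm 1$ per step, and every nontrivial one-step transition has probability lower- and upper-bounded by constant multiples of $\Delta$. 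Hence the chain can be steered from any $s$ to any prescribed target $s'$ within $D=O(|\mathcal{S}_X|+|\mathcal{S}_Y|)$ steps with probability at least $c\,\Delta^{D}$. A geometric-trials/coupon-collector argument modeled on the covering-time lemma of \citet{JMLR2003Qcomplexity}, combined with a union bound over the $|\mathcal{S}||\mathcal{A}|$ state-action pairs, then yields the claimed $L$ with high probability. Some care is needed near the inventory boundaries, where certain actions are forbidden by Table (\ref{table:quote rule}), but this only affects constants.

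The final step is bookkeeping on exponents. Substituting $L$, $V_{\max}$, and $1-e^{-\gamma\Delta}=\Theta(\gamma\Delta)$ into the two EDM summands and raising $L$ to the power $3+1/\omega$ produces the factor $((|\mathcal{S}_X|+|\mathcal{S}_Y|)|\mathcal{A}|\varepsilon_0^{-1})^{3+1/\omega}\Delta^{-(|\mathcal{S}_X|+|\mathcal{S}_Y|)(3+1/\omega)}$; multiplying by $(1-e^{-\gamma\Delta})^{-4/\omega}=\Theta(\gamma^{-4/\omega}\Delta^{-4/\omega})$, $V_{\max}^{2/\omega}$, and $\varepsilon_V^{-2/\omega}$, and collecting the residual powers of $\Delta$, recovers the first line of (\ref{sample cplx bound}). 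The second EDM summand, namely $L^{1/(1-\omega)}(1-e^{-\gamma\Delta})^{-1/(1-\omega)}$, yields the second line by the same substitution. All $\log(1/\Delta)$, $\log(1/\delta)$, $\log V_{\max}$, $\log|\mathcal{S}|$, and $\log|\mathcal{A}|$ factors are absorbed into the $\Omega(\cdot)$ notation, giving the bound exactly as stated.
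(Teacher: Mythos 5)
Your overall strategy is exactly the paper's: invoke the Even--Dar--Mansour polynomial-learning-rate bound for the Q-function error, pass from Q-error to value-function error, substitute the effective discount gap $1-e^{-\gamma\Delta}=\Theta(\gamma\Delta)$ and $V_{\max}=O((\gamma\Delta)^{-1})$, and estimate the covering time of $\varepsilon$-greedy exploration from the $\Theta(\Delta)$ scaling of the nontrivial one-step transition probabilities together with a coupon-collector argument. Your reduction $\|V_{\Delta}^{(n)}-V_{\Delta}^{*}\|\leq\|Q_{\Delta}^{(n)}-Q_{\Delta}^{*}\|$ via $|\max_a Q_1-\max_a Q_2|\leq\max_a|Q_1-Q_2|$ is in fact cleaner than the paper's, which instead argues that for $\varepsilon_Q$ small enough the learned argmax coincides with the true one.

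There is, however, a concrete arithmetic inconsistency in your covering-time step that prevents the final bookkeeping from producing the stated exponents. The relevant hitting probability is over the graph diameter of the state grid, which is $(|\mathcal{S}_X|-1)+(|\mathcal{S}_Y|-1)=|\mathcal{S}_X|+|\mathcal{S}_Y|-2$ steps (each step moving the price and the inventory by at most one level), giving
\[
P\bigl(S_{i+|\mathcal{S}_X|+|\mathcal{S}_Y|-2}=s'\mid S_i=s\bigr)\geq c_S\,\Delta^{|\mathcal{S}_X|+|\mathcal{S}_Y|-2},
\]
and hence $L=\tilde{O}\bigl((|\mathcal{S}_X|+|\mathcal{S}_Y|)|\mathcal{A}|\varepsilon_0^{-1}\Delta^{\,2-(|\mathcal{S}_X|+|\mathcal{S}_Y|)}\bigr)$, not $\Delta^{-(|\mathcal{S}_X|+|\mathcal{S}_Y|)}$ as you wrote. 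Writing $m=|\mathcal{S}_X|+|\mathcal{S}_Y|$, the correct $L$ gives $L^{3+1/\omega}\propto\Delta^{(2-m)(3+1/\omega)}=\Delta^{6+2/\omega-m(3+1/\omega)}$, which combined with the factor $\Delta^{-4/\omega}$ from $V_{\max}^{2/\omega}$ and $(1-e^{-\gamma\Delta})^{-2/\omega}$ yields exactly the exponent $6-\tfrac{2}{\omega}-m(3+\tfrac{1}{\omega})$ in the first line of (\ref{sample cplx bound}), and similarly $\Delta^{(2-m)/(1-\omega)}\cdot\Delta^{-1/(1-\omega)}=\Delta^{(1-m)/(1-\omega)}$ for the second line. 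With your version of $L$ the two terms come out as $\Delta^{-4/\omega-m(3+1/\omega)}$ and $\Delta^{-(m+1)/(1-\omega)}$, which are larger than the stated bounds by the diverging factors $\Delta^{-(6+2/\omega)}$ and $\Delta^{-2/(1-\omega)}$ respectively; so your claim that the substitution "recovers the bound exactly as stated" does not hold as written. The fix is only to track the $-2$ in the diameter, but it is essential to matching the theorem.
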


In the high-frequency market making model setup, the sample complexity
is closely related with the transaction costs, since every iteration
of Q-learning corresponds to a quote of the market maker, which may
require a fee in some exchange. The reward functions of our models
in Sections \ref{sec:continuous-time HFMM model} and \ref{sec:discrete-time HFMM model}
do not include the transaction costs due to the convenience of technical
analysis. Thus, it is of both theoretical and realistic importance
to analyze this sample complexity upper bound, so that we can have
a better understanding on how the transaction costs vary with the
sampling frequency.

A notable and desirable result in Theorem \ref{Thm:sample cplx bound Delta}
is that, the sample complexity upper bound is polynomial, rather than
exponential, in all model parameters. In (\ref{sample cplx bound}),
the exponents of $\Delta$ are both negative, since $6-\frac{2}{\omega}-(|\mathcal{S}_{X}|+|\mathcal{S}_{Y}|)(3+\frac{1}{\omega})\leq6-\frac{2}{\omega}-2(3+\frac{1}{\omega})\leq-\frac{4}{\omega}<0$
and $(1-|\mathcal{S}_{X}|-|\mathcal{S}_{Y}|)\frac{1}{1-\omega}\leq-\frac{1}{1-\omega}<0$.
From this upper bound, we have the following interpretation: This
bound is decreasing in $\Delta$ and will grow to infinity as $\Delta$
goes to zero. So, the larger the quoting frequency $1/\Delta$, the
larger the bound of sample complexity. A related work is \citet{bayr2023approxRLdiffusion},
but their model is different with ours and their focus is to develop
learning algorithm for the controlled diffusion process without the
background and applications in finance.

\subsection{Tradeoff between learning error and sample complexity}

Based on the above theoretical results, we now discuss the tradeoff
between error and complexity mentioned in the introduction. The learning
error is characterized by $||V_{\Delta}^{(n)}(\cdot)-V_{0}^{*}(\cdot)||$,
where the groundtruth is the optimal value function $V_{0}^{*}(\cdot)$
of the continuous-time MDP $\mathcal{M}_{0}$ and its estimator is
the value function $V_{\Delta}^{(n)}(\cdot)$ learned at the $n$th
iteration when running the RL algorithm under the discrete-time MDP
$\mathcal{M}_{\Delta}$. The sample complexity is defined as discussed
before Theorem \ref{Thm:sample cplx bound Delta}. Combining the results
in Theorems \ref{Thm:value discretize error} and \ref{Thm:sample cplx bound Delta},
we have that, when the iteration number $n$ satisfies (\ref{sample cplx bound}),
the learning error can be bounded by
\[
||V_{\Delta}^{(n)}(\cdot)-V_{0}^{*}(\cdot)||\leq||V_{\Delta}^{(n)}(\cdot)-V_{\Delta}^{*}(\cdot)||+||V_{\Delta}^{*}(\cdot)-V_{0}^{*}(\cdot)||\leq\varepsilon_{V}+C_{V}\Delta.
\]
Regarding the upper bound $\varepsilon_{V}+C_{V}\Delta$ as a metric
of the learning error and the upper bound in (\ref{sample cplx bound})
as a metric of the sample complexity, we have the tradeoff as indicated
by the curves in Figure \ref{figure: error complexity tradeoff} when
the sampling frequency $1/\Delta$ changes and other parameters are
fixed.
\begin{figure}[htbp]
\centering%
\begin{tabular}{c}
\includegraphics[scale=0.8]{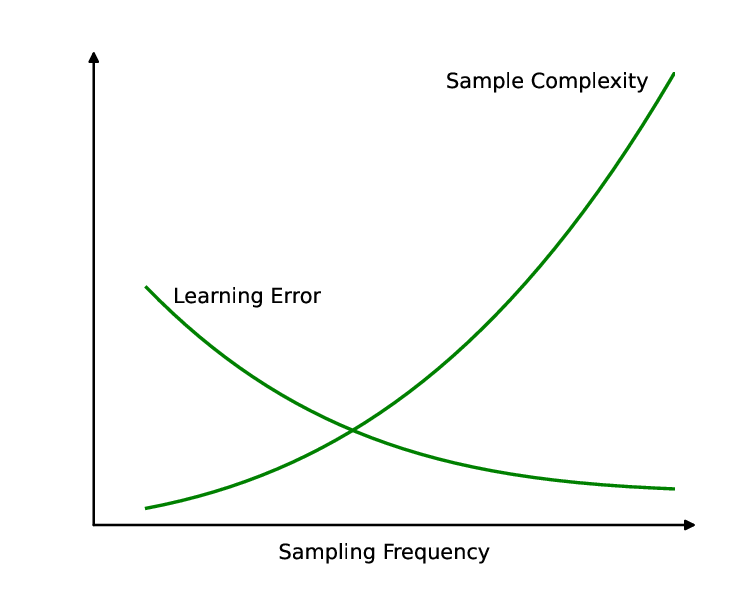}\tabularnewline
\end{tabular}\caption{The tradeoff between learning error and sample complexity\label{figure: error complexity tradeoff}}
\end{figure}

As the sampling frequency $1/\Delta$ increases, i.e., the time increment
$\Delta$ decreases, the learning error goes down while the sample
complexity goes up. For practitioners, our results suggest that, the
choice of sampling frequency should depend on which aspect of the
RL algorithm is primarily concerned. If the accuracy of the estimates
of expected profit is given the priority to, then a relatively high
sampling frequency is desirable; if the sample complexity that reflects
the transaction costs is given more concerns, then it is better to
use a relatively low sampling frequency.

\section{Two-player General-sum Setting\label{sec:two-player case}}

In this section, we study the price competition of two market makers
in a general-sum game setup which is an extention of the single-player
setup discussed previously. Our model is motivated by the game theoretical
framework in \citet{luo2021HJB_market_making_game} and \citet{cont2022RL_MM_multi_agent}.

\subsection{Continuous-time model and Nash equilibrium\label{sec:continuous MM game}}

Denote by $\text{MM}_{1}$ and $\text{MM}_{2}$ the two market makers.
Assume that they share the same state variable $S_{t}=X_{t}$, which
is the mid-price of the traded asset. Same with the single-player
setup in Section \ref{sec:continuous-time HFMM model}, the state
space of $X_{t}$ is $\mathcal{S}_{X}=\{\frac{k}{2}\delta_{P}\text{ }|\text{ }k=1,2,\ldots,(2N_{P}-1)\}$,
where $\delta_{P}$ is the tick size. At time $t$, the action variable
$a_{t}^{k}$ of $\text{MM}_{k}$ is the quoted prices $a_{t}^{k}=(p_{t}^{a,k},p_{t}^{b,k})$
of her limit sell order and limit buy order, for $k=1,2$. Both of
their limit orders are assumed to have one unit of the asset. The
mid-price $X_{t}$ is a controlled Markov chain, and its transition
rate matrix $Q_{X}(a_{t})$ (a.k.a. Q-matrix) at time $t$ is a function
of $a_{t}:=(a_{t}^{1},a_{t}^{2})\equiv(p_{t}^{a,1},p_{t}^{b,1},p_{t}^{a,2},p_{t}^{b,2})$,
where $Q_{X}$ is given by
\begin{equation}
Q_{X}(a):=\left[\begin{array}{ccccc}
-\lambda_{1,2}(a) & \lambda_{1,2}(a) & 0 & 0 & 0\\
\lambda_{2,1}(a) & -(\lambda_{2,1}(a)+\lambda_{2,3}(a)) & \lambda_{2,3}(a) & 0 & 0\\
0 & \ldots & \ldots & \ldots & 0\\
0 & 0 & \ldots & \ldots & \ldots\\
0 & 0 & 0 & \lambda_{|\mathcal{S}_{X}|,|\mathcal{S}_{X}|-1}(a) & -\lambda_{|\mathcal{S}_{X}|,|\mathcal{S}_{X}|-1}(a)
\end{array}\right],\label{Q matrix 2player}
\end{equation}
for any $a=(a^{1},a^{2})$. Here, the dependence of $Q_{X}(a)$ on
the action variable $a$ captures the price impacts of the limit orders
submitted by the market makers.

Next, we specify the execution probability functions that model the
intensity of the market orders. It is no longer as simple as $\lambda(|p_{t}^{a}-X_{t}|)$
in the single-player setup, because we need to incorporate the competition
between the two market makers now. Denote by $L_{t}^{a,k}=\Gamma^{a,k}(X_{t},p_{t}^{a,1},p_{t}^{a,2})$
(resp. $L_{t}^{b,k}=\Gamma^{b,k}(X_{t},p_{t}^{b,1},p_{t}^{b,2})$)
the Poisson rate of the buy (resp. ask) market order flow that executes
the ask (resp. buy) limit order of the market maker $\text{MM}_{k}$
for $k=1,2$. We set the functions $\Gamma^{a,k}$ and $\Gamma^{b,k}$
as 
\begin{equation}
\Gamma^{a,k}(x,p^{a,1},p^{a,2}):=\frac{\Upsilon^{-}(|p^{a,k}-x|)}{\Upsilon^{+}(|p^{a,k}-\min(p^{a,1},p^{a,2})|)}\text{ and }\Gamma^{b,k}(x,p^{b,1},p^{b,2}):=\frac{\Upsilon^{-}(|p^{b,k}-x|)}{\Upsilon^{+}(|p^{b,k}-\max(p^{b,1},p^{b,2})|)},\label{Gamma intensity function}
\end{equation}
where $\min(p^{a,1},p^{a,2})$ (resp. $\max(p^{b,1},p^{b,2})$) is
the best ask (resp. bid) prices from the two market makers. We impose
the following assumptions on $\Upsilon^{-}$ and $\Upsilon^{+}$.

\begin{assumption}\label{Assump_2player_intensity}

For the market order intensity functions $\Gamma^{a,k},\Gamma^{b,k}$,
their building blocks $\Upsilon^{-}$ and $\Upsilon^{+}$ satisfy
that: (i) $\Upsilon^{-}(d)$ (resp. $\Upsilon^{+}(d)$) is monotonically
decreasing (resp. increasing) function of $d$; (ii) There exists
a constant $C_{\Upsilon}>0$ such that $0<\Upsilon^{-}(d)<C_{\Upsilon}$
and $0<\Upsilon^{+}(d)<C_{\Upsilon}$ for any $d\geq0$; (iii) There
exists a constant $c^{+}>0$ such that $\Upsilon^{+}(d)>c^{+}$ for
any $d\geq0$.

\end{assumption}

The economic intuition behind $\Upsilon^{-}$ and $\Upsilon^{+}$
is that, the closer the quoted price is to the mid-price and the best
price, the more likely it is that the limit order will be executed
(see, e.g., \citet{luo2021HJB_market_making_game} and \citet{cont2022RL_MM_multi_agent}
for similar specifications of intensity functions). The lower bound
on $\Upsilon^{+}$ guarantees the uniformly boundness of the intensity
functions $\Gamma^{a,k}$ and $\Gamma^{b,k}$.

Denote by $N_{t}^{a,k}$ (resp. $N_{t}^{b,k}$) the controlled Poisson
process with intensity $L_{t}^{a,k}$ (resp. $L_{t}^{b,k}$), which
models the flow of the market buy (resp. sell) order at time $t$.
The value function of the market maker $\text{MM}_{k}$ is given by
\[
V_{0}^{k,\pi^{1},\pi^{2}}(s):=E\left[\left.\int_{0}^{+\infty}e^{-\gamma t}dR_{t}^{k}(S_{t},a_{t}^{1},a_{t}^{2})\right\vert S_{0}=s\right],
\]
for $k=1,2$, where $\gamma$ is the discounted factor, and the running
reward $R_{t}^{k}(S_{t},a_{t}^{1},a_{t}^{2})$ satisfies
\[
dR_{t}^{k}(S_{t},a_{t}^{1},a_{t}^{2})=\underset{\text{profit from the sell order}}{\underbrace{(p_{t}^{a,k}-X_{t}-c)dN_{t}^{a,k}}}+\underset{\text{profit from the buy order}}{\underbrace{(X_{t}-p_{t}^{b,k}-c)dN_{t}^{b,k}}.}
\]
Here, $c>0$ is the transaction cost. We focus on the price competition
and simplify the model by ignoring the inventory control in the reward
function.

In this two-player game theoretical framework, the target of every
market maker is to maximize the value function of herself. Our setup
belongs to the class of noncooperative stochastic game (e.g., \citet{FilarVrieze2012Book_MDP_game}),
which means that the two players optimize their individual target
and cannot form an enforceable agreement on joint actions. At any
time $t$, the agents choose their actions simultaneously and independently.
A commonly studied optimality condition is the Nash equilibrium (see,
e.g., \citet{nash1951AOMequili}). Before given the definition, we
introduce some relevant notions first. Denote by $\mathcal{P}(\mathcal{A})$
is the space of probability measures on the action space $\mathcal{A}$.
For the continuous-time game $\mathcal{G}_{0}$, following \citet{GuoHernz2005continuousMDPnonzero_sum_game},
we consider the randomized Markov strategies $\Pi_{M}^{1}$, and its
subset $\Pi_{s}^{1}$ called the stationary strategies. Formally,
$\Pi_{M}^{1}$ is defined as the family of strategies satisfying that,
for any $t\geq0$, there exists a mapping $\pi_{t}^{1}:\mathcal{S}\rightarrow\mathcal{P}(\mathcal{A})$,
such that for any $(s,a)\in\mathcal{S}\times\mathcal{A}$, the MM1
takes the action $a$ with probability $\pi_{t}^{1}(a|s)$ at time
$t$ when the state variable is $S_{t}=s$; $\Pi_{s}^{1}$ is the
subset of $\Pi_{M}^{1}$ satisfying that, there exists a mapping $\pi^{1}:\mathcal{S}\rightarrow\mathcal{P}(\mathcal{A})$
such that $\pi_{t}^{1}(a|s)=\pi^{1}(a|s)$ for any $t\geq0$ and $(s,a)\in\mathcal{S}\times\mathcal{A}$.
The strategy sets $\Pi_{M}^{2}$ and $\Pi_{s}^{2}$ for $\text{MM}_{2}$
are defined in the same manner.

For each pair $(\pi^{1},\pi^{2})=\{(\pi_{t}^{1},\pi_{t}^{2})\}_{t\geq0}\in\Pi_{M}^{1}\times\Pi_{M}^{2}$,
the $(i,j)$-th entry of Q-matrix of the controlled Markov chain $X_{t}$
is defined as $q_{t,ij}(\pi^{1},\pi^{2})=\sum_{a^{1}\in\mathcal{A},a^{2}\in\mathcal{A}}Q_{X,ij}(a^{1},a^{2})\pi_{t}^{1}(a^{1}|s)\pi_{t}^{2}(a^{2}|s)$,
where $Q_{X,ij}(a^{1},a^{2})$ is the $(i,j)$-th entry of the Q-matrix
$Q_{X}(a^{1},a^{2})$ defined in (\ref{Q matrix 2player}). To guarantee
the existence of the process $X_{t}$, we restrict the admissible
strategy sets in the classes $\Pi^{1}$ and $\Pi^{2}$ defined as
follows: $\Pi^{1}:=\{\pi^{1}\in\Pi_{M}^{1}:q_{t,ij}(\pi^{1},\pi^{2})\text{ is continuous in }t\text{ for any \ensuremath{i,j}}\text{ and }\pi^{2}\in\Pi_{M}^{2}\}$
and $\Pi^{2}:=\{\pi^{2}\in\Pi_{M}^{2}:q_{t,ij}(\pi^{1},\pi^{2})\text{ is continuous in }t\text{ for any \ensuremath{i,j}}\text{ and }\pi^{1}\in\Pi_{M}^{1}\}$.
By the uniformly boundness of $Q_{X,ij}(a^{1},a^{2})$, we have that
$\Pi_{s}^{k}\subseteq\Pi^{k}\subseteq\Pi_{M}^{k}$ for $k=1,2$.

Following Definition 4.1 in \citet{GuoHernz2005continuousMDPnonzero_sum_game},
we define the optimality condition for the continuous-time game $\mathcal{G}_{0}$
as follows.
\begin{defn}
A pair of strategies $(\pi_{0}^{1,*},\pi_{0}^{2,*})\in\Pi^{1}\times\Pi^{2}$
is called a Nash equilibrium if
\[
V_{0}^{1,\pi_{0}^{1,*},\pi_{0}^{2,*}}(s)\geq V_{0}^{1,\pi^{1},\pi_{0}^{2,*}}(s)\qquad\text{and}\qquad V_{0}^{2,\pi_{0}^{1,*},\pi_{0}^{2,*}}(s)\geq V_{0}^{2,\pi_{0}^{1,*},\pi^{2}}(s),
\]
for any $(\pi^{1},\pi^{2})\in\Pi^{1}\times\Pi^{2}$ and $s\in\mathcal{S}$.

The interpretation is that, when the two players take the strategies
$(\pi_{0}^{1,*},\pi_{0}^{2,*})$ in the Nash equilibrium, none of
them has the motivation to deviate from their strategy unilaterally.
In our market making setup, the Nash equilibrium point means that
none of the market maker wants to further adjust her quoted price
in this price competition with the other.
\end{defn}

\subsection{Time-discretization and convergence of equilibrium\label{sec:discrete MM game}}

In this section, we introduce the discrete-time stochastic game model
$\mathcal{G}_{\Delta}$, which serves as a discrete-time approximation
of the continuous-time model $\mathcal{G}_{0}$ in the last section.
Then, we show that the Nash equilibrium point of the discrete-time
model converges to that of the continuous-time model as the sampling
time increment $\Delta$ goes to zero.

Under $\mathcal{G}_{\Delta}$, the two market makers share the same
state variable $S_{i}=X_{i}$, which is the mid-price of the traded
asset at time $t=i\Delta$. The action variable of the market maker
$\text{MM}_{k}$ is $a_{i}^{k}=(p_{i}^{a,k},p_{i}^{b,k})$ of her
limit sell order and limit buy order at time $t=i\Delta$, for $k=1,2$.
Both of their limit orders are assumed to have one unit of the asset.
The transition probability matrix that characterizes the conditional
distribution of $X_{i+1}$ given $X_{i}$ is set to be $P_{X}(\Delta|a^{1},a^{2})=I-Q_{X}(a^{1},a^{2})\Delta$,
where $I$ is the identity matrix and $Q_{X}(a^{1},a^{2})$ is the
Q-matrix defined in (\ref{Q matrix 2player}) for the continuous-time
model $\mathcal{G}_{0}$. The execution probability for the ask (resp.
buy) limit order of $\text{MM}_{k}$ quoted at time $t=i\Delta$ is
given by $\Gamma^{a,k}(X_{i},p_{i}^{a,1},p_{i}^{a,2})\Delta$ (resp.
$\Gamma^{b,k}(X_{i},p_{i}^{b,1},p_{i}^{b,2})\Delta$) for $k=1,2$,
where $\Gamma^{a,k},\Gamma^{b,k}$ are the intensity functions defined
in (\ref{Gamma intensity function}). In other words, if we denote
by $n_{i}^{a,k}$ (resp. $n_{i}^{b,k}$) the indicator random variable
such that $n_{i}^{a,k}=1$ (resp. $n_{i}^{b,k}=1$) means the ask
(resp. buy) limit order of $\text{MM}_{k}$ is executed, then we have
that
\begin{align}
P(n_{i}^{a,k}=1|S_{i},a_{i}^{1},a_{i}^{2}) & =1-P(n_{i}^{a,k}=0|S_{i},a_{i}^{1},a_{i}^{2})=\Gamma^{a,k}(X_{i},p_{i}^{a,1},p_{i}^{a,2})\Delta,\label{ask fill prob discrete 2player}\\
P(n_{i}^{b,k}=1|S_{i},a_{i}^{1},a_{i}^{2}) & =1-P(n_{i}^{b,k}=0|S_{i},a_{i}^{1},a_{i}^{2})=\Gamma^{b,k}(X_{i},p_{i}^{b,1},p_{i}^{b,2})\Delta.\label{bid fill prob discrete 2player}
\end{align}
The value function of the market maker $\text{MM}_{k}$ is given by
\[
V_{\Delta}^{k,\pi^{1},\pi^{2}}(s):=E\left[\left.\sum_{i=0}^{+\infty}e^{-i\gamma\Delta}R_{\Delta}^{k}(S_{i},a_{i}^{1},a_{i}^{2})\right\vert S_{0}=s\right],
\]
for $k=1,2$, where $e^{-\gamma\Delta}$ is the discounted factor,
and the running reward $R_{\Delta}^{k}(S_{i},a_{i}^{1},a_{i}^{2})$
is given by
\[
R_{\Delta}^{k}(S_{i},a_{i}^{1},a_{i}^{2})=\underset{\text{profit from the sell order}}{\underbrace{(p_{i}^{a,k}-X_{i}-c)n_{i}^{a,k}}}+\underset{\text{profit from the buy order}}{\underbrace{(X_{i}-p_{i}^{b,k}-c)n_{i}^{b,k}}.}
\]
The constant $c>0$ is the transaction cost. The underlying mechanism
of the above time-discretization is the same with that in Section
\ref{sec:discrete-time HFMM model}.

Same with the continuous-time game $\mathcal{G}_{0}$, every market
maker maximizes the value function of herself, and they choose their
actions simultaneously and independently. We will still focus on the
Nash equilibrium. We begin by introducing some relevant notions. Following
Section 4.1 in \citet{FilarVrieze2012Book_MDP_game}, we consider
the behavior strategies $\Pi_{\Delta}^{k}$ and the stationary strategies
$\Pi_{s}^{k}$ defined below. $\Pi_{\Delta}^{k}$ is defined as the
family of strategies satisfying that, for any $i\geq0$, the policy
$\pi_{i}^{k}\in\mathcal{P}(\mathcal{A})$ taken at time $i$ is measurable
w.r.t. the history $(s_{0},a_{0}^{1},a_{0}^{2},\ldots,s_{i-1},a_{i-1}^{1},a_{i-1}^{2},s_{i})$;
$\Pi_{s}^{k}$ is a subset of $\Pi_{\Delta}^{k}$ satisfying that
the policies are Markov and time invariant, i.e., there exists a mapping
$\pi^{k}:\mathcal{S}\rightarrow\mathcal{P}(\mathcal{A})$ such that
for any $i\geq0$, the policy $\pi_{i}^{k}$ taken at time $i$ is
the randomized policy $\pi^{k}\in\Pi_{\Delta}^{k}$. Here, with a
slight abuse of notations, we use the same notation $\Pi_{s}^{k}$
for the stationary strategies in both continuous-time game $\mathcal{G}_{0}$
and discrete-time game $\mathcal{G}_{\Delta}$ since any stationary
strategy can be characterized by a single mapping $\pi^{k}:\mathcal{S}\rightarrow\mathcal{P}(\mathcal{A})$
in both cases.

For any pair of behavior strategies $(\pi^{1},\pi^{2})=\{(\pi_{i}^{1},\pi_{i}^{2})\}_{i\geq0}$,
we follow the notation convention in Section 4.1 in \citet{FilarVrieze2012Book_MDP_game}
for the reward $R_{\Delta}^{k}(s,\pi_{i}^{1},\pi_{i}^{2})=\sum_{a^{1}\in\mathcal{A},a^{2}\in\mathcal{A}}R_{\Delta}^{k}(s,a^{1},a^{2})\pi_{i}^{1}(a^{1}|s)\pi_{i}^{2}(a^{2}|s)$.
Then, following Definition 4.6.1 in \citet{FilarVrieze2012Book_MDP_game},
we define the Nash equilibrium for the discrete-time stochastic game
$\mathcal{G}_{\Delta}$ as follows.
\begin{defn}
A pair of strategies $(\pi_{\Delta}^{1,*},\pi_{\Delta}^{2,*})\in\Pi_{\Delta}^{1}\times\Pi_{\Delta}^{2}$
is called a Nash equilibrium if
\[
V_{\Delta}^{1,\pi_{\Delta}^{1,*},\pi_{\Delta}^{2,*}}(s)\geq V_{\Delta}^{1,\pi^{1},\pi_{\Delta}^{2,*}}(s)\qquad\text{and}\qquad V_{\Delta}^{2,\pi_{\Delta}^{1,*},\pi_{\Delta}^{2,*}}(s)\geq V_{\Delta}^{2,\pi_{\Delta}^{1,*},\pi^{2}}(s),
\]
for any $(\pi^{1},\pi^{2})\in\Pi_{\Delta}^{1}\times\Pi_{\Delta}^{2}$
and $s\in\mathcal{S}$.

In Theorem \ref{Thm:2player Nash converge} below, we show the convergence
of $\mathcal{G}_{\Delta}$ to $\mathcal{G}_{0}$ in a suitable sense
as $\Delta$ goes to zero.
\end{defn}

\begin{thm}
\label{Thm:2player Nash converge}Under Assumptions \ref{Assump_Q_rate_lambda}
and \ref{Assump_2player_intensity}, there exist pairs of stationary
strategies $(\pi_{0}^{1,*},\pi_{0}^{2,*}),(\pi_{\Delta}^{1,*},\pi_{\Delta}^{2,*})\in\Pi_{s}^{1}\times\Pi_{s}^{2}$
such that $(\pi_{0}^{1,*},\pi_{0}^{2,*})$ (resp. $(\pi_{\Delta}^{1,*},\pi_{\Delta}^{2,*})$)
is a Nash equilibrium in the continuous-time (resp. discrete-time)
market making game $\mathcal{G}_{0}$ (resp. $\mathcal{G}_{\Delta}$).
Moreover, assuming the uniqueness of the Nash equilibrium $(\pi_{0}^{1,*},\pi_{0}^{2,*})$
in the continuous-time game $\mathcal{G}_{0}$, then we have the following
convergence results:

(i) The policies satisfy that $||\pi_{\Delta}^{k,*}(\cdot|s)-\pi_{0}^{k,*}(\cdot|s)||\rightarrow0$
as $\Delta\rightarrow0$ for $k=1,2$, where the norm is defined as
$||\pi(\cdot)-\pi^{\prime}(\cdot)||:=\max_{a\in\mathcal{A}}|\pi(a)-\pi^{\prime}(a)|$
for any $\pi,\pi^{\prime}\in\mathcal{P}(\mathcal{A})$.

(ii) The value functions satisfy that $|V_{\Delta}^{k,\pi_{\Delta}^{1,*},\pi_{\Delta}^{2,*}}(s)-V_{0}^{k,\pi_{0}^{1,*},\pi_{0}^{2,*}}(s)|\rightarrow0$
as $\Delta\rightarrow0$ for any $s\in\mathcal{S}$ and $k=1,2$.
\end{thm}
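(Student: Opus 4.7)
The plan is to prove existence first, then argue convergence via a compactness-plus-uniqueness argument underpinned by a uniform value-function approximation lemma.

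For the existence of a stationary Nash equilibrium in the continuous-time game $\mathcal{G}_0$, I would appeal to the general existence theorem of \citet{GuoHernz2005continuousMDPnonzero_sum_game} for continuous-time nonzero-sum Markov games. Their hypotheses (bounded transition rates, bounded reward rates, finite state and action spaces) are all satisfied here by Assumptions~\ref{Assump_Q_rate_lambda} and \ref{Assump_2player_intensity} together with the explicit form of the reward, so a stationary equilibrium $(\pi_0^{1,*},\pi_0^{2,*})\in\Pi_s^1\times\Pi_s^2$ exists. For the discrete-time game $\mathcal{G}_\Delta$, I would invoke the classical existence result for discounted stochastic games (e.g., Fink's theorem as presented in \citet{FilarVrieze2012Book_MDP_game}), whose conditions are again immediate from finiteness and boundedness.

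The crucial technical step is a uniform value-approximation lemma: for any pair of stationary strategies $(\pi^1,\pi^2)\in\Pi_s^1\times\Pi_s^2$ and any $s\in\mathcal{S}$,
\[
\bigl|V_\Delta^{k,\pi^1,\pi^2}(s)-V_0^{k,\pi^1,\pi^2}(s)\bigr|\longrightarrow 0\quad\text{as }\Delta\to 0,
\]
uniformly over the compact set $\Pi_s^1\times\Pi_s^2$. This should follow the same strategy as part (ii) of Theorem~\ref{Thm:value discretize error}: the generator of the two-player controlled chain under a randomized stationary pair is $\sum_{a^1,a^2}Q_X(a^1,a^2)\pi^1(a^1|s)\pi^2(a^2|s)$, which equals $(P_X(\Delta|\cdot)-I)/\Delta$ by construction. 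Combining a standard comparison of discretized versus continuous semigroups with the bounded-reward control yields the uniform rate. Uniformity across the strategy set comes from the fact that all constants depend only on $C_\lambda$, $C_\Upsilon$, $c^+$, and the fixed sizes $|\mathcal{S}|,|\mathcal{A}|$.

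With this lemma in hand, the convergence argument becomes essentially topological. The strategy space $\Pi_s^1\times\Pi_s^2$ is compact (a product of finitely many probability simplices over $\mathcal{A}$). From any sequence $\Delta_n\downarrow 0$, extract a subsequence along which $(\pi_{\Delta_n}^{1,*},\pi_{\Delta_n}^{2,*})\to(\tilde\pi^1,\tilde\pi^2)$. Using the uniform value convergence, pass to the limit in the discrete-time Nash inequalities
\[
V_{\Delta_n}^{1,\pi_{\Delta_n}^{1,*},\pi_{\Delta_n}^{2,*}}(s)\geq V_{\Delta_n}^{1,\pi^1,\pi_{\Delta_n}^{2,*}}(s),
\]
and similarly for MM$_2$, with arbitrary stationary $\pi^1\in\Pi_s^1$. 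This yields the continuous-time Nash inequalities for $(\tilde\pi^1,\tilde\pi^2)$ against every $\pi^k\in\Pi_s^k$; extending this to the larger admissible class $\Pi^1\times\Pi^2$ uses the fact that for discounted objectives with bounded rates the supremum over $\Pi^k$ is attained within $\Pi_s^k$, which is standard under Assumption~\ref{Assump_Q_rate_lambda}. By the uniqueness assumption, $(\tilde\pi^1,\tilde\pi^2)=(\pi_0^{1,*},\pi_0^{2,*})$, so the full sequence converges, proving (i). Part (ii) then follows by writing
\[
\bigl|V_\Delta^{k,\pi_\Delta^{1,*},\pi_\Delta^{2,*}}-V_0^{k,\pi_0^{1,*},\pi_0^{2,*}}\bigr|\leq \bigl|V_\Delta^{k,\pi_\Delta^{1,*},\pi_\Delta^{2,*}}-V_0^{k,\pi_\Delta^{1,*},\pi_\Delta^{2,*}}\bigr|+\bigl|V_0^{k,\pi_\Delta^{1,*},\pi_\Delta^{2,*}}-V_0^{k,\pi_0^{1,*},\pi_0^{2,*}}\bigr|,
\]
where the first term vanishes by the uniform lemma and the second by continuity of $V_0^k$ in the stationary-strategy pair.

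The main obstacle will be the uniform value-approximation lemma, and within it the extension from pure-action Q-matrix bounds to randomized-strategy semigroup comparisons while keeping all constants independent of the policy. A secondary but delicate point is the passage from stationary-strategy Nash inequalities to the broader Markov class $\Pi^k$ in the continuous-time limit, which requires invoking a stationary-optimality result for the best-response dynamic programming problem against a fixed opponent under the finite state-action hypotheses.
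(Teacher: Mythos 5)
Your proposal is correct but takes a genuinely different route from the paper. The paper's proof (Lemma \ref{Lemma:Game converge general}) works entirely at the level of the coupled optimality equations: it characterizes the discrete-time equilibrium via the Shapley--Bellman system of Theorem 4.6.5 in \citet{FilarVrieze2012Book_MDP_game}, divides by $\Delta$ and Taylor-expands $e^{-\gamma\Delta}$ to recover the continuous-time system (\ref{cont V1 game Bellman equation})--(\ref{cont V2 game Bellman equation}) up to $O(\Delta)$ remainders, extracts a convergent subsequence of the tuple $(\pi_{\Delta}^{1,*},\pi_{\Delta}^{2,*},V_{\Delta}^{1,\cdot},V_{\Delta}^{2,\cdot})$ by compactness, passes to the limit in the equations, and identifies the limit through the uniqueness of the policy-evaluation equation (Lemma 7.2 of \citet{GuoHernz2005continuousMDPnonzero_sum_game}) combined with the assumed uniqueness of the continuous-time equilibrium. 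You instead pass to the limit directly in the defining Nash inequalities, which requires two ingredients the paper never isolates: a policy-evaluation approximation $|V_{\Delta}^{k,\pi^{1},\pi^{2}}(s)-V_{0}^{k,\pi^{1},\pi^{2}}(s)|\rightarrow0$ uniform over the stationary-strategy simplices, and continuity of $V_{0}^{k}$ in the stationary pair; both hold here since $V_{0}^{k,\pi^{1},\pi^{2}}=(\gamma I-Q(\pi^{1},\pi^{2}))^{-1}r^{k}(\pi^{1},\pi^{2})$ with $Q$ and $r^{k}$ bilinear in the strategy probabilities and all constants controlled by $C_{\lambda}$, $C_{\Upsilon}$, $c^{+}$, $|\mathcal{S}|$, $|\mathcal{A}|$. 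Your route is more elementary in that it avoids the optimality-equation characterization on the discrete side and yields a quantitative $O(\Delta)$ rate for part (ii) as a byproduct; its cost is exactly the two points you flag, namely the uniformity of the evaluation lemma over randomized pairs and the upgrade of the limiting Nash inequalities from stationary deviations to the full admissible classes $\Pi^{k}$, the latter resting on the standard fact that the best response to a fixed stationary opponent is a single-agent continuous-time MDP whose optimum is attained by a stationary Markov policy (as in Lemma \ref{Lemma:Bellman equations for two MDPs}). Both gaps are fillable with the tools already deployed in Step 3 of Lemma \ref{Lemma:value discretize error general}, so the proposal is sound.
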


The assumption on the uniqueness of Nash equilibrium in the continuous-time
game $\mathcal{G}_{0}$ is necessary. Since nonzero-sum games typically
have multiple Nash equilibriums (or no equilibrium), and unlike the
zero-sum game, different equilibriums may have different value functions
(see, e.g., \citet{Abreu1990ECTA_nonzero_sum_games_discrete} and
\citet{sannikov2007ECTA_nonzero_sum_games_continuous}).

\subsection{Nash Q-learning algorithm for solving the equilibrium}

We now introduce the RL algorithm for the learning of Nash equilibrium.
All the learning is conducted in the discrete-time model $\mathcal{G}_{\Delta}$.
We adopt the Nash Q-learning algorithm in \citet{JMLR2003Nash_Q_learning}.
In the stochastic game $\mathcal{G}_{\Delta}$, under the pair of
strategies $(\pi^{1},\pi^{2})$, the Q function for $\text{MM}_{k}$
is defined as follows,
\[
Q_{\Delta}^{k,\pi^{1},\pi^{2}}(s,a^{1},a^{2}):=E[R_{\Delta}^{k}(S_{i},a_{i}^{1},a_{i}^{2})+e^{-\gamma\Delta}V_{\Delta}^{k,\pi^{1},\pi^{2}}(S_{i+1})|S_{i}=s,a_{i}^{1}=a^{1},a_{i}^{2}=a^{2}],
\]
for $k=1,2$, where $R_{\Delta}^{k}$ is the stochastic reward function
of $\text{MM}_{k}$ defined in the previous section, $V_{\Delta}^{k,\pi^{1},\pi^{2}}$
is the value function of $\text{MM}_{k}$, and $e^{-\gamma\Delta}$
is the discounted factor. Indeed, by Theorem 4.6.5 in \citet{FilarVrieze2012Book_MDP_game},
we have that the Nash equilibrium $(\pi_{\Delta}^{1,*},\pi_{\Delta}^{2,*})\in\Pi_{s}^{1}\times\Pi_{s}^{2}$
in the stationary strategy set satisfies that
\[
(\pi_{\Delta}^{1,*}(\cdot|s),\pi_{\Delta}^{2,*}(\cdot|s))=\mathrm{Nash}(Q_{\Delta}^{1,\pi_{\Delta}^{1,*},\pi_{\Delta}^{2,*}}(s,\cdot,\cdot),Q_{\Delta}^{2,\pi_{\Delta}^{1,*},\pi_{\Delta}^{2,*}}(s,\cdot,\cdot)),
\]
for any $s\in\mathcal{S}$. Here, the operator $\mathrm{Nash}()$
is defined as follows: for any two payoff matrices $Q^{1}(a^{1},a^{2})$
and $Q^{2}(a^{1},a^{2})$ for two players, $\mathrm{Nash}(Q^{1}(\cdot,\cdot),Q^{2}(\cdot,\cdot))$
returns a Nash equilibrium of this two-player static game. Thus, using
the above relations between $(\pi_{\Delta}^{1,*},\pi_{\Delta}^{2,*})$
and $(Q_{\Delta}^{1,\pi_{\Delta}^{1,*},\pi_{\Delta}^{2,*}},Q_{\Delta}^{2,\pi_{\Delta}^{1,*},\pi_{\Delta}^{2,*}})$,
we reduce the problem of learning equilibrium to the learning of the
Q functions at equilibrium. In the Nash Q-learning algorithm, the
role of the operator $\mathrm{Nash}()$ is in analogy with the $\arg\max$
operator in the Q-learning algorithm in the single-agent Q-learning
algorithm. Denote by $(Q_{\Delta}^{1,i},Q_{\Delta}^{2,i})$ the pair
of Q functions learned at the $i$th iteration. The details are shown
below.

\begin{algorithm}[h]
\caption{Nash Q-learning algorithm\label{alg:Nash Q-learning}}

\begin{enumerate}
\item Initialize the two Q functions $Q_{\Delta}^{k,0}(s,a^{1},a^{2})$
of $\text{MM}_{k}$ for $k=1,2$.
\item For $i=0,1,2,\ldots,n-1$:
\item $\qquad$Given $S_{i}=s$, with $\varepsilon^{i}(s)$ probability,
sample random actions $(a^{1},a^{2})$ from $\mathcal{A}\times\mathcal{A}$;
Otherwise, sample the actions $(a^{1},a^{2})$ from the strategy pair
$(\pi_{\Delta}^{1,i}(\cdot|s),\pi_{\Delta}^{2,i}(\cdot|s))=\mathrm{Nash}(Q_{\Delta}^{1,i}(s,\cdot,\cdot),Q_{\Delta}^{2,i}(s,\cdot,\cdot))$.
\item $\qquad$Get new data $(s^{\prime},r^{1},r^{2})$ at the time $t=(i+1)\Delta$
given $(S_{i},a_{i}^{1},a_{i}^{2})=(s,a^{1},a^{2})$, where $S_{i+1}=s^{\prime}$
is the new state variable and $r^{k}$ is the reward for $\text{MM}_{k}$.
\item $\qquad$Update the Q functions by 
\[
Q_{\Delta}^{k,i+1}(s,a^{1},a^{2})=r^{k}+\beta^{i}(s,a^{1},a^{2})(Q_{\Delta}^{k,i}(s^{\prime},\hat{a}^{1},\hat{a}^{2})-Q_{\Delta}^{k,i}(s,a^{1},a^{2}))
\]
where $\beta^{i}(s,a^{1},a^{2})$ is the learning rate and the actions
$(\hat{a}^{1},\hat{a}^{2})$ is sampled from the strategy pair $(\pi_{\Delta}^{1,i}(\cdot|s^{\prime}),\pi_{\Delta}^{2,i}(\cdot|s^{\prime}))=\mathrm{Nash}(Q_{\Delta}^{1,i}(s^{\prime},\cdot,\cdot),Q_{\Delta}^{2,i}(s^{\prime},\cdot,\cdot))$.
$=\text{Nash}(Q^{1}(s^{\prime},\cdot,\cdot),Q^{2}(s^{\prime},\cdot,\cdot))$.
\item The learned equilibrium point is given by $(\pi_{\Delta}^{1,n}(\cdot|s),\pi_{\Delta}^{2,n}(\cdot|s))=\mathrm{Nash}(Q_{\Delta}^{1,n}(s,\cdot,\cdot),Q_{\Delta}^{2,n}(s,\cdot,\cdot))$
for $s\in\mathcal{S}$ at the end of the iteration.
\end{enumerate}
\end{algorithm}

Under suitable assumptions, the theoretical guarantee for the convergence
of the algorithm to the true Nash equilibrium is studied in \citet{JMLR2003Nash_Q_learning}.
In Section \ref{sec:Numerical results}, we study numerically the
learned strategies and discuss the choice of learning rate $\beta^{i}$
and exploration probability $\varepsilon^{i}$ in details.

\section{Numerical Studies\label{sec:Numerical results}}

In this section, we conduct extensive experiments to validate our
results\footnote{All the python codes are available at \href{https://github.com/zyh-pku/Reinforcement-Learning-and-Market-Making}{https://github.com/zyh-pku/Reinforcement-Learning-and-Market-Making}}.

\subsection{One-player case}

To validate our theory and demonstrate the effects of sampling frequency,
we conduct the Monte Carlo simulations in the following setup.
\[
\begin{tabular}{|c|c|c|c|c|c|c|c|c|}
\hline  Parameter  &  \ensuremath{N_{P}}  &  \ensuremath{N_{Y}}  &  \ensuremath{\delta_{P}}  &  \ensuremath{\gamma\ }  &  \ensuremath{\alpha\ }  &  \ensuremath{\kappa\ }  &  \ensuremath{\phi\ }  &  \ensuremath{\mathit{c}}\\
\hline  Value  &  2  &  1  &  1/3  &  0.95  &  10.87  &  2  &  0  &  0 
\\\hline \end{tabular}
\]
Here, as defined in the model setup in Sections \ref{sec:continuous-time HFMM model}
and \ref{sec:discrete-time HFMM model}, $N_{P}$ is the number of
different price levels minus one, $N_{Y}$ is the maximum absolute
number of inventory holding, $\delta_{P}$ is the tick size, $\gamma$
is the discounted factor, $\alpha$ and $\kappa$ determine the intensity
function $\lambda(d)=\alpha\exp(-\kappa d)$ of the Poisson process
for the market order flow, $\phi$ determines the penalty function
$\psi(y):=\phi y^{2}$ for inventory holding, and $c$ is the transaction
cost. The Q-matrix of the continuous-time Markov chain for the mid-price
$X_{t}$ is set as
\[
Q_{X}=\left[\begin{array}{ccc}
-5 & 5 & 0\\
\frac{10}{3} & -\frac{20}{3} & \frac{10}{3}\\
0 & 5 & -5
\end{array}\right].
\]
Though our model accommodates the general case where the entries of
$Q_{X}$ are functions of action variable $a$, for simplicity here
we consider a special case where all entries of $Q_{X}$ are constants.
The interpretation is that, the order size of the market maker is
small so that the market impact can be ignored and the dynamics of
mid-price can be assumed to independent with the action variable.
Under this parameter setup, the price space, the mid-price space,
and the inventory space are respectively given by
\[
\mathcal{S}_{P}:=\{0,\delta_{P},2\delta_{P}\},\text{ }\mathcal{S}_{X}:=\left\{ \frac{1}{2}\delta_{P},\delta_{P},\frac{3}{2}\delta_{P}\right\} ,\text{ and }\mathcal{S}_{Y}:=\{-1,0,1\}.
\]
To give an example of how the above parameter setup for price dynamics
works, we consider the case that the time increment is $\Delta=0.1$.
Then, for the discretized model $\mathcal{M}_{\Delta}$, the transition
probability matrix of mid-price $X_{i}$ is given by $P_{X}(\Delta|a)=I_{|\mathcal{S}_{X}|}-Q_{X}\Delta$.
By calculation we have that, when $X_{i}=\frac{1}{2}\delta_{P}$ (resp.
$\frac{3}{2}\delta_{P}$), $X_{i+1}$ goes up (resp. down) to $\delta_{P}$
with probability $1/2$ and stay the same otherwise; when $X_{i}=\delta_{P}$,
the probabilities that $X_{i+1}$ goes up, goes down, or stay the
same are all equal to $1/3$.

The above parameters and the value of $\Delta$ determine the discrete-time
MDP $\mathcal{M}_{\Delta}$. To validate our theory on the effect
of $\Delta$, we set $\Delta$ according to the decreasing sequence
$\{10^{-1-2k/9}\}_{k=0,1,...,9}$. The optimal value functions $V_{\Delta}^{*}(s)$
and the optimal policies $\pi_{\Delta}^{*}(s)$ of the discrete-time
MDP $\mathcal{M}_{\Delta}$ under different values of $\Delta$ can
be computed using the Bellman equations. We find that for all the
different values of $\Delta$ in our experiment, the optimal policy
$\pi_{\Delta}^{*}(s)$ for the discrete-time MDP $\mathcal{M}_{\Delta}$
are all identical to $\pi^{*}(s)=(p_{a}^{*}(s),p_{b}^{*}(s))$ given
as follows. For the state variable $s=(x,y)\in\mathcal{S}_{X}\times\mathcal{S}_{Y}$,
we have that $p_{a}^{*}(x,y)=2\delta_{P}$ for any $x\in\mathcal{S}_{X}$
and $y=0,1$, $p_{b}^{*}(x,y)=0$ for any $x\in\mathcal{S}_{X}$ and
$y=0,-1$; when $y=-1$ (resp. $1$), ask (resp. buy) order is banned
and thus $p_{a}^{*}(x)$ (resp. $p_{b}^{*}(x)$) is no need to set.
The policy $\pi^{*}(s)$ and the value function $V_{\Delta}^{*}(s)$
with the smallest $\Delta$ in our experiment is an approximate solution
to the optimality equations (\ref{HJB for Value continuous}) that
characterize the continuous-time MDP $\mathcal{M}_{0}$. More precisely,
the numerical results show that $\max_{s\in\mathcal{S}}|\gamma V(s)-\max_{a\in\mathcal{A}}(f(s,a)+\sum_{s^{\prime}\in\mathcal{S},s^{\prime}\neq s}\lambda_{s^{\prime}}(s,a)V(s^{\prime})-\lambda(s,a)V(s))|=0.008$,
where the difference is small enough and the convergence of optimal
and value function in $\mathcal{M}_{\Delta}$ to those in $\mathcal{M}_{0}$
is validated. Since $\pi^{*}(s)$ is a pure strategy, we obtain that
the optimal policy of the continuous-time MDP $\mathcal{M}_{0}$ is
given by $\pi_{0}^{*}(s)=\pi^{*}(s)$, and the optimal value function
$V_{0}^{*}(s)$ is approximated by $V_{\Delta}^{*}(s)$ with the smallest
$\Delta$ in our experiment.

In Figure \ref{figure:V converge vs Delta}, we plot the optimal value
function $V_{\Delta}^{*}(s)$ against $\Delta$. The nine subplots
correspond to $V_{\Delta}^{*}(s)$ since the state space $\mathcal{S}=\mathcal{S}_{X}\times\mathcal{S}_{Y}$
is a $3\times3$-dimensional finite space. In each subplot, the dashed
line represents the optimal value function $V_{0}^{*}(s)$ of the
continuous-time MDP $\mathcal{M}_{0}$ at this state. The dashed line
is the value function $V_{\Delta}^{*}(s)$ under the smallest $\Delta$,
which is an approximation of $V_{0}^{*}(s)$ as discussed previously.
As $\Delta$ goes to zero, the convergence of $V_{\Delta}^{*}(s)$
can be clearly seen from these subplots, since the lines of $V_{\Delta}^{*}(s)$
get closer and closer to the level of $V_{0}^{*}(s)$ as $\Delta$
becomes smaller and smaller. These observations, together with that
for the optimal policies mentioned above, demonstrate our results
in Theorem \ref{Thm:value discretize error} on the convergence of
the discrete-time MDP $\mathcal{M}_{\Delta}$ to the continuous-time
MDP $\mathcal{M}_{0}$ in terms of the value functions and the policies.

\begin{figure}[p]
\centering%
\begin{tabular}{c}
\includegraphics[scale=0.58]{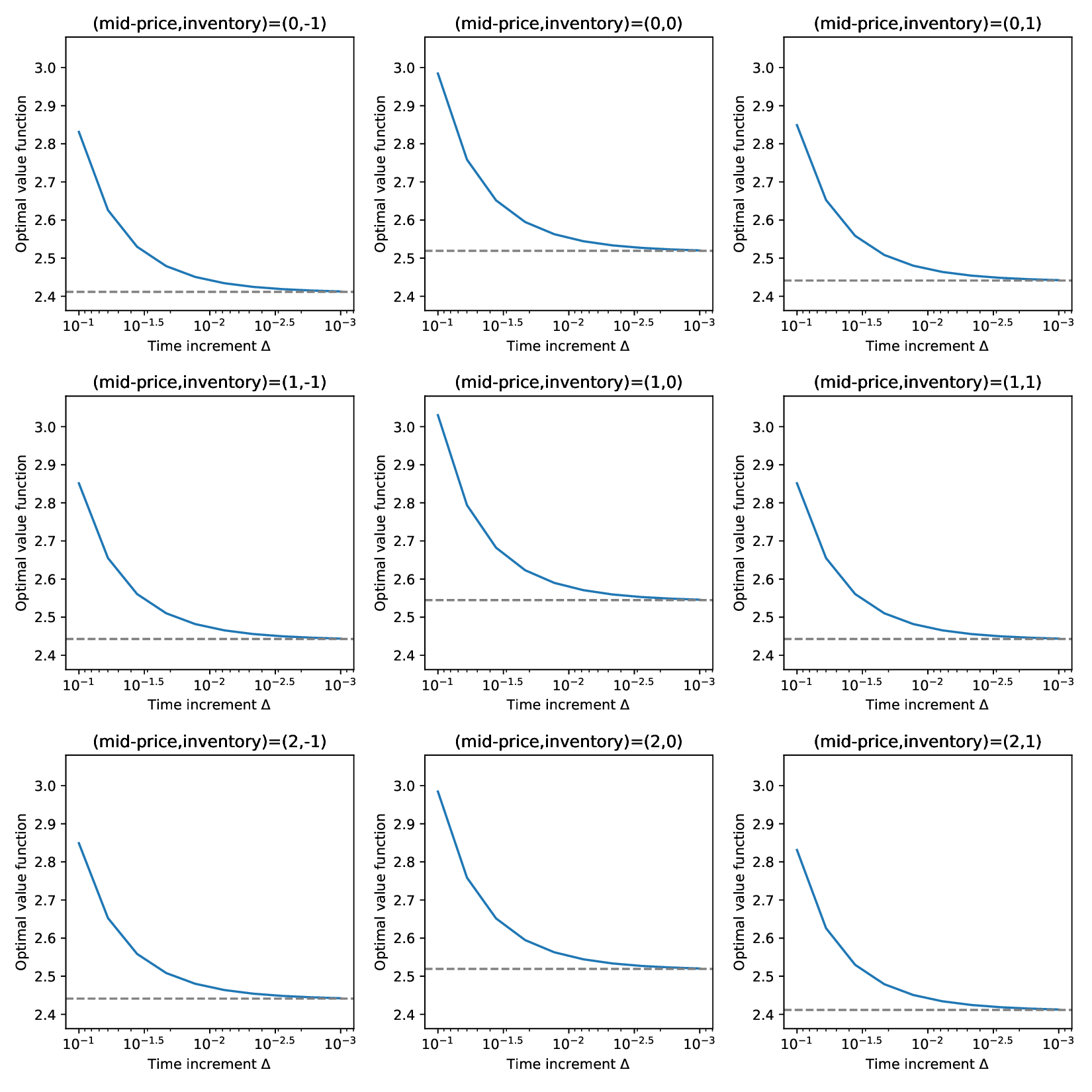}\tabularnewline
\end{tabular}

\caption{The convergence of optimal value function $V_{\Delta}^{*}(s)$ against
$\Delta$\label{figure:V converge vs Delta}}
\end{figure}

Then we shift the gear to the aspect of the RL method in our theory.
For each of the discrete-time MDP $\mathcal{M}_{\Delta}$, we run
Q-learning algorithm with $\varepsilon$-greedy exploration as described
in Section \ref{sec:Q-learning sample cplx}. Here, the sequence $\{\varepsilon^{(n)}(s)\}$
of the exploration probabilities is set as $\varepsilon^{(n)}(s)=\max(\varepsilon_{0},\rho_{0}\rho^{\left\lfloor N(s,n)/M\right\rfloor })$,
where $\varepsilon_{0}$ is the smallest value we permit, $M$ is
called the learning epoch, $\rho\in(0,1)$ is the decaying rate of
the exploration probability, and $N(s,n)$ is the number of times,
until the $n$th iteration, that we visited the state $s$. Under
different values of $\Delta$, the hyperparameters $\rho_{0}$, $\rho$,
and $M$ vary according to our fine tuning, while the parameter $\varepsilon_{0}$
is fixed at $\varepsilon_{0}=10^{-5}$. For the learning rate $\beta^{(n)}(s,a)=(N(s,a,n))^{-\omega}$
introduced in Section \ref{sec:Q-learning sample cplx}, the hyperparameter
$\omega$ we use are close to $0.5$ for all different values of $\Delta$:
for $\{\Delta_{k}=10^{-1-2k/9}\}_{k=0,1,...,9}$, we use $\omega=0.501$
when $\Delta=$$\Delta_{2}$, $\Delta_{3}$, and $\Delta_{6}$, and
we use $\omega=0.5001$ for all other values of $\Delta$.

To capture the sample complexity using numerical simulations, we record
the iteration number $N_{\Delta}$, which is the smallest iteration
steps $n$ such that $||V_{\Delta}^{(n)}(\cdot)-V_{\Delta}^{*}(\cdot)||\leq0.1$,
when applying the Q-learning algorithm to the discrete-time MDP for
each value of $\Delta$.

\begin{figure}[htbp]
\centering%
\begin{tabular}{c}
\includegraphics[scale=0.69]{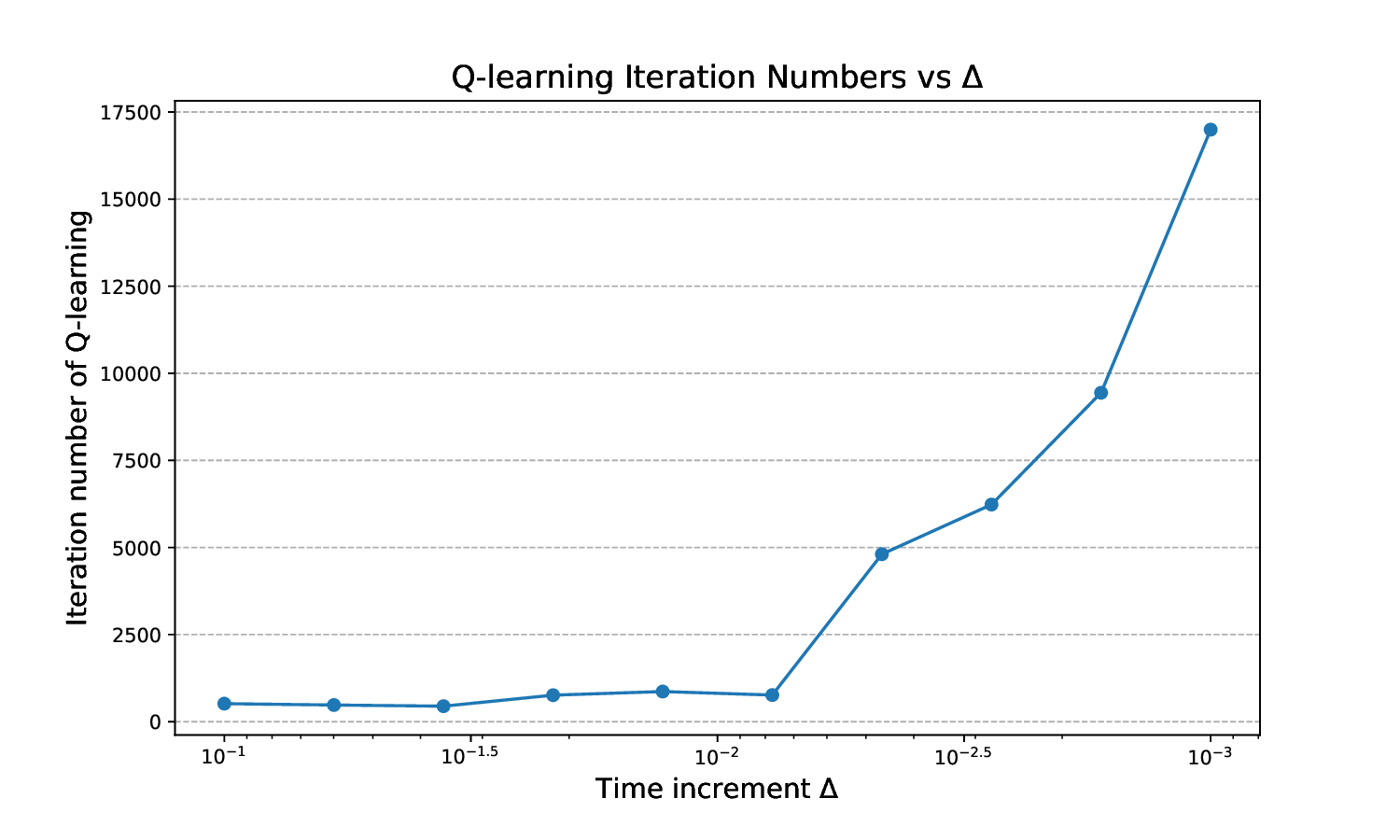}\tabularnewline
\end{tabular}
\caption{The trend of sample complexity against $\Delta$\label{figure:sample cplx vs Delta}}
\end{figure}

In Figure \ref{figure:sample cplx vs Delta}, we plot the iteration
number $N_{\Delta}$ against $\Delta$, where the x-axis is in logarithmic
scale. This increasing trend of the iteration numbers demonstrate
our theory about the sample complexity, i.e., the sample complexity
upper bound will increase when $\Delta$ decreases. In particular,
though the upper bound in Theorem \ref{Thm:sample cplx bound Delta}
depends on the hyperparameter $\omega$, the values of $\omega$ we
use are all the same regardless the difference smaller than $10^{-2}$.
So, the only difference in the upper bound is the values of $\Delta$
and the upward trend in Figure \ref{figure:sample cplx vs Delta}
does validate Theorem \ref{Thm:sample cplx bound Delta}.

\subsection{Two-player case}

First, we examine the convergence of Nash equilibrium. We start with
the following setup. For the intensity functions $\Gamma^{a,k}$ and
$\Gamma^{b,k}$, their building blocks $\Upsilon^{-}$ and $\Upsilon^{+}$are
set to
\[
\Upsilon^{-}(d)=\alpha\exp(-\kappa d)\text{ and }\Upsilon^{+}(d)=\frac{1}{2}\sqrt{1+3\exp(-\kappa d)}.
\]
This choice of intensity functions connects to the single-agent case
in the following sense: when the two market makers always quote the
same prices, i.e., $p_{t}^{1,a}=p_{t}^{2,a}=p_{t}^{a}$ and $p_{t}^{1,b}=p_{t}^{2,b}=p_{t}^{b}$,
the intensity of market flows they will receive is exactly same with
the single-agent case because by calculations we have that
\[
\Gamma^{a,k}(x,p^{a},p^{a})=\frac{\Upsilon^{-}(|p^{a}-x|)}{\Upsilon^{+}(0)}=\lambda(|p^{a}-x|)\text{ and }\Gamma^{b,k}(x,p^{b},p^{b})=\frac{\Upsilon^{-}(|p^{b}-x|)}{\Upsilon^{+}(0)}=\lambda(|p^{b}-x|).
\]
We set the model parameters the same as the setup in the single-agent
case.

The above parameters and the value of $\Delta$ determine the discrete-time
game $\mathcal{G}_{\Delta}$. To validate our theory on the effect
of $\Delta$, we set $\Delta$ according to the decreasing sequence
$\{10^{-1-2k/9}\}_{k=0,1,...,9}$. For the true Nash equilibrium,
we solve it through value iteration following the same manner of Bellman
iteration in the single-agent case. In this example, the model is
symmetric w.r.t. $\text{MM}_{1}$ and $\text{MM}_{2}$, and thus the
strategies and the value functions of $\text{MM}_{1}$ and $\text{MM}_{2}$
should be the same with each other, i.e., $\pi_{\Delta}^{1,*}=\pi_{\Delta}^{2,*}$
and $V_{\Delta}^{1,\pi_{\Delta}^{1,*},\pi_{\Delta}^{2,*}}(s)=V_{\Delta}^{2,\pi_{\Delta}^{1,*},\pi_{\Delta}^{2,*}}(s)$.
Also, under our current setup, we find from the numerical results
that there is exactly one Nash equilibrium for $\mathcal{G}_{\Delta}$
under different values of $\Delta$.

The convergence of Nash equilibrium is validated by the numerical
results as follows. First, we find that all the Nash equilibriums
computed in $\mathcal{G}_{\Delta}$ under different values of $\Delta$
are exactly the same with each other. Second, the convergence of the
value function $V_{\Delta}^{k,\pi_{\Delta}^{1,*},\pi_{\Delta}^{2,*}}(s)$
at Nash equilibrium is illustrated in Figure \ref{figure:Nash value vs Delta}.
These equilibrium points under $\mathcal{G}_{\Delta}$ are all equal
to a pure strategy $\pi_{0}^{*}(s)$, i.e., $\pi_{\Delta}^{1,*}(s)=\pi_{\Delta}^{2,*}(s)=\pi_{0}^{*}(s)=(p_{a}^{*}(s),p_{b}^{*}(s))$,
where $p_{a}^{*}(\frac{1}{2}\delta_{P})=p_{a}^{*}(\delta_{P})=p_{a}^{*}(\frac{3}{2}\delta_{P})=2\delta_{P}$
and $p_{b}^{*}(\frac{1}{2}\delta_{P})=p_{b}^{*}(\delta_{P})=p_{b}^{*}(\frac{3}{2}\delta_{P})=0$.
The strategy is the reaction of the market maker to the state variable
$s$ which is the mid-price in the state space $\mathcal{S}_{X}=\{\frac{1}{2}\delta_{P},\delta_{P},\frac{3}{2}\delta_{P}\}$.
The interpretation of the pure strategy $\pi_{0}^{*}(s)=(p_{a}^{*}(s),p_{b}^{*}(s))$
is that, no matter what level of mid-price is, both $\text{MM}_{1}$
and $\text{MM}_{2}$ will quote at the highest ask price $2\delta_{P}$
(resp. lowest bid price $0$) to maximize their expected profit. In
particular, this special optimal strategy is due to the simple setup
we are currently studying, and the optimal strategy should be much
more complicated and variant with different states in general setup,
e.g., the dimension of state space $\mathcal{S}_{X}$ is high.

Besides the identity among $(\pi_{\Delta}^{1,*}(s),\pi_{\Delta}^{2,*}(s))$
under different $\Delta$ and the convergence trend of $(V_{\Delta}^{1,\pi_{\Delta}^{1,*},\pi_{\Delta}^{2,*}}(s),V_{\Delta}^{2,\pi_{\Delta}^{1,*},\pi_{\Delta}^{2,*}}(s))$
shown in Figure \ref{figure:Nash value vs Delta}, the convergence
results in Theorem \ref{Thm:2player Nash converge} are further validated
by the findings below. We find that, the strategy pair $(\pi_{0}^{1}(s),\pi_{0}^{2}(s))=(\pi_{0}^{*}(s),\pi_{0}^{*}(s))$
and the value function pair $(V_{0}^{1}(s),V_{0}^{2}(s))=(V_{\Delta}^{1,\pi_{\Delta}^{1,*},\pi_{\Delta}^{2,*}}(s),V_{\Delta}^{2,\pi_{\Delta}^{1,*},\pi_{\Delta}^{2,*}}(s))$
with the smallest $\Delta$ in our experiment is an approximate solution
to the optimality equations (\ref{cont V1 game Bellman equation})--(\ref{cont V2 game Bellman equation})
that characterize the Nash equilibrium in continuous-time game $\mathcal{G}_{0}$.
More precisely, the numerical results show that $\max_{s\in\mathcal{S}}|\gamma V_{0}^{1}(s)-\max_{a\in\mathcal{A}}\bigl(r^{1}(s,a,\pi_{0}^{*})+\sum_{s^{\prime}\in\mathcal{S}}\lambda(s^{\prime}|s,a,\pi_{0}^{*})V_{0}^{1}(s^{\prime})\bigr)|=0.057$
and $\max_{s\in\mathcal{S}}|\gamma V_{0}^{2}(s)-\max_{a\in\mathcal{A}}\bigl(r^{2}(s,\pi_{0}^{*},a)+\sum_{s^{\prime}\in\mathcal{S}}\lambda(s^{\prime}|s,\pi_{0}^{*},a)V_{0}^{2}(s^{\prime})\bigr)|=0.057$,
where the difference is small enough and the convergence of Nash equilibrium
strategy and value function in $\mathcal{G}_{\Delta}$ to those in
$\mathcal{G}_{0}$ is validated.

\begin{figure}[htbp]
\centering%
\begin{tabular}{c}
\includegraphics[scale=0.69]{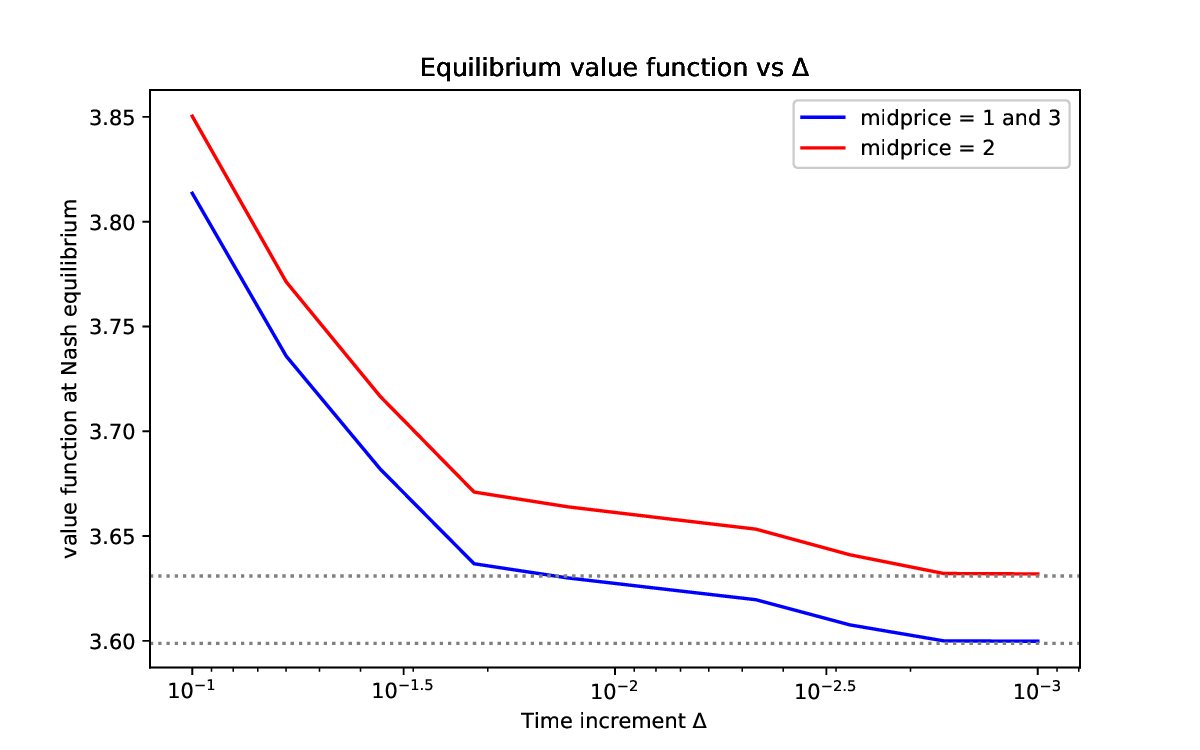}\tabularnewline
\end{tabular}

\caption{The convergence of equilibrium value function $V_{\Delta}^{k,\pi_{\Delta}^{1,*},\pi_{\Delta}^{2,*}}(s)$
against $\Delta$\label{figure:Nash value vs Delta}}
\end{figure}

Then, we shift the gear to the aspect of the RL method in our theory.
Under the discrete-time game $\mathcal{G}_{\Delta}$ with time increment
$\Delta=0.1$, we run Nash Q-learning algorithm with $\varepsilon$-greedy
exploration as described in Algorithm \ref{alg:Nash Q-learning} in
Section \ref{sec:Q-learning sample cplx}. Here, the sequence $\{\varepsilon^{n}(s)\}$
of the exploration probabilities is set as $\varepsilon^{n}(s)=\max(\varepsilon_{0},\rho_{0}\rho^{\left\lfloor N(s,n)/M\right\rfloor })$,
where $\varepsilon_{0}$ is the smallest value we permit, $M$ is
called the learning epoch, $\rho\in(0,1)$ is the decaying rate of
the exploration probability, and $N(s,n)$ is the number of times,
until the $n$th iteration, that we visited the state $s$. The learning
rate sequence $\{\beta^{n}(s,a^{1},a^{2})\}$ is similarly set as
$\beta^{n}(s,a^{1},a^{2})=\eta_{0}\eta^{\left\lfloor N(s,a^{1},a^{2},n)/M_{b}\right\rfloor }$,
where $N(s,a^{1},a^{2},n)$ is the number of times, until the $n$th
iteration, that we visited the state action pair $(s,a^{1},a^{2})$.

In our experiment, the hyperparameters $\varepsilon_{0}$, $\rho_{0}$,
$\rho$, $M$, $\eta_{0}$, $\eta$, and $M_{b}$ are tuned to guarantee
the convergence of the Nash Q-learning algorithm. We plot in Figure
\ref{figure:value learning error}
how the learning errors of value functions and policies decay w.r.t.
the iteration steps of the Nash Q-learning algorithm. The value function
error and the policy error are given by $\max_{s\in\mathcal{S}}|V_{\Delta}^{k,i}(s)-V_{\Delta}^{k,\pi_{\Delta}^{1,*},\pi_{\Delta}^{2,*}}(s)|$
and $\max_{s\in\mathcal{S}}||\pi_{\Delta}^{k,*}(\cdot|s)-\pi_{0}^{k,*}(\cdot|s)||$,
respectively, where $||\pi(\cdot)-\pi^{\prime}(\cdot)||=\max_{a\in\mathcal{A}}|\pi(a)-\pi^{\prime}(a)|$
for any $\pi,\pi^{\prime}\in\mathcal{P}(\mathcal{A})$. The learned
strategy is exactly same with the true strategy, i.e., the policy
error is zero, and the value function error ends with $(0.09,0.05)$,
demonstrating the effectiveness of the Nash Q-learning algorithm.
In Figure \ref{figure:value learning error},
we use red color to mark the steps that have nonzero policy error.
We find that, the policy error often follows from a rise in the value
function error, e.g., around 400th and 800th steps in Figure \ref{figure:value learning error} (left);
when the rise in value function error is small, the RL algorithm can
still perfectly learn the true policy, e.g., around 800th steps in
Figure \ref{figure:value learning error} (right).

\begin{figure}[htbp]
\centering%
\begin{tabular}{c}
\includegraphics[scale=0.42]{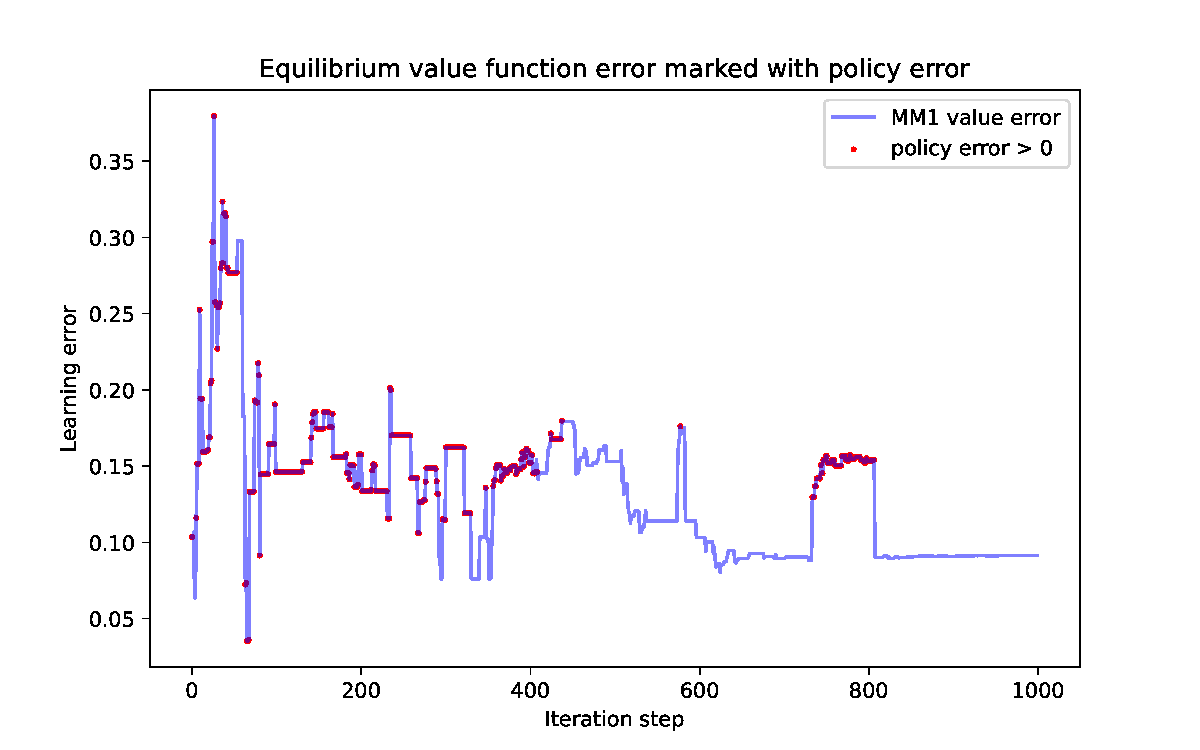}
\includegraphics[scale=0.42]{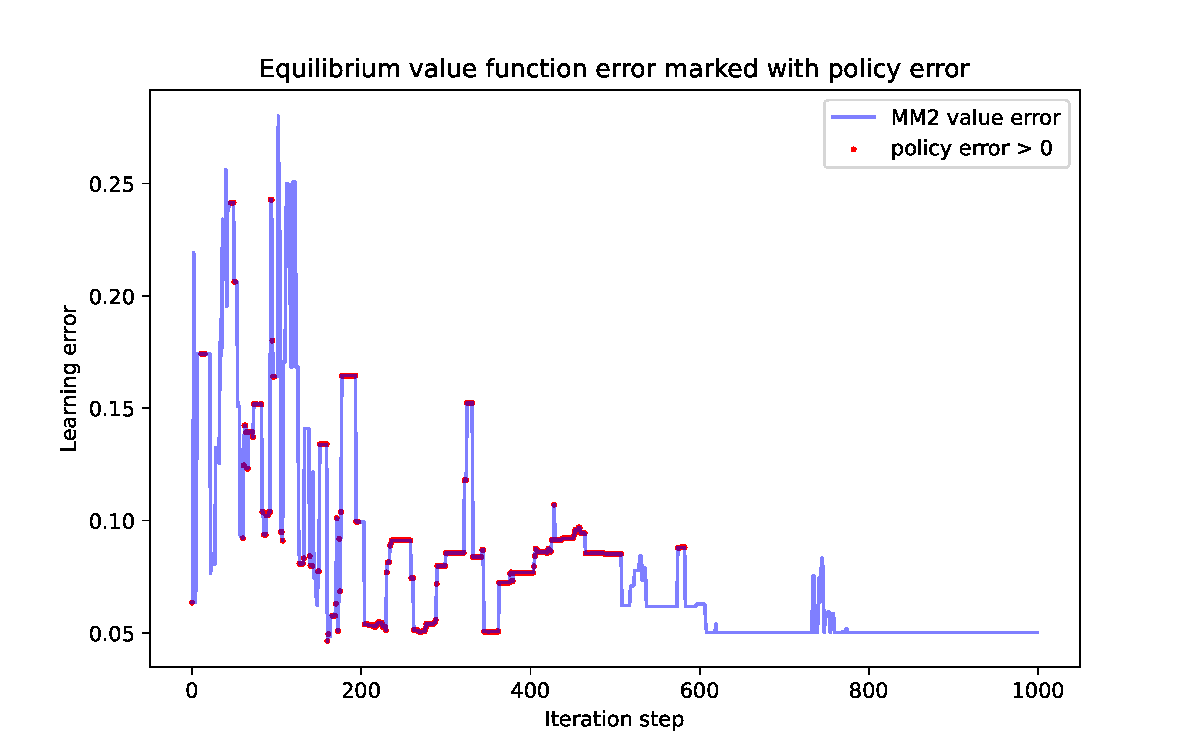}
\end{tabular}
\caption{$\text{MM}_{1}$ (left) and $\text{MM}_{2}$ (right) equilibrium value function learning error\label{figure:value learning error}}
\end{figure}



\section{Conclusion and Discussion\label{sec:Conclusion}}

In this paper, we provide a novel and comprehensive theoretical analysis
for RL algorithm in high-frequency market making problem. We target
the effects of sampling frequency on the RL method, and find an interesting
tradeoff between the accuracy and the complexity of RL algorithm when
changing the frequency. Both of these two metrics are of realistic
importance because, the error captures the accuracy of our estimation
of the expected profit and the complexity measures the total transaction
costs in some way as every iteration of RL algorithm corresponds to
a quote of market maker. Furthermore, we extend our model to study
the price competitions for multiple market makers under the game theoretical
framework. We establish the convergence of Nash equilibrium from discretized
model to continuous-time model, and apply the Nash Q-learning to solve
the equilibrium. Our theory is applicable to any discretized continuous-time
Markov decision process, and thus can be used to study many other
high-frequency financial decision making problems, e.g., optimal executions.
Our results also provide a guide for practitioners on the choice of
sampling frequency, which should depend on what aspect of the algorithm
is primarily concerned by them.

While the current scope of our theory is considerably wide-ranging,
it is possible to further widen it and there are lots of topics that
are worth pursuing. For instance, how to generalize our theory for
more complex market making models and more sophisticated RL methods,
e.g., deep RL algorithm? Is it possible to sharpen the sample complexity
upper bound or even make it tight, such that we have a better characeterization
for the effects of sampling frequency? What will happen to our theoretical
analysis if we switch to the multi-asset case where multiple assets
can be traded in the market? Among many directions, these topics can
be investigated in future research.

\appendix
\bibliographystyle{apalike}
\bibliography{newadded_RL_MM,mainbib}

\begin{thebibliography}{}

\bibitem[Abreu et~al., 1990]{Abreu1990ECTA_nonzero_sum_games_discrete}
Abreu, D., Pearce, D., and Stacchetti, E. (1990).
\newblock Toward a theory of discounted repeated games with imperfect monitoring.
\newblock {\em Econometrica: Journal of the Econometric Society}, pages 1041--1063.

\bibitem[A{\"{\i}}t-Sahalia and Jacod, 2014]{yacjacodHFBOOK}
A{\"{\i}}t-Sahalia, Y. and Jacod, J. (2014).
\newblock {\em High Frequency Financial Econometrics}.
\newblock Princeton University Press.

\bibitem[A{\"\i}t-Sahalia and Sa{\u{g}}lam, 2023]{yacMeh2023HFTspeed}
A{\"\i}t-Sahalia, Y. and Sa{\u{g}}lam, M. (2023).
\newblock High frequency market making: The role of speed.
\newblock {\em Journal of Econometrics}.

\bibitem[Amihud and Mendelson, 1980]{AmiMen1980HFT_MM}
Amihud, Y. and Mendelson, H. (1980).
\newblock Dealership market: Market-making with inventory.
\newblock {\em Journal of financial economics}, 8(1):31--53.

\bibitem[Ardon et~al., 2021]{ardon2021RL_multiMM_simulator}
Ardon, L., Vadori, N., Spooner, T., Xu, M., Vann, J., and Ganesh, S. (2021).
\newblock Towards a fully rl-based market simulator.
\newblock In {\em Proceedings of the Second ACM International Conference on AI in Finance}, pages 1--9.

\bibitem[Avellaneda and Stoikov, 2008]{AveSto2008HFT_HJB}
Avellaneda, M. and Stoikov, S. (2008).
\newblock High-frequency trading in a limit order book.
\newblock {\em Quantitative Finance}, 8(3):217--224.

\bibitem[Bayraktar and Kara, 2023]{bayr2023approxRLdiffusion}
Bayraktar, E. and Kara, A.~D. (2023).
\newblock Approximate q learning for controlled diffusion processes and its near optimality.
\newblock {\em SIAM Journal on Mathematics of Data Science}, 5(3):615--638.

\bibitem[Bensoussan and Robin, 1982]{BenRobin1982discretizedMDPconverge}
Bensoussan, A. and Robin, M. (1982).
\newblock On the convergence of the discrete time dynamic programming equation for general semigroups.
\newblock {\em SIAM Journal on Control and Optimization}, 20(5):722--746.

\bibitem[Cartea et~al., 2022a]{cartea2022QF_MM_game}
Cartea, {\'A}., Chang, P., Mroczka, M., and Oomen, R. (2022a).
\newblock Ai-driven liquidity provision in otc financial markets.
\newblock {\em Quantitative Finance}, 22(12):2171--2204.

\bibitem[Cartea et~al., 2022b]{cartea2022MM_game_tick_size}
Cartea, {\'A}., Chang, P., and Penalva, J. (2022b).
\newblock Algorithmic collusion in electronic markets: The impact of tick size.
\newblock {\em Available at SSRN 4105954}.

\bibitem[Cartea et~al., 2014]{cartea2014HFT_MM_HJB}
Cartea, {\'A}., Jaimungal, S., and Ricci, J. (2014).
\newblock Buy low, sell high: A high frequency trading perspective.
\newblock {\em SIAM Journal on Financial Mathematics}, 5(1):415--444.

\bibitem[Cont et~al., 2010]{ContStoikovTalreja10}
Cont, R., Stoikov, S., and Talreja, R. (2010).
\newblock A stochastic model for order book dynamics.
\newblock {\em Operations Research}, 58:549--563.

\bibitem[Cont and Xiong, 2022]{cont2022RL_MM_multi_agent}
Cont, R. and Xiong, W. (2022).
\newblock Dynamics of market making algorithms in dealer markets: Learning and tacit collusion.
\newblock {\em Mathematical Finance}.

\bibitem[Even-Dar et~al., 2003]{JMLR2003Qcomplexity}
Even-Dar, E., Mansour, Y., and Bartlett, P. (2003).
\newblock Learning rates for q-learning.
\newblock {\em Journal of machine learning Research}, 5(1).

\bibitem[Filar and Vrieze, 2012]{FilarVrieze2012Book_MDP_game}
Filar, J. and Vrieze, K. (2012).
\newblock {\em Competitive Markov decision processes}.
\newblock Springer Science \& Business Media.

\bibitem[Fleming and Soner, 2006]{FlemingSoner2006book_MDP}
Fleming, W.~H. and Soner, H.~M. (2006).
\newblock {\em Controlled Markov processes and viscosity solutions}, volume~25.
\newblock Springer Science \& Business Media.

\bibitem[Ganesh et~al., 2019]{ganesh2019RL_multiMM_game}
Ganesh, S., Vadori, N., Xu, M., Zheng, H., Reddy, P., and Veloso, M. (2019).
\newblock Reinforcement learning for market making in a multi-agent dealer market.
\newblock {\em arXiv preprint arXiv:1911.05892}.

\bibitem[Ga{\v{s}}perov et~al., 2021]{gavsperov2021RLMM_review}
Ga{\v{s}}perov, B., Begu{\v{s}}i{\'c}, S., Posedel~{\v{S}}imovi{\'c}, P., and Kostanj{\v{c}}ar, Z. (2021).
\newblock Reinforcement learning approaches to optimal market making.
\newblock {\em Mathematics}, 9(21):2689.

\bibitem[Gihman and Skorohod, 2012]{gihSkorohod2012controlled}
Gihman, I.~I. and Skorohod, A.~V. (2012).
\newblock {\em Controlled stochastic processes}.
\newblock Springer Science \& Business Media.

\bibitem[Gu{\'e}ant et~al., 2013]{GCFT2013HFT_MM_HJB_approx}
Gu{\'e}ant, O., Lehalle, C.-A., and Fernandez-Tapia, J. (2013).
\newblock Dealing with the inventory risk: a solution to the market making problem.
\newblock {\em Mathematics and financial economics}, 7:477--507.

\bibitem[Guo and Hern{\'a}ndez-Lerma, 2005]{GuoHernz2005continuousMDPnonzero_sum_game}
Guo, X. and Hern{\'a}ndez-Lerma, O. (2005).
\newblock Nonzero-sum games for continuous-time markov chains with unbounded discounted payoffs.
\newblock {\em Journal of applied probability}, 42(2):303--320.

\bibitem[Hambly et~al., 2023]{hamblyXuYang2023RL_finance_review}
Hambly, B., Xu, R., and Yang, H. (2023).
\newblock Recent advances in reinforcement learning in finance.
\newblock {\em Mathematical Finance}, 33(3):437--503.

\bibitem[Han, 2022]{han2022make_take_fee_RL}
Han, B. (2022).
\newblock Can maker-taker fees prevent algorithmic cooperation in market making?
\newblock In {\em Proceedings of the Third ACM International Conference on AI in Finance}, pages 199--206.

\bibitem[Hanson, 2007]{hanson2007BookJumpDiffusion}
Hanson, F.~B. (2007).
\newblock {\em Applied stochastic processes and control for jump-diffusions: modeling, analysis and computation}.
\newblock SIAM.

\bibitem[Ho and Stoll, 1981a]{HoStoll1981HFT_MM}
Ho, T. and Stoll, H.~R. (1981a).
\newblock Optimal dealer pricing under transactions and return uncertainty.
\newblock {\em Journal of Financial economics}, 9(1):47--73.

\bibitem[Ho and Stoll, 1981b]{HoStoll1981HFMM}
Ho, T. and Stoll, H.~R. (1981b).
\newblock Optimal dealer pricing under transactions and return uncertainty.
\newblock {\em Journal of Financial economics}, 9(1):47--73.

\bibitem[Hu and Wellman, 2003]{JMLR2003Nash_Q_learning}
Hu, J. and Wellman, M.~P. (2003).
\newblock Nash q-learning for general-sum stochastic games.
\newblock {\em Journal of machine learning research}, 4(Nov):1039--1069.

\bibitem[Kakumanu, 1971]{kaku1971CTMDP_Bellman}
Kakumanu, P. (1971).
\newblock Continuously discounted markov decision model with countable state and action space.
\newblock {\em The Annals of Mathematical Statistics}, 42(3):919--926.

\bibitem[Kelly and Xiu, 2023]{XiuKelly2023ML_finance_survey}
Kelly, B. and Xiu, D. (2023).
\newblock Financial machine learning.
\newblock {\em Foundations and Trends® in Finance}, 13(3-4):205--363.

\bibitem[Kyle, 1984]{kyle1984market_game}
Kyle, A.~S. (1984).
\newblock Market structure, information, futures markets, and price formation.
\newblock In {\em International Agricultural Trade}, pages 45--64. CRC Press.

\bibitem[Kyle, 1985]{kyle1985ECTA_market_game}
Kyle, A.~S. (1985).
\newblock Continuous auctions and insider trading.
\newblock {\em Econometrica: Journal of the Econometric Society}, pages 1315--1335.

\bibitem[Kyle, 1989]{kyle1989RES_market_game}
Kyle, A.~S. (1989).
\newblock Informed speculation with imperfect competition.
\newblock {\em The Review of Economic Studies}, 56(3):317--355.

\bibitem[Li, 2012]{li2012RLcomplexity}
Li, L. (2012).
\newblock Sample complexity bounds of exploration.
\newblock In {\em Reinforcement Learning: State-of-the-Art}, pages 175--204. Springer.

\bibitem[Luo and Zheng, 2021]{luo2021HJB_market_making_game}
Luo, J. and Zheng, H. (2021).
\newblock Dynamic equilibrium of market making with price competition.
\newblock {\em Dynamic Games and Applications}, 11:556--579.

\bibitem[Nash, 1951]{nash1951AOMequili}
Nash, J. (1951).
\newblock Non-cooperative games.
\newblock {\em Annals of Mathematics}, pages 286--295.

\bibitem[Sannikov, 2007]{sannikov2007ECTA_nonzero_sum_games_continuous}
Sannikov, Y. (2007).
\newblock Games with imperfectly observable actions in continuous time.
\newblock {\em Econometrica}, 75(5):1285--1329.

\bibitem[Shah and Xie, 2018]{Qnn2018coverTimeLemma}
Shah, D. and Xie, Q. (2018).
\newblock Q-learning with nearest neighbors.
\newblock {\em Advances in Neural Information Processing Systems}, 31.

\bibitem[Sutton and Barto, 2018]{suttonBarto2018RLbook}
Sutton, R.~S. and Barto, A.~G. (2018).
\newblock {\em Reinforcement learning: An introduction}.
\newblock MIT press.

\bibitem[Vadori et~al., 2024]{vadori2024multiRL_OTC_simulation}
Vadori, N., Ardon, L., Ganesh, S., Spooner, T., Amrouni, S., Vann, J., Xu, M., Zheng, Z., Balch, T., and Veloso, M. (2024).
\newblock Towards multi-agent reinforcement learning-driven over-the-counter market simulations.
\newblock {\em Mathematical Finance}, 34(2):262--347.

\bibitem[Wang et~al., 2023]{wang2023robust_RL_multi}
Wang, Z., Ventre, C., and Polukarov, M. (2023).
\newblock Robust market making: To quote, or not to quote.
\newblock In {\em Proceedings of the Fourth ACM International Conference on AI in Finance}, pages 664--672.

\bibitem[Xiong and Cont, 2021]{xiong2021MM_collusion}
Xiong, W. and Cont, R. (2021).
\newblock Interactions of market making algorithms: a study on perceived collusion.
\newblock In {\em Proceedings of the Second ACM International Conference on AI in Finance}, pages 1--9.

\end{thebibliography}

\section{Proof of Theorem \ref{Thm:value discretize error}}

\subsection{General results for controlled Markov chains\label{sec:general MDP convergence results}}

To prove our convergence results, we consider a more general case
for continuous-time Markov chains as follows, which is potentially
useful for future research.

For the continuous-time MDP, the state variable $S_{t}$ is a continuous-time
Markov chain with finite state space $\mathcal{S}$. For any $s\in\mathcal{S}$,
the rate parameters are given by $\lambda(s,a)$ and $\{\lambda_{s^{\prime}}(s,a)\}_{s^{\prime}\in\mathcal{S},s^{\prime}\neq s}$
satisfying that $\sum_{s^{\prime}\in\mathcal{S},s^{\prime}\neq s}\lambda_{s^{\prime}}(s,a)=\lambda(s,a)$
and $\lambda_{s^{\prime}}(s,a)\geq0$. More precisely, when $S_{t}=s$
and $a_{t}=a$ at time $t$, then after a random waiting time $\tau$
which follows an exponential distribution with parameter $\lambda(s,a)$,
there will be a state transition in $S_{t}$ and the probability is
given by $P(S_{t+\tau}=s^{\prime})=\lambda_{s^{\prime}}(s,a)/\lambda(s,a)$.
We denote by $S_{t}^{\pi}$ the state variable under the policy $\pi$.
The reward function is given by
\begin{equation}
V_{0}^{\pi}(s):=E\left[\left.\int_{0}^{+\infty}e^{-\gamma t}f(S_{t}^{\pi},a_{t})dt\right\vert S_{0}=s\right],\label{value function def cont}
\end{equation}
where the action $a_{t}$ is given by the policy $\pi$.

For the discrete-time MDP $\mathcal{M}_{\Delta}$, the state variable
$S_{i}$ is a discrete-time Markov chain. When the agent take the
action $a$ at time $i$, the transition probabilities are given by
$P(S_{i+1}=s)=p_{\Delta}(s|s,a)=1-\lambda(s,a)\Delta+O(\Delta^{2})$,
and $P(S_{i+1}=s^{\prime})=p_{\Delta}(s^{\prime}|s,a)=\lambda_{s^{\prime}}(s,a)\Delta+O(\Delta^{2})$
for any $s^{\prime}\in\mathcal{S}$ and $s^{\prime}\neq s$, where
we assume that $\Delta$ is sufficiently small so that all these probabilities
are in $[0,1]$. We denote by $S_{i}^{\pi}$ the state variable under
the policy $\pi$. The reward function is given by
\[
V_{\Delta}^{\pi}(s):=E\left[\left.\sum_{i=0}^{+\infty}e^{-i\gamma\Delta}f(S_{i}^{\pi},a_{i})\Delta\right\vert S_{0}=s\right],
\]
where the action $a_{t}$ is given by the policy $\pi$ and the function
$f$ is the same with that in (\ref{value function def cont}).

We assume that the functions $f(s,a)$ and $\lambda(s,a)$ are uniformly
bounded over $\mathcal{S}\times\mathcal{A}$. Then since $\mathcal{S}\times\mathcal{A}$
is finite, automatically we have that these functions are Lipschitz.
Thus, there exist constants $C_{B}^{f}$ and $C_{Lip}^{f}$ such that
$|f(s,a)|<C_{B}^{f}$ and $|f(s,a)-f(s^{\prime},a^{\prime})|<C_{Lip}^{f}(|s-s^{\prime}|+|a-a^{\prime}|)$
for any $s,s^{\prime}\in\mathcal{S}$ and $a,a^{\prime}\in\mathcal{A}$.
The same holds for $\lambda(s,a)$. The optimal value functions $V_{0}^{*}(s)$
for $\mathcal{M}_{0}$ and $V_{\Delta}^{*}(s)$ for $\mathcal{M}_{\Delta}$
are defined in the same manner as we did in Section \ref{sec:discrete-time HFMM and converge}.
Then we have that
\begin{lem}
\label{Lemma:Bellman equations for two MDPs}For the continuous-time
MDP $\mathcal{M}_{0}$, the optimal value function $V_{0}^{*}(\cdot)$
is the unique solution to the dynamic programming optimality equation
as follows
\begin{equation}
\gamma V(s)=\max_{a\in\mathcal{A}}(f(s,a)+\sum_{s^{\prime}\in\mathcal{S},s^{\prime}\neq s}\lambda_{s^{\prime}}(s,a)V(s^{\prime})-\lambda(s,a)V(s)),\label{HJB for Value continuous}
\end{equation}
There exists a stationary Markov policy $\pi_{0}^{*}(\cdot)$, such
that $V_{0}^{*}(s)=V_{0}^{\pi_{0}^{*}}(s)$.

For the discrete-time MDP $\mathcal{M}_{\Delta}$, the optimal value
function $V_{\Delta}^{*}(s)$ is the unique solution to the Bellman
equation as follows
\begin{equation}
V(s)=\max_{a\in\mathcal{A}}(f(s,a)\Delta+e^{-\gamma\Delta}\sum_{s^{\prime}\in\mathcal{S}}p_{\Delta}(s^{\prime}|s,a)V(s)).\label{Bellman for Value discrete}
\end{equation}
There exists a stationary Markov policy $\pi_{\Delta}^{*}(\cdot)$,
such that $V_{\Delta}^{*}(s)=V_{\Delta}^{\pi_{\Delta}^{*}}(s)$.
\end{lem}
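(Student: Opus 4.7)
The plan is to treat the two MDPs separately; both proofs follow a Banach-contraction plus verification template, with the continuous-time case reduced to a discrete-time one by uniformization.

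For $\mathcal{M}_{\Delta}$ I would define the Bellman operator $T_{\Delta}:\mathbb{R}^{|\mathcal{S}|}\to\mathbb{R}^{|\mathcal{S}|}$ by
\[
(T_{\Delta}V)(s):=\max_{a\in\mathcal{A}}\Bigl(f(s,a)\Delta+e^{-\gamma\Delta}\sum_{s'\in\mathcal{S}}p_{\Delta}(s'|s,a)V(s')\Bigr),
\]
and observe in one line that $\|T_{\Delta}V-T_{\Delta}V'\|_{\infty}\leq e^{-\gamma\Delta}\|V-V'\|_{\infty}$. Banach's fixed-point theorem on $(\mathbb{R}^{|\mathcal{S}|},\|\cdot\|_{\infty})$ then yields a unique solution $V^{\dagger}$ to (\ref{Bellman for Value discrete}). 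Because $\mathcal{A}$ is finite, an argmax selector exists for every $s$, giving a deterministic stationary Markov policy $\pi_{\Delta}^{*}$. To identify $V^{\dagger}$ with $V_{\Delta}^{*}$ I would run the standard verification step: iterating the Bellman identity along the trajectory produced by any admissible $\pi\in\mathcal{U}_{\Delta}$ gives $V^{\dagger}(s)\geq V_{\Delta}^{\pi}(s)$ with equality when $\pi=\pi_{\Delta}^{*}$; the boundedness of $f$ together with $e^{-\gamma\Delta}<1$ kills the tail uniformly over policies, so the limit $i\to\infty$ is routine.

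For $\mathcal{M}_{0}$ I would use uniformization, which is enabled by Assumption \ref{Assump_Q_rate_lambda}: pick $\bar\lambda>\sup_{s,a}\lambda(s,a)$, inflate every state's total jump rate to $\bar\lambda$ by adding fictitious self-transitions with probabilities $\tilde p(s'|s,a):=\lambda_{s'}(s,a)/\bar\lambda$ for $s'\neq s$ and $\tilde p(s|s,a):=1-\lambda(s,a)/\bar\lambda$. A direct computation using the memoryless property of the $\mathrm{Exp}(\bar\lambda)$ interarrival times rewrites $V_{0}^{\pi}(s)$ as the value of a discrete-time MDP on $\mathcal{S}$ with transition kernel $\tilde p$, one-step reward $f(s,a)/(\bar\lambda+\gamma)$, and effective discount $\bar\lambda/(\bar\lambda+\gamma)<1$. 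Applying the previous paragraph to this surrogate MDP produces a unique fixed point of its Bellman equation together with a stationary Markov optimal policy, and multiplying through by $\bar\lambda+\gamma$ converts that Bellman equation back into exactly (\ref{HJB for Value continuous}), so uniqueness of the solution to (\ref{HJB for Value continuous}) and the existence of $\pi_{0}^{*}$ drop out simultaneously.

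The step I expect to be most delicate is the verification that the stationary Markov optimizer is optimal over the full admissible class $\mathcal{U}_{0}$, which includes history-dependent and non-stationary policies. The cleanest route is Dynkin's formula applied to $t\mapsto e^{-\gamma t}V^{\dagger}(S_{t}^{\pi})$, where $V^{\dagger}$ is the unique HJB solution; by Assumption \ref{Assump_Q_rate_lambda} and the boundedness of the running reward, the associated compensator is integrable and bounded, so a standard localization plus dominated convergence argument gives $V^{\dagger}(s)\geq V_{0}^{\pi}(s)$ for every $\pi\in\mathcal{U}_{0}$, with equality under $\pi_{0}^{*}$. Should one prefer to avoid uniformization entirely, the same conclusion is already available in Theorem III.9.1 of \citet{FlemingSoner2006book_MDP} and the dynamic-programming results in \citet{GuoHernz2005continuousMDPnonzero_sum_game}, which can be cited directly.
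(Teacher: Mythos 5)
Your proposal is correct, but it is far more self-contained than what the paper actually does: the paper's entire proof of this lemma is a two-sentence citation, delegating the continuous-time claims to \citet{kaku1971CTMDP_Bellman} and the discrete-time claims to Theorem 1.13 of \citet{gihSkorohod2012controlled}. Your route supplies the machinery those references encapsulate. For $\mathcal{M}_{\Delta}$, the contraction argument with modulus $e^{-\gamma\Delta}<1$ plus the finite-action argmax selector and the tail-killing verification step is exactly the content of the cited discrete-time theorem (note in passing that the displayed equation (\ref{Bellman for Value discrete}) has $V(s)$ where $V(s^{\prime})$ is clearly intended, and your operator silently corrects this). For $\mathcal{M}_{0}$, your uniformization computation is right: with $\bar\lambda>\sup_{s,a}\lambda(s,a)$ the sojourn-time calculation gives one-step reward $f(s,a)/(\bar\lambda+\gamma)$ and discount $\bar\lambda/(\bar\lambda+\gamma)$, and after multiplying through by $\bar\lambda+\gamma$ the term $\bar\lambda V(s)$ is action-independent and cancels, recovering (\ref{HJB for Value continuous}); this is essentially the classical argument underlying the Kakumanu reference. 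You correctly identify the one genuinely delicate point — that the uniformized surrogate only directly covers policies that are constant between jump epochs, so optimality over all of $\mathcal{U}_{0}$ requires the separate Dynkin-formula verification — and your localization-plus-dominated-convergence treatment of it is sound given that Assumption \ref{Assump_Q_rate_lambda} bounds the rates and the rewards are bounded on the finite state-action space. What your approach buys is transparency and independence from the literature; what the paper's approach buys is brevity, at the cost of leaving the uniformization and the admissible-class verification entirely implicit.
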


\begin{proof}
The continuous-time results follow from \citet{kaku1971CTMDP_Bellman}.
The discrete-time results follow from Theorem 1.13 in \citet{gihSkorohod2012controlled}.
\end{proof}
Then we are ready to prove the main results on the convergence.
\begin{lem}
\label{Lemma:value discretize error general}Assuming the uniqueness
of the optimal policies in $\mathcal{M}_{0}$ and $\mathcal{M}_{\Delta}$,
then we have that, there exists $\Delta_{0}$ such that, for any $\Delta\in(0,\Delta_{0})$,
it holds: (i) The policies $\pi_{\Delta}^{*}(\cdot)$ and $\pi^{*}(\cdot)$
are the identical, i.e., $\pi_{\Delta}^{*}(s)=\pi^{*}(s)$ for any
$s\in\mathcal{S}$. (ii) The bound for the optimal value functions
is given by
\[
||V_{\Delta}^{*}(\cdot)-V_{0}^{*}(\cdot)||\leq C_{V}\Delta,
\]
where the norm is defined as $||V_{\Delta}^{*}(\cdot)-V_{0}^{*}(\cdot)||:=\max_{s\in\mathcal{S}}|V_{\Delta}^{*}(s)-V_{0}^{*}(s)|$,
and $C_{V}>0$ is a constant that does not depend on $\Delta$.
\end{lem}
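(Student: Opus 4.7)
The plan is to work at the level of the Bellman operator. Define the discrete-time Bellman operator $T_{\Delta}$ acting on functions $V:\mathcal{S}\to\mathbb{R}$ by
\[
T_{\Delta}V(s):=\max_{a\in\mathcal{A}}\Bigl(f(s,a)\Delta+e^{-\gamma\Delta}\sum_{s'\in\mathcal{S}}p_{\Delta}(s'|s,a)V(s')\Bigr).
\]
By Lemma \ref{Lemma:Bellman equations for two MDPs}, $V_{\Delta}^{*}$ is the unique fixed point of $T_{\Delta}$, and $T_{\Delta}$ is an $e^{-\gamma\Delta}$-contraction in the sup norm. The strategy is to (a) show $||T_{\Delta}V_{0}^{*}-V_{0}^{*}||=O(\Delta^{2})$ by Taylor-expanding $T_{\Delta}$ around $\Delta=0$ and invoking the continuous-time optimality equation (\ref{HJB for Value continuous}), and (b) combine this with the contraction of $T_{\Delta}$ to conclude part (ii) of the lemma; then (c) leverage the $O(\Delta)$ value-function bound together with the uniqueness hypothesis to deduce the policy identity in part (i).

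For (a), I expand $e^{-\gamma\Delta}=1-\gamma\Delta+O(\Delta^{2})$ and use the definitions $p_{\Delta}(s|s,a)=1-\lambda(s,a)\Delta+O(\Delta^{2})$ and $p_{\Delta}(s'|s,a)=\lambda_{s'}(s,a)\Delta+O(\Delta^{2})$ for $s'\neq s$, noting that $V_{0}^{*}$ is bounded (since $|f|\leq C_{B}^{f}$ gives $||V_{0}^{*}||\leq C_{B}^{f}/\gamma$). Collecting $O(\Delta)$ terms inside the max yields
\[
T_{\Delta}V_{0}^{*}(s)=V_{0}^{*}(s)+\Delta\max_{a}\Bigl[f(s,a)+\sum_{s'\neq s}\lambda_{s'}(s,a)V_{0}^{*}(s')-\lambda(s,a)V_{0}^{*}(s)-\gamma V_{0}^{*}(s)\Bigr]+O(\Delta^{2}),
\]
and the bracketed maximum vanishes by (\ref{HJB for Value continuous}). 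For (b), the contraction property of $T_{\Delta}$ together with $V_{\Delta}^{*}=T_{\Delta}V_{\Delta}^{*}$ gives $||V_{\Delta}^{*}-V_{0}^{*}||\leq e^{-\gamma\Delta}||V_{\Delta}^{*}-V_{0}^{*}||+||T_{\Delta}V_{0}^{*}-V_{0}^{*}||$, so $||V_{\Delta}^{*}-V_{0}^{*}||\leq C\Delta^{2}/(1-e^{-\gamma\Delta})=O(\Delta)$, and the constant $C_{V}$ is then $C/\gamma$ up to a $1+o(1)$ factor.

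For the policy identity (i), I would introduce the continuous-time suboptimality gap
\[
G_{0}(s,a):=\gamma V_{0}^{*}(s)-f(s,a)-\sum_{s'\neq s}\lambda_{s'}(s,a)V_{0}^{*}(s')+\lambda(s,a)V_{0}^{*}(s),
\]
which is nonnegative and, by the continuous-time Bellman equation together with the uniqueness hypothesis, vanishes precisely at $a=\pi_{0}^{*}(s)$. Finiteness of $\mathcal{A}$ then yields a uniform gap $g:=\min_{s}\min_{a\neq\pi_{0}^{*}(s)}G_{0}(s,a)>0$. Analogously define $G_{\Delta}^{\mathrm{disc}}(s,a):=V_{\Delta}^{*}(s)-[f(s,a)\Delta+e^{-\gamma\Delta}\sum_{s'}p_{\Delta}(s'|s,a)V_{\Delta}^{*}(s')]$, which is nonnegative and, under the discrete-time uniqueness hypothesis, vanishes precisely at $\pi_{\Delta}^{*}(s)$. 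Repeating the Taylor expansion with $V_{\Delta}^{*}$ in place of $V_{0}^{*}$ and using part (ii) to replace $V_{\Delta}^{*}$ by $V_{0}^{*}$ at a cost of $O(\Delta)$, I obtain $G_{\Delta}^{\mathrm{disc}}(s,a)/\Delta=G_{0}(s,a)+O(\Delta)$ uniformly in $(s,a)$. Hence, for all sufficiently small $\Delta$, $G_{\Delta}^{\mathrm{disc}}(s,a)/\Delta\geq g/2$ when $a\neq\pi_{0}^{*}(s)$, while $G_{\Delta}^{\mathrm{disc}}(s,\pi_{0}^{*}(s))/\Delta=O(\Delta)\leq g/4$. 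The discrete argmin therefore equals $\pi_{0}^{*}(s)$, so by uniqueness of $\pi_{\Delta}^{*}$ we have $\pi_{\Delta}^{*}(s)=\pi_{0}^{*}(s)$ on all of $\mathcal{S}$.

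The main obstacle is bookkeeping in the Taylor expansion: the $O(\Delta^{2})$ remainders must be uniform in $(s,a)$ so that they survive the maximization and the contraction step. This uniformity follows from finiteness of $\mathcal{S}\times\mathcal{A}$ together with the uniform bounds on $f$, $\lambda$, $\lambda_{s'}$, and on $V_{0}^{*}$ and $V_{\Delta}^{*}$ (the latter via a geometric sum estimate $||V_{\Delta}^{*}||\leq C_{B}^{f}\Delta/(1-e^{-\gamma\Delta})$, which is uniformly bounded for small $\Delta$). A secondary subtlety is that converting the policy gap from $\geq 0$ to $\geq g>0$ really requires the uniqueness of $\pi_{0}^{*}$ stated in the hypothesis; without it, only a set-valued approximation of optimal actions can be claimed.
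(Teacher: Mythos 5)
Your proof is correct, but it is organized quite differently from the paper's. For part (ii) you use the classical consistency-plus-stability argument for approximate dynamic programming: you bound the one-step consistency error $\|T_{\Delta}V_{0}^{*}-V_{0}^{*}\|=O(\Delta^{2})$ by Taylor expansion against the HJB equation (\ref{HJB for Value continuous}), and then convert it into $\|V_{\Delta}^{*}-V_{0}^{*}\|=O(\Delta)$ via the $e^{-\gamma\Delta}$-contraction of $T_{\Delta}$. The paper instead proceeds in three steps: a qualitative compactness/subsequence argument showing $V_{\Delta}^{*}\to V_{0}^{*}$ (using uniqueness of the solution to (\ref{HJB for Value continuous})), then the policy identity, and only then the $O(\Delta)$ rate, obtained by subtracting the two Bellman recursions \emph{under the common optimal policy} and estimating the kernel and reward discrepancies at order $\Delta^{2}$. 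So the logical dependence is reversed: the paper needs (i) to prove (ii), whereas you prove (ii) first and feed it into (i). A concrete benefit of your route is that part (ii) comes out without invoking the uniqueness hypothesis at all (uniqueness enters only through the uniform suboptimality gap $g>0$ in part (i)), and you avoid the subsequence extraction entirely; your gap argument for (i) is also a quantitative version of the paper's Step 2, which only asserts convergence of the argmax over the finite set $\mathcal{A}$. The price is that you must verify that $T_{\Delta}$ is a genuine $e^{-\gamma\Delta}$-contraction, i.e., that $p_{\Delta}(\cdot|s,a)$ is a bona fide probability distribution for small $\Delta$ — which holds here since the kernel is built from $I-Q_{X}\Delta$ with rows summing to one — and that all remainders are uniform over the finite set $\mathcal{S}\times\mathcal{A}$, which you address. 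Both arguments are sound; yours is arguably the more streamlined of the two.
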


The proof consists of three steps.

Step 1 --- Prove the convergence of value function as $\Delta\rightarrow0$,
i.e., $\lim_{\Delta\rightarrow0}V_{\Delta}^{*}(\cdot)=V_{0}^{*}(\cdot)$

Since the state space $\mathcal{S}$ is finite, we can regard the
value function $V_{\Delta}^{\pi}(\cdot)$ as a vector in $\mathbb{R}^{|\mathcal{S}|}$.
We obtain by the boundness of $f$ that $|V_{\Delta}^{\pi}(s)|<\sum_{i=0}^{+\infty}e^{-i\gamma\Delta}C_{B}^{f}\Delta=C_{B}^{f}\frac{\Delta}{1-e^{-\gamma\Delta}}<2C_{B}^{f}\frac{1}{\gamma}$,
which holds when $\Delta$ is sufficiently small. So there exists
$\Delta_{1}>0$ such that, for any $\Delta\in(0,\Delta_{1})$, $V_{\Delta}^{\pi}(\cdot)$
is always in a compact subspace of $\mathbb{R}^{|\mathcal{S}|}$.
Then for any sequence of $\Delta$ in $(0,\Delta_{1})$ that converges
to zero, there always exists a subsequence, which we denote as $\{\Delta_{k}\}$,
such that $\lim_{k\rightarrow\infty}V_{\Delta_{k}}^{*}(\cdot)=V^{*}(\cdot)$.
To prove $\lim_{\Delta\rightarrow0}V_{\Delta}^{*}(\cdot)=V_{0}^{*}(\cdot)$,
it suffices to prove that the subsequence limit $V^{*}(\cdot)$ is
always equal to $V_{0}^{*}(\cdot)$.

To prove that $V^{*}(s)=V_{0}^{*}(s)$ for any $s\in\mathcal{S}$,
we will use the Bellman equations in Lemma \ref{Lemma:Bellman equations for two MDPs}.
Using the formulas of $p_{\Delta}(s^{\prime}|s,a)$, we rewrite the
Bellman equation (\ref{Bellman for Value discrete}) as
\[
V_{\Delta}^{*}(s)=\max_{a\in\mathcal{A}}\left(f(s,a)\Delta+e^{-\gamma\Delta}\sum_{s^{\prime}\in\mathcal{S},s^{\prime}\neq s}\lambda_{s^{\prime}}(s,a)V_{\Delta}^{*}(s^{\prime})\Delta-e^{-\gamma\Delta}\lambda(s,a)V_{\Delta}^{*}(s)\Delta+e^{-\gamma\Delta}V_{\Delta}^{*}(s)+R_{1}(\Delta|s,a)\right),
\]
where $|R_{1}(\Delta|s,a)|<C_{R_{1}}\Delta^{2}$ and the constant
$C_{R_{1}}$ does not depend on $(\Delta,s,a)$. So we obtain that
\[
\frac{1}{\Delta}(1-e^{-\gamma\Delta})V_{\Delta}^{*}(s)=\max_{a\in\mathcal{A}}\left(f(s,a)+e^{-\gamma\Delta}\sum_{s^{\prime}\in\mathcal{S},s^{\prime}\neq s}\lambda_{s^{\prime}}(s,a)V_{\Delta}^{*}(s^{\prime})-e^{-\gamma\Delta}\lambda(s,a)V_{\Delta}^{*}(s)+\frac{1}{\Delta}R_{1}(\Delta|s,a)\right),
\]
and thus by the Taylor expansion of $e^{-\gamma\Delta}$, we have
that
\[
\gamma V_{\Delta}^{*}(s)=\max_{a\in\mathcal{A}}\left(f(s,a)+\sum_{s^{\prime}\in\mathcal{S},s^{\prime}\neq s}\lambda_{s^{\prime}}(s,a)V_{\Delta}^{*}(s^{\prime})-\lambda(s,a)V_{\Delta}^{*}(s)+R_{2}(\Delta|s,a)\right),
\]
where $|R_{2}(\Delta|s,a)|<C_{R_{2}}\Delta$ and the constant $C_{R_{2}}$
does not depend on $(\Delta,s,a)$. Then we let $\Delta$ go to zero
along the subsequence $\{\Delta_{k}\}$, since $\lim_{k\rightarrow\infty}V_{\Delta_{k}}^{*}(\cdot)=V^{*}(\cdot)$
and $\lim_{k\rightarrow\infty}R_{2}(\Delta_{k}|s,a)=0$, we obtain
by standard analysis techniques that $V^{*}(\cdot)$ is the solution
of the continuous-time HJB satisfied by $V_{0}^{*}(\cdot)$. Finally,
by the uniqueness of the solution to the equation (\ref{HJB for Value continuous}),
we obtain that $V^{*}(\cdot)=V_{0}^{*}(\cdot)$, and thus $\lim_{\Delta\rightarrow0}V_{\Delta}^{*}(\cdot)=V_{0}^{*}(\cdot)$.

Step 2 --- Prove the identity $\pi_{\Delta}^{\ast}(\cdot)=\pi_{0}^{\ast}(\cdot)$
for sufficiently small $\Delta$

We prove that, there exists $\Delta_{\pi}>0$ such that for any $\Delta<\Delta_{\pi}$,
it holds that $\pi_{\Delta}^{\ast}(s)=\pi_{0}^{\ast}(s)$ for any
$s\in\mathcal{S}$. We know that $\pi_{\Delta}^{\ast}(s)=\arg\max_{a\in\mathcal{A}}Q_{\Delta}^{*}(s,a)$
and the optimal Q function satisfies that
\[
Q_{\Delta}^{*}(s,a)=f(s,a)\Delta+e^{-\gamma\Delta}\sum_{s^{\prime}\in\mathcal{S}}p_{\Delta}(s^{\prime}|s,a)V_{\Delta}^{*}(s^{\prime}).
\]
 Similar to the proof in Step 1, using the formulas of $p_{\Delta}(s^{\prime}|s,a)$,
we have that
\[
\lim_{\Delta\rightarrow0}\frac{1}{\Delta}(Q_{\Delta}^{*}(s,a)-e^{-\gamma\Delta}V_{\Delta}^{*}(s))=f(s,a)+\sum_{s^{\prime}\in\mathcal{S},s^{\prime}\neq s}\lambda_{s^{\prime}}(s,a)V_{0}^{*}(s^{\prime})-\lambda(s,a)V_{0}^{*}(s),
\]
where the right hand side is exactly same with that of the equation
(\ref{HJB for Value continuous}) and the maximizer of the right hand
side is $a=\pi_{0}^{\ast}(s)$. Since $V_{\Delta}^{*}(s)$ is free
of $\Delta$, we have that $\pi_{\Delta}^{\ast}(s)=\arg\max_{a\in\mathcal{A}}Q_{\Delta}^{*}(s,a)=\arg\max_{a\in\mathcal{A}}\frac{1}{\Delta}(Q_{\Delta}^{*}(s,a)-e^{-\gamma\Delta}V_{\Delta}^{*}(s))$,
and thus $\lim_{\Delta\rightarrow0}\pi_{\Delta}^{\ast}(s)=\pi_{0}^{\ast}(s)$.
Since the policies $\pi_{\Delta}^{\ast}(s)$ and $\pi_{0}^{\ast}(s)$
take values in the finite set $\mathcal{A}$, we obtain that $\pi_{\Delta}^{\ast}(s)=\pi_{0}^{\ast}(s)$
for any $s\in\mathcal{S}$ and sufficiently small $\Delta$.

Step 3 --- Prove the value function error order $||V_{\Delta}^{*}(\cdot)-V_{0}^{*}(\cdot)||=O(\Delta)$

For simplicity of notations, we denote by $S_{i}^{*}$ (resp. $S_{t}^{*}$)
the discrete-time (resp. continuous-time) Markov chain under the optimal
policy $a_{i}=\pi_{\Delta}^{\ast}(S_{i})$ (resp. $a_{t}=\pi_{0}^{\ast}(S_{t})$).
In particular, since $\pi_{0}^{\ast}$ is a Markov policy, we have
that $S_{t}^{*}$ is a continuous-time Markov chain, and the rate
parameters are given by $\lambda(s,\pi_{0}^{\ast}(s))$ and $\{\lambda_{s^{\prime}}(s,\pi_{0}^{\ast}(s))\}_{s^{\prime}\in\mathcal{S},s^{\prime}\neq s}$.

Step 3.1 --- 

By the Bellman equations, we have that
\[
V_{\Delta}^{*}(s)=f(s,\pi_{\Delta}^{\ast}(s))\Delta+e^{-\gamma\Delta}\sum_{s^{\prime}\in\mathcal{S}}p_{\Delta}(s^{\prime}|s,\pi_{\Delta}^{\ast}(s))V_{\Delta}^{*}(s^{\prime}),
\]
and by the time-homogeneous Markov property of the continuous-time
policy, we have that
\[
V_{0}^{*}(s)=E[\int_{0}^{\Delta}e^{-\gamma t}f(S_{t}^{*},\pi_{0}^{\ast}(S_{t}^{*}))dt|S_{0}=s]+e^{-\gamma\Delta}\sum_{s^{\prime}\in\mathcal{S}}p(\Delta,s^{\prime}|s,\pi_{0}^{\ast}(s))V_{0}^{*}(s^{\prime}),
\]
where $p(\Delta,s^{\prime}|s,\pi_{0}^{\ast}(s))$ is the conditional
density of $S_{\Delta}^{*}$ given that $S_{t}^{*}=s$. We consider
the case where $\Delta<\Delta_{\pi}$ so that $\pi_{\Delta}^{\ast}(s)=\pi_{0}^{\ast}(s)$.
Then by the properties of continuous-time Markov chain, we have that
\[
p(\Delta,s^{\prime}|s,\pi_{0}^{\ast}(s))=\lambda_{s^{\prime}}(s,\pi_{0}^{\ast}(s))\Delta+O(\Delta^{2})=p_{\Delta}(s^{\prime}|s,\pi_{\Delta}^{\ast}(s))+O(\Delta^{2})\qquad\text{for}\ s^{\prime}\neq s,
\]
and
\[
p(\Delta,s|s,\pi_{0}^{\ast}(s))=1-\lambda(s,\pi_{0}^{\ast}(s))\Delta+O(\Delta^{2})=p_{\Delta}(s|s,\pi_{\Delta}^{\ast}(s))+O(\Delta^{2}).
\]
So we obtain that
\begin{align*}
 & \left|\sum_{s^{\prime}\in\mathcal{S}}p_{\Delta}(s^{\prime}|s,\pi_{\Delta}^{\ast}(s))V_{\Delta}^{*}(s^{\prime})-\sum_{s^{\prime}\in\mathcal{S}}p(s^{\prime}|s,\pi_{0}^{\ast}(s))V_{0}^{*}(s^{\prime})\right|\\
 & =\left|\sum_{s^{\prime}\in\mathcal{S}}p_{\Delta}(s^{\prime}|s,\pi_{\Delta}^{\ast}(s))(V_{\Delta}^{*}(s^{\prime})-V_{0}^{*}(s^{\prime}))+R_{a}(\Delta|s)\right|\\
 & \leq||V_{\Delta}^{*}(\cdot)-V_{0}^{*}(\cdot)||+C_{R_{a}}\Delta^{2},
\end{align*}
where $|R_{a}(\Delta|s)|\leq C_{R_{a}}\Delta^{2}$ and the constant
$C_{R_{a}}$ does not depend on $(\Delta,s)$.

Step 3.2 --- 

For the conditional expectation term, we have that
\begin{align*}
 & E[\int_{0}^{\Delta}e^{-\gamma t}f(S_{t}^{*},\pi_{0}^{\ast}(S_{t}^{*}))dt|S_{0}=s]\\
 & =E[\int_{0}^{\Delta}f(S_{t}^{*},\pi_{0}^{\ast}(S_{t}^{*}))dt|S_{0}=s]+R_{b}(\Delta|s)\\
 & =f(s,\pi_{0}^{\ast}(s))\Delta+E[\int_{0}^{\Delta}f(S_{t}^{*},\pi_{0}^{\ast}(S_{t}^{*}))-f(s,\pi_{0}^{\ast}(s))dt|S_{0}=s]+R_{b}(\Delta|s),
\end{align*}
where $|R_{b}(\Delta|s)|\leq\int_{0}^{\Delta}C_{B}^{f}(1-e^{-\gamma t})dt\leq C_{B}^{f}(\Delta+\frac{1}{\gamma}(e^{-\gamma\Delta}-1))dt\leq2\gamma C_{B}^{f}\Delta^{2}$
holds for sufficiently small $\Delta$. Next, by the Lipschitz property
of $f$, we have that
\begin{align*}
 & |E[\int_{0}^{\Delta}f(S_{t}^{*},\pi_{0}^{\ast}(S_{t}^{*}))-f(s,\pi_{0}^{\ast}(s))dt|S_{0}=s]|\\
 & \leq\int_{0}^{\Delta}C_{Lip}^{f}E[|S_{t}^{*}-s|+|\pi_{0}^{\ast}(S_{t}^{*})-\pi_{0}^{\ast}(s)||S_{0}=s]dt\\
 & =\int_{0}^{\Delta}C_{Lip}^{f}\sum_{s^{\prime}\in\mathcal{S},s^{\prime}\neq s}p(t,s^{\prime}|s,\pi_{0}^{\ast}(s))(|s^{\prime}-s|+|\pi_{0}^{\ast}(s^{\prime})-\pi_{0}^{\ast}(s)|)dt\\
 & \leq\int_{0}^{\Delta}C_{Lip}^{f}\sum_{s^{\prime}\in\mathcal{S},s^{\prime}\neq s}p(t,s^{\prime}|s,\pi_{0}^{\ast}(s))C_{\mathcal{S},\mathcal{A}}dt,
\end{align*}
where $p(t,s^{\prime}|s,\pi_{0}^{\ast}(s))$ is the conditional density
of $S_{t}^{*}$ given that $S_{t}^{*}=s$, and the last inequality
uses the fact that there exists a constant $C_{\mathcal{S},\mathcal{A}}$
such that $|s^{\prime}-s|+|a^{\prime}-a|<C_{\mathcal{S},\mathcal{A}}$
since both $\mathcal{S}$ and $\mathcal{A}$ are finite spaces. Then,
by the properties of continuous-time Markov chain we have that
\[
\sum_{s^{\prime}\in\mathcal{S},s^{\prime}\neq s}p(t,s^{\prime}|s,\pi_{0}^{\ast}(s))=\sum_{s^{\prime}\in\mathcal{S},s^{\prime}\neq s}\lambda_{s^{\prime}}(s,\pi_{0}^{\ast}(s))t+O(t^{2})=\lambda(s,\pi_{0}^{\ast}(s))t+O(t^{2}).
\]
So we obtain that, there exists a constant $C_{2}^{\lambda}>0$ such
that
\begin{align*}
 & |E[\int_{0}^{\Delta}f(S_{t}^{*},\pi_{0}^{\ast}(S_{t}^{*}))-f(s,\pi_{0}^{\ast}(s))dt|S_{0}=s]|\\
 & \leq C_{Lip}^{f}C_{\mathcal{S},\mathcal{A}}\int_{0}^{\Delta}(\lambda(s,\pi_{0}^{\ast}(s))t+C_{2}^{\lambda}t^{2})dt\\
 & \leq C_{Lip}^{f}C_{\mathcal{S},\mathcal{A}}(\frac{1}{2}\lambda(s,\pi_{0}^{\ast}(s))\Delta^{2}+\frac{1}{3}C_{2}^{\lambda}\Delta^{3}).
\end{align*}
Thus, there exists a constant $C_{2}$ such that $|E[\int_{0}^{\Delta}f(S_{t}^{*},\pi_{0}^{\ast}(S_{t}^{*}))-f(s,\pi_{0}^{\ast}(s))dt|S_{0}=s]|\leq C_{2}\Delta^{2}$
for sufficiently small $\Delta$.

Step 3.3 --- 

Substracting the Bellman equations satisfied by $V_{\Delta}^{*}(s)$
and $V_{0}^{*}(s)$, using the inequalities we obtained in Step 3.2
and 3.3, we have that, for any sufficiently small $\Delta$,
\begin{align*}
|V_{\Delta}^{*}(s)-V_{0}^{*}(s)| & \leq|E[\int_{0}^{\Delta}e^{-\gamma t}f(S_{t}^{*},\pi_{0}^{\ast}(S_{t}^{*}))dt|S_{0}=s]-f(s,\pi_{\Delta}^{\ast}(s))\Delta|\\
 & +e^{-\gamma\Delta}\left|\sum_{s^{\prime}\in\mathcal{S}}p_{\Delta}(s^{\prime}|s,\pi_{\Delta}^{\ast}(s))V_{\Delta}^{*}(s^{\prime})-\sum_{s^{\prime}\in\mathcal{S}}p(s^{\prime}|s,\pi_{0}^{\ast}(s))V_{0}^{*}(s^{\prime})\right|\\
 & \leq C_{2}\Delta^{2}+e^{-\gamma\Delta}(||V_{\Delta}^{*}(\cdot)-V_{0}^{*}(\cdot)||+C_{R_{a}}\Delta^{2}),
\end{align*}
where the norm is given by $||V_{\Delta}^{*}(\cdot)-V_{0}^{*}(\cdot)||=\max_{s\in\mathcal{S}}|V_{\Delta}^{*}(s)-V_{0}^{*}(s)|$.
So we obtain that
\[
||V_{\Delta}^{*}(\cdot)-V_{0}^{*}(\cdot)||\leq C_{2}\Delta^{2}+e^{-\gamma\Delta}(||V_{\Delta}^{*}(\cdot)-V_{0}^{*}(\cdot)||+C_{R_{a}}\Delta^{2}),
\]
and thus 
\[
||V_{\Delta}^{*}(\cdot)-V_{0}^{*}(\cdot)||\leq\frac{1}{1-e^{-\gamma\Delta}}(C_{2}+e^{-\gamma\Delta}C_{R_{a}})\Delta^{2}.
\]
Since $1-e^{-\gamma\Delta}=\gamma\Delta+O(\Delta^{2})$, we obtain
that, there exists $\Delta_{V}>0$ and a constant $C_{V}>0$ such
that $||V_{\Delta}^{*}(\cdot)-V_{0}^{*}(\cdot)||\leq C_{V}\Delta$
for any $\Delta<\Delta_{V}$.

\subsection{Proof of Theorem \ref{Thm:value discretize error}}
\begin{proof}
To prove Theorem \ref{Thm:value discretize error}, it suffices to
show that our models can be reduced to the cases studied in Section
\ref{sec:general MDP convergence results}.

First, we show that the state process $S_{t}=(X_{t},Y_{t})$ is a
controlled continuous-time Markov chain on finite space. It suffices
to show this for $Y_{t}=-N_{t}^{a}+N_{t}^{b}$. Though $N_{t}^{a}$
and $N_{t}^{b}$ are Poisson processes, the rate parameter of $N_{t}^{a}$
(resp. $N_{t}^{b}$) is set to zero when $Y_{t}=-N_{Y}$ (resp. $Y_{t}=N_{Y}$)
according to our rules for the action variable. So, $Y_{t}$ is a
controlled continuous-time Markov chain on $\mathcal{S}_{Y}=\{-N_{Y},...,-2,-1,0,1,2,...,N_{Y}\}$.
By our construction, the rate parameters for $S_{t}$ are uniformly
bounded on the finite space $\mathcal{S}\times\mathcal{A}$.

Then, we show that the objective functions in our continuous-time
MDP $\mathcal{M}_{0}$ and discrete-time MDP $\mathcal{M}_{\Delta}$
can be expressed as the forms in Section \ref{sec:general MDP convergence results}.
For simplicity of notations, we omit the conditions $S_{0}=s$ in
the conditional expectations. All the following computations involving
the interchange between expectation and infinite series or integrals
are rigorous owing to the uniformly boundness of the functions therein.
By the properties of the continuous-time Markov chain and the expectation
rules for the stochastic integral w.r.t. the Poisson process (e.g.,
Section 3.2 in \citet{hanson2007BookJumpDiffusion}), we have that
\begin{align*}
V_{0}^{\pi}(s) & =E[\int_{0}^{+\infty}e^{-\gamma t}\left\{ 1_{\{Y_{t}>-N_{Y}\}}\cdot(p_{t}^{a}-X_{t}-c)dN_{t}^{a}+1_{\{Y_{t}<N_{Y}\}}\cdot(X_{t}-p_{t}^{b}-c)dN_{t}^{b}+Y_{t}dX_{t}-\psi(Y_{t})dt\right\} ]\\
 & =E[\int_{0}^{+\infty}e^{-\gamma t}\{1_{\{Y_{t}>-N_{Y}\}}(p_{t}^{a}-X_{t}-c)\cdot\Lambda(p_{t}^{a}-X_{t})dt+1_{\{Y_{t}<N_{Y}\}}(X_{t}-p_{t}^{b}-c)\cdot\Lambda(X_{t}-p_{t}^{b})dt\\
 & -\psi(Y_{t})dt+Y_{t}\mu(X_{t},a_{t})dt\}]\\
 & =E[\int_{0}^{+\infty}e^{-\gamma t}f(S_{t},a_{t})dt],
\end{align*}
where, for any $x=\frac{1}{2}k\delta_{P}\in\mathcal{S}_{X}$ and $k\in\{1,2,\ldots,|\mathcal{S}_{X}|\}$,
the function $\mu$ is defined as $\mu(x,a):=(\lambda_{k,k+1}(a)-\lambda_{k,k-1}(a))\frac{1}{2}\delta_{P}$
with $\lambda_{1,0}(a)=\lambda_{|\mathcal{S}_{X}|,|\mathcal{S}_{X}|+1}(a)=0$,
and the function $f$ is defined as
\[
f(s,a):=1_{\{y>-N_{Y}\}}\cdot(p^{a}-x-c)\Lambda(p^{a}-x)+1_{\{y<N_{Y}\}}\cdot(x-p^{b}-c)\Lambda(x-p^{b})-\psi(y)+y\mu(x,a),
\]
for the variables $s=(x,y),a=(p^{a},p^{b})$. Then by the transition
probabilities of the discrete-time Markov chains in the MDP $\mathcal{M}_{\Delta}$,
we have that
\begin{align*}
V_{\Delta}^{\pi}(s) & =E[\sum_{i=0}^{+\infty}e^{-i\gamma\Delta}[(p_{i}^{a}-X_{i}-c)n_{i}^{a}\cdot1_{\{Y_{i}>-N_{Y}\}}+(X_{i}-p_{i}^{b}-c)n_{i}^{b}\cdot1_{\{Y_{i}<N_{Y}\}}+(X_{i+1}-X_{i})Y_{i}-\psi(Y_{i})\Delta]]\\
 & =E[\sum_{i=0}^{+\infty}e^{-i\gamma\Delta}[1_{\{Y_{i}>-N_{Y}\}}(p_{i}^{a}-X_{i}-c)\Lambda(p_{i}^{a}-X_{i})+1_{\{Y_{i}<N_{Y}\}}(X_{i}-p_{i}^{b}-c)\Lambda(X_{i}-p_{i}^{b})\\
 & -\psi(Y_{i})+Y_{i}\mu(X_{i},a_{i})]\Delta]\\
 & =E[\sum_{i=0}^{+\infty}e^{-i\gamma\Delta}f(S_{i},a_{i})\Delta],
\end{align*}
where the functions $f$ and $\lambda_{X}$ are the same with those
in continuous-time MDP. Clearly, the function $f$ is uniformly bounded
on the finite space $\mathcal{S}\times\mathcal{A}$.

Thus, the results in Theorem \ref{Thm:value discretize error} follow
from the general results in Lemma \ref{Lemma:value discretize error general}.
\end{proof}

\section{Proof of Theorem \ref{Thm:sample cplx bound Delta}}

Consider the sample complexity defined using Q value, which is the
number of iteration steps $n$ such that, with probability at least
$1-\delta$, it holds $||Q_{\Delta}^{(n)}-Q_{\Delta}^{*}||\leq\varepsilon_{Q}$,
where the norm is defined as $||Q_{\Delta}^{(n)}-Q_{\Delta}^{*}||=\max_{(s,a)\in\mathcal{S}\times\mathcal{A}}|Q_{\Delta}^{(n)}(s,a)-Q_{\Delta}^{*}(s,a)|$.
Using Theorem 4 in \citet{JMLR2003Qcomplexity}, we obtain that the
upper bound for the sample complexity is given by $n=\Omega(B_{n})$
and
\begin{equation}
B_{n}=\left(L^{1+3\omega}(C_{B}^{V})^{2}\frac{1}{\frac{1}{4}(1-e^{-\gamma\Delta})^{2}\varepsilon_{Q}^{2}}\log(\frac{|\mathcal{S}||\mathcal{A}|C_{B}^{V}}{\delta\frac{1}{2}(1-e^{-\gamma\Delta})\varepsilon_{Q}})\right)^{\frac{1}{\omega}}+\left(L\frac{1}{\frac{1}{2}(1-e^{-\gamma\Delta})}\log(\frac{C_{B}^{V}}{\varepsilon_{Q}})\right)^{\frac{1}{1-\omega}},\label{sample cplx original JMLR}
\end{equation}
where $C_{B}^{V}$ is the upper bound for the value function $V_{\Delta}^{\pi}(s)$,
and $L$ is the covering time such that, from any start state, with
probability at least $\frac{1}{2}$, all state-action pairs appear
in the sequence within $L$ steps. In what follows, we first show
that in our high-frequency setup, the Q function error bound $||Q_{\Delta}^{(n)}-Q_{\Delta}^{*}||\leq\varepsilon_{Q}$
implies the value function error bound $||V_{\Delta}^{(n)}-V_{\Delta}^{*}||\leq\varepsilon_{Q}$.
Then, we express the bound (\ref{sample cplx original JMLR}) using
our model parameters.

Indeed, by Theorem \ref{Thm:value discretize error}, for sufficiently
small $\Delta$, we have that $\pi_{\Delta}^{*}(s)=\pi^{*}(s)$ for
any $s\in\mathcal{S}$. Then, as $\varepsilon_{Q}$ is small enough
such that $\max_{(s,a)\in\mathcal{S}\times\mathcal{A}}|Q_{\Delta}^{(n)}(s,a)-Q_{\Delta}^{*}(s,a)|\leq\varepsilon_{Q}$
implies that $\text{arg}\max_{a\in\mathcal{A}}Q_{\Delta}^{(n)}(s,a)=\text{arg}\max_{a\in\mathcal{A}}Q_{\Delta}^{*}(s,a)$
for any $s\in\mathcal{S}$, we have that the policy $\pi_{\Delta}^{(n)}(s)=\text{arg}\max_{a\in\mathcal{A}}Q_{\Delta}^{(n)}(s,a)$
learned at the $n$th iteration is the same as the true optimal policy
$\pi_{\Delta}^{*}(s)=\text{arg}\max_{a\in\mathcal{A}}Q_{\Delta}^{*}(s,a)$.
So, we have that $|V_{\Delta}^{(n)}(s)-V_{\Delta}^{*}(s)|=|Q_{\Delta}^{(n)}(s,\pi_{\Delta}^{(n)}(s))-Q_{\Delta}^{*}(s,\pi_{\Delta}^{*}(s))|=|Q_{\Delta}^{(n)}(s,\pi_{\Delta}^{*}(s))-Q_{\Delta}^{*}(s,\pi_{\Delta}^{*}(s))|\leq\varepsilon_{Q}$
for any $s\in\mathcal{S}$, and thus $||V_{\Delta}^{(n)}-V_{\Delta}^{*}||\leq\varepsilon_{Q}$.

The upper bound $C_{B}^{V}$ is given by $C_{B}^{V}=\frac{1}{1-e^{-\gamma\Delta}}C_{B}^{f}$,
where $C_{B}^{f}$ is the upper bound for the cost function. We now
derive an upper bound $L$ for the $\varepsilon$-greedy exploration.
By the structure of the transition probability matrix of $X_{i}$,
we have that, there exists a constant $c_{S}>0$ such that, for any
policy, it holds 
\[
P(S_{i+(|\mathcal{S}_{X}|+|\mathcal{S}_{Y}|-2)}=s^{\prime}|S_{i}=s)\geq c_{S}\Delta^{|\mathcal{S}_{X}|+|\mathcal{S}_{Y}|-2}.
\]
Under the $\varepsilon$-greedy policy, we have that, for any $(s^{\prime},s,a)\in\mathcal{S}\times\mathcal{S}\times\mathcal{A}$,
it holds
\[
P(S_{i+(|\mathcal{S}_{X}|+|\mathcal{S}_{Y}|-2)}=s^{\prime},a_{i+(|\mathcal{S}_{X}|+|\mathcal{S}_{Y}|-2)}=a|S_{i}=s)\geq\frac{\varepsilon_{0}}{|\mathcal{A}|}c_{S}\Delta^{|\mathcal{S}_{X}|+|\mathcal{S}_{Y}|-2},
\]
where $\varepsilon_{0}$ is the smallest value of the $\varepsilon$
sequence used in the $\varepsilon$-greedy exploration. Denote by
$\tilde{L}$ as the smallest number $n$ such that the random variable
such that all state-action pairs appear in the sequence of length
$n$. Then similar to the proof of Proposition 1 in \citet{Qnn2018coverTimeLemma},
we have that there exists a constant $C_{L}>0$ such that $E[\tilde{L}]\leq C_{L}\frac{1}{\varepsilon_{0}c_{S}}\Delta^{2-(|\mathcal{S}_{X}|+|\mathcal{S}_{Y}|)}(|\mathcal{S}_{X}|+|\mathcal{S}_{Y}|)|\mathcal{A}|\log(|\mathcal{S}||\mathcal{A}|)$
for any start state. By the Markov inequality, we have that $P(\tilde{L}<L)>\frac{1}{2}$
where
\[
L=2C_{L}\frac{1}{\varepsilon_{0}c_{S}}\Delta^{2-(|\mathcal{S}_{X}|+|\mathcal{S}_{Y}|)}(|\mathcal{S}_{X}|+|\mathcal{S}_{Y}|)|\mathcal{A}|\log(|\mathcal{S}||\mathcal{A}|).
\]

Plugging the formulae of $C_{B}^{V}$ and $L$ into (\ref{sample cplx original JMLR})
and suppressing all logarithmic factors, we obtain that
\begin{align*}
B_{n} & =((|\mathcal{S}_{X}|+|\mathcal{S}_{Y}|)|\mathcal{A}|)^{3+\frac{1}{\omega}}\left(\varepsilon_{0}^{-(1+3\omega)}\Delta^{(2-|\mathcal{S}_{X}|-|\mathcal{S}_{Y}|)(1+3\omega)}\gamma^{-2}\Delta^{-2}\frac{1}{\gamma^{2}\Delta^{2}\varepsilon_{Q}^{2}}\right)^{\frac{1}{\omega}}\\
 & +\left(\frac{1}{\varepsilon_{0}}\Delta^{2-|\mathcal{S}_{X}|-|\mathcal{S}_{Y}|}\frac{1}{\gamma}(|\mathcal{S}_{X}|+|\mathcal{S}_{Y}|)|\mathcal{A}|\frac{1}{\Delta}\right)^{\frac{1}{1-\omega}}\\
 & =((|\mathcal{S}_{X}|+|\mathcal{S}_{Y}|)|\mathcal{A}|\varepsilon_{0}^{-1})^{3+\frac{1}{\omega}}\gamma^{-\frac{4}{\omega}}\Delta^{-\frac{4}{\omega}+2(3+\frac{1}{\omega})-|\mathcal{S}_{X}|(3+\frac{1}{\omega})-|\mathcal{S}_{Y}|(3+\frac{1}{\omega})}\varepsilon_{Q}^{-\frac{2}{\omega}}\\
 & +((|\mathcal{S}_{X}|+|\mathcal{S}_{Y}|)|\mathcal{A}|\varepsilon_{0}^{-1})^{\frac{1}{1-\omega}}\gamma^{-\frac{1}{1-\omega}}\Delta^{(1-|\mathcal{S}_{X}|-|\mathcal{S}_{Y}|)\frac{1}{1-\omega}}.
\end{align*}
So we finally obtain that 
\begin{align*}
B_{n} & =((|\mathcal{S}_{X}|+|\mathcal{S}_{Y}|)|\mathcal{A}|\varepsilon_{0}^{-1})^{3+\frac{1}{\omega}}\varepsilon_{Q}^{-\frac{2}{\omega}}\gamma^{-\frac{4}{\omega}}\Delta^{6-\frac{2}{\omega}-(|\mathcal{S}_{X}|+|\mathcal{S}_{Y}|)(3+\frac{1}{\omega})}\\
 & +((|\mathcal{S}_{X}|+|\mathcal{S}_{Y}|)|\mathcal{A}|\varepsilon_{0}^{-1})^{\frac{1}{1-\omega}}\gamma^{-\frac{1}{1-\omega}}\Delta^{(1-|\mathcal{S}_{X}|-|\mathcal{S}_{Y}|)\frac{1}{1-\omega}}.
\end{align*}

\section{Proof of Theorem \ref{Thm:2player Nash converge}}

\subsection{General results for controlled Markov chains\label{sec:general MDP game convergence}}

To prove our convergence results, we consider a more general case
for continuous-time Markov chains as follows, which is potentially
useful for future research.

For the continuous-time game $\mathcal{G}_{0}$, the state variable
$S_{t}$ is a controlled Markov chain with finite state space $\mathcal{S}$.
For any $s\in\mathcal{S}$, the transition rate parameters are given
by $\lambda(s^{\prime}|s,a^{1},a^{2})$ satisfying that $\sum_{s^{\prime}\in\mathcal{S},s^{\prime}\neq s}\lambda(s^{\prime}|s,a^{1},a^{2})=\lambda(s|s,a^{1},a^{2})$
and $\lambda(s^{\prime}|s,a^{1},a^{2})\geq0$, where $a_{k}$ is the
action variable of the $k$-th player. More precisely, when $S_{t}=s$
and $(a_{t}^{1},a_{t}^{2})=(a^{1},a^{2})$ at time $t$, then after
a random waiting time $\tau$ which follows an exponential distribution
with parameter $\lambda(s|s,a^{1},a^{2})$, there will be a state
transition in $S_{t}$ and the probability is given by $P(S_{t+\tau}=s^{\prime})=\lambda(s^{\prime}|s,a^{1},a^{2})/\lambda(s|s,a^{1},a^{2})$.
We denote by $S_{t}^{\pi^{1},\pi^{2}}$ the state variable under the
pair of strategies $(\pi^{1},\pi^{2})$. The value function of the
$k$-th player is given by
\[
V_{0}^{k,\pi^{1},\pi^{2}}(s):=E\left[\left.\int_{0}^{+\infty}e^{-\gamma t}r^{k}(S_{t}^{\pi^{1},\pi^{2}},a_{t}^{1},a_{t}^{2})dt\right\vert S_{0}=s\right],
\]
where the action $a_{t}^{k}$ is chosen under $\pi^{k}$ for $k=1,2$.
The strategy sets (i.e., the randomized Markov strategies $\Pi_{M}^{k}$,
the stationary strategies $\Pi_{s}^{k}$, and the admissible strategies
$\Pi^{k}$) for the two players and the Nash equilibrium are defined
in the same manner as we did in Section \ref{sec:continuous MM game}.

For the discrete-time game $\mathcal{G}_{\Delta}$, the state variable
$S_{i}$ is a discrete-time controlled Markov chain. When the two
players take actions $(a^{1},a^{2})$ at time $i$, the transition
probabilities are given by $P(S_{i+1}=s)=p_{\Delta}(s|s,a^{1},a^{2})=1-\lambda(s|s,a^{1},a^{2})\Delta+O(\Delta^{2})$,
and $P(S_{i+1}=s^{\prime})=p_{\Delta}(s^{\prime}|s,a^{1},a^{2})=\lambda(s^{\prime}|s,a^{1},a^{2})\Delta+O(\Delta^{2})$
for any $s^{\prime}\in\mathcal{S}$ and $s^{\prime}\neq s$, where
we assume that $\Delta$ is sufficiently small so that all these probabilities
are in $[0,1]$. We denote by $S_{i}^{\pi^{1},\pi^{2}}$ the state
variable under the pair of strategies $(\pi^{1},\pi^{2})$. The value
function of the $k$-th player is given by
\[
V_{\Delta}^{k,\pi^{1},\pi^{2}}(s):=E\left[\left.\sum_{i=0}^{+\infty}e^{-i\gamma\Delta}r^{k}(S_{i}^{\pi^{1},\pi^{2}},a_{i}^{1},a_{i}^{2})\Delta\right\vert S_{0}=s\right],
\]
where the action $a_{t}^{k}$ is chosen under $\pi^{k}$ for $k=1,2$.
The strategy sets (i.e., the randomized Markov strategies $\Pi_{M}^{k}$,
the stationary strategies $\Pi_{s}^{k}$, and the admissible strategies
$\Pi^{k}$) for the two players and the Nash equilibrium are defined
in the same manner as we did in Section \ref{sec:continuous MM game}.

We assume that the function $\lambda(s^{\prime}|s,a^{1},a^{2})$ (resp.
$r^{k}(s,a^{1},a^{2})$) are uniformly bounded over $\mathcal{S}\times\mathcal{S}\times\mathcal{A}\times\mathcal{A}$
(resp. $\mathcal{S}\times\mathcal{A}\times\mathcal{A}$). Then we
have that
\begin{lem}
\label{Lemma:Game converge general}There exist pairs of stationary
strategies $(\pi_{0}^{1,*},\pi_{0}^{2,*}),(\pi_{\Delta}^{1,*},\pi_{\Delta}^{2,*})\in\Pi_{s}^{1}\times\Pi_{s}^{2}$
such that $(\pi_{0}^{1,*},\pi_{0}^{2,*})$ (resp. $(\pi_{\Delta}^{1,*},\pi_{\Delta}^{2,*})$)
is a Nash equilibrium in the continuous-time (resp. discrete-time)
game $\mathcal{G}_{0}$ (resp. $\mathcal{G}_{\Delta}$). Moreover,
assuming the uniqueness of the Nash equilibrium $(\pi_{0}^{1,*},\pi_{0}^{2,*})$
(resp. $(\pi_{\Delta}^{1,*},\pi_{\Delta}^{2,*})$) in the game $\mathcal{G}_{0}$
(resp. $\mathcal{G}_{\Delta}$), then we have the following convergence
results: (i) The policies satisfy that $||\pi_{\Delta}^{k,*}(\cdot|s)-\pi_{0}^{k,*}(\cdot|s)||\rightarrow0$
as $\Delta\rightarrow0$ for $k=1,2$, where the norm is defined as
$||\pi(\cdot)-\pi^{\prime}(\cdot)||:=\max_{a\in\mathcal{A}}|\pi(a)-\pi^{\prime}(a)|$.
(ii) The value functions satisfy that $|V_{\Delta}^{k,\pi_{\Delta}^{1,*},\pi_{\Delta}^{2,*}}(s)-V_{0}^{k,\pi_{0}^{1,*},\pi_{0}^{2,*}}(s)|\rightarrow0$
as $\Delta\rightarrow0$ for any $s\in\mathcal{S}$ and $k=1,2$.
\end{lem}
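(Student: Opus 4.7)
The plan is to reduce the statement to a compactness-plus-uniqueness argument built on Bellman-type fixed-point characterizations, paralleling the structure of the proof of Theorem~\ref{Thm:value discretize error} but applied to the best-response relation of each player.

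\textbf{Existence.} For the continuous-time game $\mathcal{G}_{0}$, existence of a Nash equilibrium in $\Pi_{s}^{1}\times\Pi_{s}^{2}$ follows from standard results for finite-state nonzero-sum continuous-time Markov games with uniformly bounded transition rates and rewards; the hypotheses of Theorem~4.2 in \citet{GuoHernz2005continuousMDPnonzero_sum_game} are verified by Assumption~\ref{Assump_Q_rate_lambda} and the boundedness of $\lambda(\cdot)$ and $r^{k}(\cdot)$. For $\mathcal{G}_{\Delta}$, existence of a stationary Nash equilibrium follows from Theorem~4.6.4--4.6.5 of \citet{FilarVrieze2012Book_MDP_game} for discounted nonzero-sum stochastic games with finite state/action spaces.

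\textbf{Compactness and subsequence extraction.} Let $\Delta_{n}\downarrow 0$. For each $s\in\mathcal{S}$ and $k\in\{1,2\}$, the vector $\pi_{\Delta_{n}}^{k,*}(\cdot|s)$ lies in the compact simplex $\mathcal{P}(\mathcal{A})$, and the value $V_{\Delta_{n}}^{k,\pi_{\Delta_{n}}^{1,*},\pi_{\Delta_{n}}^{2,*}}(s)$ is bounded by $C_{r}\Delta_{n}/(1-e^{-\gamma\Delta_{n}})$, which stays bounded as $\Delta_{n}\to 0$. Since $\mathcal{S}$ is finite, a diagonal extraction produces a subsequence (not relabeled) along which $\pi_{\Delta_{n}}^{k,*}(\cdot|s)\to \pi_{\infty}^{k}(\cdot|s)$ and $V_{\Delta_{n}}^{k,\pi_{\Delta_{n}}^{1,*},\pi_{\Delta_{n}}^{2,*}}(s)\to V_{\infty}^{k}(s)$ for every $s$ and $k$.

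\textbf{Passage to the limit and uniqueness.} For each $\Delta_{n}$, the Nash property characterizes $\pi_{\Delta_{n}}^{k,*}(\cdot|s)$ as the maximizer over $\pi^{k}\in\mathcal{P}(\mathcal{A})$ of
\[
\sum_{a^{1},a^{2}}\pi^{k}(a^{k})\pi_{\Delta_{n}}^{-k,*}(a^{-k}|s)\Bigl[r^{k}(s,a^{1},a^{2})\Delta_{n}+e^{-\gamma\Delta_{n}}\sum_{s'}p_{\Delta_{n}}(s'|s,a^{1},a^{2})V_{\Delta_{n}}^{k,*}(s')\Bigr],
\]
with value equal to $V_{\Delta_{n}}^{k,*}(s)$. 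Substituting $p_{\Delta_{n}}(s'|s,a^{1},a^{2})=\lambda(s'|s,a^{1},a^{2})\Delta_{n}+O(\Delta_{n}^{2})$ and expanding $e^{-\gamma\Delta_{n}}=1-\gamma\Delta_{n}+O(\Delta_{n}^{2})$, then subtracting $V_{\Delta_{n}}^{k,*}(s)$ and dividing by $\Delta_{n}$ as in the proof of Theorem~\ref{Thm:value discretize error}, I obtain
\[
\gamma V_{\Delta_{n}}^{k,*}(s)=\max_{\pi^{k}\in\mathcal{P}(\mathcal{A})}\sum_{a^{1},a^{2}}\pi^{k}(a^{k})\pi_{\Delta_{n}}^{-k,*}(a^{-k}|s)\Bigl[r^{k}+\sum_{s'\neq s}\lambda(s'|\cdot)\bigl(V_{\Delta_{n}}^{k,*}(s')-V_{\Delta_{n}}^{k,*}(s)\bigr)\Bigr]+O(\Delta_{n}).
\]
Letting $n\to\infty$ and using continuity of the bracketed expression in all arguments together with a Berge maximum-theorem argument for the $\max$ operator, the limiting tuple $(V_{\infty}^{1},V_{\infty}^{2},\pi_{\infty}^{1},\pi_{\infty}^{2})$ satisfies the HJB/optimality equations characterizing Nash equilibrium in $\mathcal{G}_{0}$; hence it is a continuous-time Nash equilibrium. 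Uniqueness of the continuous-time equilibrium forces $(\pi_{\infty}^{1},\pi_{\infty}^{2})=(\pi_{0}^{1,*},\pi_{0}^{2,*})$ and $V_{\infty}^{k}=V_{0}^{k,\pi_{0}^{1,*},\pi_{0}^{2,*}}$. Because every convergent subsequence has this common limit, the full sequence converges, giving (i) and (ii).

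\textbf{Main obstacle.} The delicate step is the simultaneous passage to the limit in the two coupled best-response relations: both the opponent's mixed strategy and the value function are moving with $\Delta_{n}$. The finiteness of $\mathcal{S}\times\mathcal{A}$ and joint continuity of the inner expression keep the Nash fixed-point structure stable under limits, but the argument relies essentially on uniqueness of the continuous-time equilibrium to upgrade subsequential convergence to convergence of the full sequence; without uniqueness one would only conclude that every limit point is \emph{some} Nash equilibrium of $\mathcal{G}_{0}$, which is the generic obstruction in nonzero-sum games noted in the remark after the theorem.
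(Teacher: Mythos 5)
Your proposal follows essentially the same route as the paper's proof: existence from \citet{GuoHernz2005continuousMDPnonzero_sum_game} and \citet{FilarVrieze2012Book_MDP_game}, compactness of the policy simplex together with the uniform bound on the value functions to extract a convergent subsequence, Taylor expansion of the discrete Nash optimality equations into $O(\Delta)$-perturbed continuous-time HJB equations, passage to the limit, and identification of the limit point via the assumed uniqueness of the continuous-time equilibrium, upgrading to full-sequence convergence. The one step you gloss over --- and which the paper handles explicitly via Lemma 7.2(a) of \citet{GuoHernz2005continuousMDPnonzero_sum_game} --- is verifying that the subsequential limits $V_{\infty}^{k}$ coincide with the actual value functions $V_{0}^{k,\pi_{\infty}^{1},\pi_{\infty}^{2}}$ of the limiting strategy pair (via uniqueness of the solution to the linear policy-evaluation equation), which is needed before the limiting optimality equations can be read as certifying a Nash equilibrium of $\mathcal{G}_{0}$.
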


\begin{proof}
The existence of the equilibrium point $(\pi_{0}^{1,*},\pi_{0}^{2,*})$
(resp. $(\pi_{\Delta}^{1,*},\pi_{\Delta}^{2,*})$) in stationary strategies
$\Pi_{s}^{1}\times\Pi_{s}^{2}$ for the continuous-time (resp. discrete-time)
game is guaranteed by Theorem 5.1 in \citet{GuoHernz2005continuousMDPnonzero_sum_game}
(resp. Theorem 4.6.4 in \citet{FilarVrieze2012Book_MDP_game}).

Under the assumption on the uniqueness of equilibrium point, we obtain
by Theorem 5.1 in \citet{GuoHernz2005continuousMDPnonzero_sum_game}
that, for the continuous-time game $\mathcal{G}_{0}$, $(\pi_{0}^{1,*},\pi_{0}^{2,*})$
is the unique pair of strategies that satisfies the following pair
of optimality equations:
\begin{align}
\gamma V_{0}^{1,\pi_{0}^{1,*},\pi_{0}^{2,*}}(s) & =\sup_{\pi^{1}\in\Pi^{1}}\bigl(r^{1}(s,\pi^{1},\pi_{0}^{2,*})+\sum_{s^{\prime}\in\mathcal{S}}\lambda(s^{\prime}|s,\pi^{1},\pi_{0}^{2,*})V_{0}^{1,\pi_{0}^{1,*},\pi_{0}^{2,*}}(s^{\prime})\bigr)\nonumber \\
 & =\max_{a\in\mathcal{A}}\bigl(r^{1}(s,a,\pi_{0}^{2,*})+\sum_{s^{\prime}\in\mathcal{S}}\lambda(s^{\prime}|s,a,\pi_{0}^{2,*})V_{0}^{1,\pi_{0}^{1,*},\pi_{0}^{2,*}}(s^{\prime})\bigr),\label{cont V1 game Bellman equation}
\end{align}
and
\begin{align}
\gamma V_{0}^{2,\pi_{0}^{1,*},\pi_{0}^{2,*}}(s) & =\sup_{\pi^{2}\in\Pi^{2}}\bigl(r^{2}(s,\pi_{0}^{1,*},\pi^{2})+\sum_{s^{\prime}\in\mathcal{S}}\lambda(s^{\prime}|s,\pi_{0}^{1,*},\pi^{2})V_{0}^{2,\pi_{0}^{1,*},\pi_{0}^{2,*}}(s^{\prime})\bigr)\nonumber \\
 & =\max_{a\in\mathcal{A}}\bigl(r^{2}(s,\pi_{0}^{1,*},a)+\sum_{s^{\prime}\in\mathcal{S}}\lambda(s^{\prime}|s,\pi_{0}^{1,*},a)V_{0}^{2,\pi_{0}^{1,*},\pi_{0}^{2,*}}(s^{\prime})\bigr).\label{cont V2 game Bellman equation}
\end{align}
Here, the equalities in (\ref{cont V1 game Bellman equation}) and
(\ref{cont V2 game Bellman equation}) follow from the fact that the
action space $\mathcal{A}$ is finite.

We obtain by (ii) in Theorem 4.6.5 in \citet{FilarVrieze2012Book_MDP_game}
that, for the discrete-time game $\mathcal{G}_{\Delta}$, the Nash
equilibrium point $(\pi_{\Delta}^{1,*},\pi_{\Delta}^{2,*})$ satisfies
the following pair of optimality equations:
\[
V_{\Delta}^{1,\pi_{\Delta}^{1,*},\pi_{\Delta}^{2,*}}(s)=\sup_{\pi^{1}\in\Pi_{\Delta}^{1}}\bigl(r^{1}(s,\pi^{1},\pi_{\Delta}^{2,*})\Delta+e^{-\gamma\Delta}\sum_{s^{\prime}\in\mathcal{S}}p_{\Delta}(s^{\prime}|s,\pi^{1},\pi_{\Delta}^{2,*})V_{\Delta}^{1,\pi_{\Delta}^{1,*},\pi_{\Delta}^{2,*}}(s^{\prime})\bigr),
\]
and
\[
V_{\Delta}^{2,\pi_{\Delta}^{1,*},\pi_{\Delta}^{2,*}}(s)=\sup_{\pi^{2}\in\Pi_{\Delta}^{2}}\bigl(r^{2}(s,\pi^{1},\pi_{\Delta}^{2,*})\Delta+e^{-\gamma\Delta}\sum_{s^{\prime}\in\mathcal{S}}p_{\Delta}(s^{\prime}|s,\pi_{\Delta}^{1,*},\pi^{2})V_{\Delta}^{2,\pi_{\Delta}^{1,*},\pi_{\Delta}^{2,*}}(s^{\prime})\bigr).
\]
Next, by definition of the transition probability function $p_{\Delta}$
and the uniformly boundness of the value function $V^{k}$, we have
that
\[
\frac{1}{\Delta}(1-e^{-\gamma\Delta})V_{\Delta}^{1,\pi_{\Delta}^{1,*},\pi_{\Delta}^{2,*}}(s)=\sup_{\pi^{1}\in\Pi_{\Delta}^{1}}\bigl(r^{1}(s,\pi^{1},\pi_{\Delta}^{2,*})+e^{-\gamma\Delta}\sum_{s^{\prime}\in\mathcal{S}}\lambda(s^{\prime}|s,\pi^{1},\pi_{\Delta}^{2,*})V_{\Delta}^{1,\pi_{\Delta}^{1,*},\pi_{\Delta}^{2,*}}(s^{\prime})+R_{1}^{1}(\Delta|s,\pi^{1})\bigr),
\]
and
\[
\frac{1}{\Delta}(1-e^{-\gamma\Delta})V_{\Delta}^{2,\pi_{\Delta}^{1,*},\pi_{\Delta}^{2,*}}(s)=\sup_{\pi^{2}\in\Pi_{\Delta}^{2}}\bigl(r^{2}(s,\pi_{\Delta}^{1,*},\pi^{2})+e^{-\gamma\Delta}\sum_{s^{\prime}\in\mathcal{S}}\lambda(s^{\prime}|s,\pi_{\Delta}^{1,*},\pi^{2})V_{\Delta}^{2,\pi_{\Delta}^{1,*},\pi_{\Delta}^{2,*}}(s^{\prime})+R_{1}^{2}(\Delta|s,\pi^{2})\bigr),
\]
where $|R_{1}^{k}(\Delta|s,\pi^{k})|<C_{R_{1}}\Delta$ for some constant
$C_{R_{1}}$. Then, using the Taylor expansion of $e^{-\gamma\Delta}$,
we obtain that
\begin{align*}
\gamma V_{\Delta}^{1,\pi_{\Delta}^{1,*},\pi_{\Delta}^{2,*}}(s) & =\sup_{\pi^{1}\in\Pi_{\Delta}^{1}}\bigl(r^{1}(s,\pi^{1},\pi_{\Delta}^{2,*})+\sum_{s^{\prime}\in\mathcal{S}}\lambda(s^{\prime}|s,\pi^{1},\pi_{\Delta}^{2,*})V_{\Delta}^{1,\pi_{\Delta}^{1,*},\pi_{\Delta}^{2,*}}(s^{\prime})+R_{2}^{1}(\Delta|s,\pi^{1})\bigr)\\
 & =\max_{a\in\mathcal{A}}\bigl(r^{1}(s,a,\pi_{\Delta}^{2,*})+\sum_{s^{\prime}\in\mathcal{S}}\lambda(s^{\prime}|s,a,\pi_{\Delta}^{2,*})V_{\Delta}^{1,\pi_{\Delta}^{1,*},\pi_{\Delta}^{2,*}}(s^{\prime})+R_{2}^{1}(\Delta|s,a)\bigr),
\end{align*}
and
\begin{align*}
\gamma V_{\Delta}^{2,\pi_{\Delta}^{1,*},\pi_{\Delta}^{2,*}}(s) & =\sup_{\pi^{2}\in\Pi_{\Delta}^{2}}\bigl(r^{2}(s,\pi_{\Delta}^{1,*},\pi^{2})+\sum_{s^{\prime}\in\mathcal{S}}\lambda(s^{\prime}|s,\pi_{\Delta}^{1,*},\pi^{2})V_{\Delta}^{2,\pi_{\Delta}^{1,*},\pi_{\Delta}^{2,*}}(s^{\prime})+R_{2}^{2}(\Delta|s,\pi^{2})\bigr)\\
 & =\max_{a\in\mathcal{A}}\bigl(r^{2}(s,\pi_{\Delta}^{1,*},a)+\sum_{s^{\prime}\in\mathcal{S}}\lambda(s^{\prime}|s,\pi_{\Delta}^{1,*},a)V_{\Delta}^{2,\pi_{\Delta}^{1,*},\pi_{\Delta}^{2,*}}(s^{\prime})+R_{2}^{2}(\Delta|s,a)\bigr),
\end{align*}
where $|R_{2}^{k}(\Delta|s,\pi^{k})|<C_{R_{2}}\Delta$ for some constant
$C_{R_{2}}$, and the last equality is because the action space $\mathcal{A}$
is finite.

Note that $\{(\pi_{\Delta}^{1,*},\pi_{\Delta}^{2,*},V_{\Delta}^{1,\pi_{\Delta}^{1,*},\pi_{\Delta}^{2,*}},,V_{\Delta}^{2,\pi_{\Delta}^{1,*},\pi_{\Delta}^{2,*}})\}_{\Delta>0}$
are sequence indexed by $\Delta$ and are in a compact space. Thus,
for any sequence of $\Delta$ that converges to zero, there always
exists a subsequence, which we denote as $\{\Delta_{k}\}$, such that
$\lim_{k\rightarrow\infty}(\pi_{\Delta}^{1,*},\pi_{\Delta}^{2,*},V_{\Delta}^{1,\pi_{\Delta}^{1,*},\pi_{\Delta}^{2,*}},,V_{\Delta}^{2,\pi_{\Delta}^{1,*},\pi_{\Delta}^{2,*}})=(\pi^{1,*},\pi^{2,*},V^{1,*},V^{2,*})$.
To prove our convergence results, it suffices to prove that
\[
(\pi^{1,*},\pi^{2,*},V^{1,*},V^{2,*})=(\pi_{0}^{1,*},\pi_{0}^{2,*},V_{0}^{1,\pi_{0}^{1,*},\pi_{0}^{2,*}},V_{0}^{2,\pi_{0}^{1,*},\pi_{0}^{2,*}})
\]
 regardless of the choice of the subsequence $\{\Delta_{k}\}$.

Indeed, letting $\Delta$ go to zero along the subsequence $\{\Delta_{k}\}$,
we obtain by the previous pair of equations that
\begin{align*}
\gamma V^{1,*}(s) & =\max_{a\in\mathcal{A}}\bigl(r^{1}(s,a,\pi^{2,*})+\sum_{s^{\prime}\in\mathcal{S}}\lambda(s^{\prime}|s,a,\pi^{2,*})V^{1,*}(s^{\prime})\bigr)\\
 & =\sup_{\pi^{1}\in\Pi^{1}}\bigl(r^{1}(s,\pi^{1},\pi^{2,*})+\sum_{s^{\prime}\in\mathcal{S}}\lambda(s^{\prime}|s,\pi^{1},\pi^{2,*})V^{1,*}(s^{\prime})\bigr),
\end{align*}
and
\begin{align*}
\gamma V^{2,*}(s) & =\max_{a\in\mathcal{A}}\bigl(r^{2}(s,\pi^{1,*},a)+\sum_{s^{\prime}\in\mathcal{S}}\lambda(s^{\prime}|s,\pi^{1,*},a)V^{2,*}(s^{\prime})\bigr)\\
 & =\sup_{\pi^{2}\in\Pi^{2}}\bigl(r^{2}(s,\pi^{1,*},\pi^{2})+\sum_{s^{\prime}\in\mathcal{S}}\lambda(s^{\prime}|s,\pi^{1,*},\pi^{2})V^{2,*}(s^{\prime})\bigr).
\end{align*}
Using the same manner, we obtain that $\gamma V^{k,*}(s)=r^{k}(s,\pi^{1,*},\pi^{2,*})+\sum_{s^{\prime}\in\mathcal{S}}\lambda(s^{\prime}|s,\pi^{1,*},\pi^{2,*})V^{k,*}(s^{\prime})$
for $k=1,2$. Then, we obtain by (a) in Lemma 7.2 of \citet{GuoHernz2005continuousMDPnonzero_sum_game}
that $V_{0}^{k,\pi^{1,*},\pi^{2,*}}$ is the unique solution to the
equation $\gamma V(s)=r^{k}(s,\pi^{1,*},\pi^{2,*})+\sum_{s^{\prime}\in\mathcal{S}}\lambda(s^{\prime}|s,\pi^{1,*},\pi^{2,*})V(s^{\prime})$
for $k=1,2$. Thus, by the above pair of equations for $V^{k,*}$,
we obtain that
\[
V^{1,*}(s)=V_{0}^{1,\pi^{1,*},\pi^{2,*}}(s)\qquad\text{and}\qquad V^{2,*}(s)=V_{0}^{2,\pi^{1,*},\pi^{2,*}}(s).
\]
Plugging the above results for $V^{k,*}(s)$ into the pair of equations
we obtained previously, we get that the pair of strategies $(\pi^{1,*},\pi^{2,*})$
satisfy
\[
\gamma V_{0}^{1,\pi^{1,*},\pi^{2,*}}(s)=\sup_{\pi^{1}\in\Pi^{1}}\bigl(r^{1}(s,\pi^{1},\pi^{2,*})+\sum_{s^{\prime}\in\mathcal{S}}\lambda(s^{\prime}|s,\pi^{1},\pi^{2,*})V_{0}^{1,\pi^{1,*},\pi^{2,*}}(s^{\prime})\bigr),
\]
and
\[
\gamma V_{0}^{2,\pi^{1,*},\pi^{2,*}}(s)=\sup_{\pi^{2}\in\Pi^{2}}\bigl(r^{2}(s,\pi^{1,*},\pi^{2})+\sum_{s^{\prime}\in\mathcal{S}}\lambda(s^{\prime}|s,\pi^{1,*},\pi^{2})V_{0}^{2,\pi^{1,*},\pi^{2,*}}(s^{\prime})\bigr).
\]
Then by the uniqueness of the solutions for the pair of optimality
equations (\ref{cont V1 game Bellman equation}) and (\ref{cont V2 game Bellman equation}),
we conclude that $(\pi^{1,*},\pi^{2,*})=(\pi_{0}^{1,*},\pi_{0}^{2,*})$.
\end{proof}

\subsection{Proof of Theorem \ref{Thm:2player Nash converge}}
\begin{proof}
To prove Theorem \ref{Thm:2player Nash converge}, it suffices to
show that our models can be reduced to the cases studied in the last
section.

By our construction, the state process $S_{t}$ (resp. $S_{i}$) in
the continuous-time (resp. discrete-time) game $\mathcal{G}_{0}$
(resp. $\mathcal{G}_{\Delta}$) is controlled Markov chain on finite
space $\mathcal{S}_{X}=\{\frac{k}{2}\delta_{P}\text{ }|\text{ }k=1,2,\ldots,(2N_{P}-1)\}$.
The rate parameters for $S_{t}$ are uniformly bounded. Then, we show
that the objective functions in $\mathcal{G}_{0}$ and $\mathcal{G}_{\Delta}$
can be expressed as the forms in Section \ref{sec:general MDP convergence results}.
For simplicity of notations, we omit the conditions $S_{0}=s$ in
the conditional expectations. All the following computations involving
the interchange between expectation and infinite series or integrals
are rigorous owing to the uniformly boundness of the functions therein.
By the properties of the continuous-time Markov chain and the expectation
rules for the stochastic integral w.r.t. the Poisson process (e.g.,
Section 3.2 in \citet{hanson2007BookJumpDiffusion}), we have that
\begin{align*}
V_{0}^{k,\pi^{1},\pi^{2}}(s) & =E[\int_{0}^{+\infty}e^{-\gamma t}(p_{t}^{a,k}-X_{t}-c)dN_{t}^{a,k}+\int_{0}^{+\infty}e^{-\gamma t}(X_{t}-p_{t}^{b,k}-c)dN_{t}^{b,k}]\\
 & =E[\int_{0}^{+\infty}e^{-\gamma t}(p_{t}^{a,k}-X_{t}-c)\Gamma^{a,k}(X_{t},p_{t}^{a,1},p_{t}^{a,2})dt+\int_{0}^{+\infty}e^{-\gamma t}(X_{t}-p_{t}^{b,k}-c)\Gamma^{b,k}(X_{t},p_{t}^{b,1},p_{t}^{b,2})dt]\\
 & =E[\int_{0}^{+\infty}e^{-\gamma t}r^{k}(X_{t},a_{t}^{1},a_{t}^{2})dt],
\end{align*}
where the reward function is defined as
\[
r^{k}(x,a^{1},a^{2}):=(p^{a,k}-x-c)\Gamma^{a,k}(x,p^{a,1},p^{a,2})+(x-p^{b,k}-c)\Gamma^{b,k}(x,p^{b,1},p^{b,2}).
\]
By the transition probabilities of the discrete-time Markov chains
in the MDP $\mathcal{M}_{\Delta}$, we have that
\[
V_{\Delta}^{k,\pi^{1},\pi^{2}}(s)=E[\sum_{i=0}^{+\infty}e^{-i\gamma\Delta}((p_{i}^{a,k}-X_{i}-c)n_{i}^{a,k}+(X_{i}-p_{i}^{b,k}-c)n_{i}^{b,k})]=E[\sum_{i=0}^{+\infty}e^{-i\gamma\Delta}r^{k}(X_{i},a_{i}^{1},a_{i}^{2})\Delta].
\]
Under Assumption \ref{Assump_2player_intensity}, we have that $r^{k}$
is uniformly bounded on the finite space $\mathcal{S}\times\mathcal{A}\times\mathcal{A}$.

Thus, the results in Theorem \ref{Thm:2player Nash converge} follow
from the general results in Lemma \ref{Lemma:Game converge general}.
\end{proof}

\end{document}